\numberwithin{equation}{section}
\newcommand{\blue}[1]{{\color{blue} #1}}
\newcommand{\bra}[1]{\langle #1|}
\newcommand{\cket}[1]{|#1\rangle}
\newcommand{\bracket}[2]{\langle #1|#2\rangle}
\newcommand{\mcl}[1]{\mathcal{#1}}
\newcommand{\norme}[2]{\left|\left|#1\right|\right|_{#2}}
\newcommand{\Acal}{{\mathcal A}}
\newcommand{\Bcal}{{\mathcal B}}
\newcommand{\Dcal}{{\mathcal D}}
\newcommand{\EcalKDC}{{\mathcal E}_{\mathrm{KD+}}}
\newcommand{\EcalKDCpu}{{\mathcal E}_{\mathrm{KD+}}^{\mathrm{pure}}}
\newcommand{\EcalKDCext}{{\mathcal E}_{\mathrm{KD+}}^{\mathrm{ext}}}
\newcommand{\spanRAB}{{\mathrm{span}_{\R}(\Acal\cup\Bcal)}}
\newcommand{\Hcal}{{\mathcal H}}
\newcommand{\Ker}{{\mathrm{Ker}}}
\newcommand{\N}{\mathbb{N}}
\newcommand{\Z}{\mathbb{Z}}
\newcommand{\R}{\mathbb{R}}
\newcommand{\C}{\mathbb{C}}
\newcommand{\bbone}{\mathbb{I}}
\newcommand{\nab}{n_{\Acal,\Bcal}}
\newcommand{\na}{n_{\Acal}}
\newcommand{\nb}{n_{\Bcal}}
\newcommand{\mab}{m_{\Acal,\Bcal}}
\newcommand{\Mab}{M_{\Acal, \Bcal}}
\newcommand{\Tr}{\mathrm{Tr}\,}
\newcommand{\conv}[1]{\mathrm{conv}\left(#1\right)}
\newcommand{\VR}{V_{\mathrm{KDr}}}
\newcommand{\VRp}{V_{\mathrm{KD+}}}
\renewcommand{\Im}[1]{\mathrm{Im}#1}
\newcommand{\Ran}[1]{\mathrm{Ran}\left(#1\right)}
\renewcommand{\H}{\mathcal{H}}
\newcommand{\SAO}{\mathcal{S}_{d}}
\newcommand{\SAOr}{\mathcal{S}_{d,\mathrm r}}
\newcommand{\SAOpone}{\mathcal{S}_{d,+,1}}
\newcommand{\SAOp}{\mathcal{S}_{d,+}}
\newcommand{\MatC}[1]{\mcl{M}_{#1}(\C)}
\newcommand{\MatR}[1]{\mcl{M}_{#1}(\R)}
\newcommand{\IntEnt}[2]{\llbracket #1 , #2 \rrbracket}
\newcommand{\convAB}{\conv{\Acal \cup \Bcal}}
\renewcommand{\epsilon}{\varepsilon}
	\renewcommand{\thesection}{\arabic{section}}
	\renewcommand{\thesubsection}{\arabic{section}.\arabic{subsection}}
\newtheorem{Theorem}{Theorem}[section]
\newtheorem*{thmIntro}{Theorem \ref{thm:Graal}}
\newtheorem*{propIntro}{Proposition \ref{prop:pure_char}}
\newtheorem{Cor}[Theorem]{Corollary}
\newtheorem{Lemma}[Theorem]{Lemma}
\newtheorem{Prop}[Theorem]{Proposition}
\title{Characterizing the geometry of the Kirkwood-Dirac positive states}
\author{C. Langrenez$^1$\thanks{christopher.langrenez@univ-lille.fr}, D.R.M. Arvidsson-Shukur$^2$\thanks{drma2@cam.ac.uk}, S. De Bi\`evre$^1$\thanks{stephan.de-bievre@univ-lille.fr}\\
$\,^1$Univ. Lille, CNRS, Inria, UMR 8524, Laboratoire Paul Painlev\'e, F-59000 Lille, France\\
$\,^2$ Hitachi Cambridge Lab., J.J. Thomson Avenue, Cambridge CB3 0HE, UK 
}
\begin{document}

\maketitle

\begin{abstract}

The Kirkwood-Dirac (KD) quasiprobability distribution can describe any quantum state with respect to the eigenbases of two  observables $A$ and $B$. KD distributions behave similarly to classical joint probability distributions but can assume negative and nonreal values. In recent years, KD distributions have proven instrumental in mapping out nonclassical phenomena and quantum advantages. These quantum features have been connected to nonpositive entries of KD distributions. Consequently, it is important to understand the geometry of the KD-positive and -nonpositive states. Until now, there has been no thorough analysis of the KD positivity of mixed states. Here, we characterize how the full convex set of states with positive KD distributions depends on the eigenbases of $A$ and $B$. In particular, we identify three regimes where convex combinations of the eigenprojectors of $A$ and $B$ constitute the only KD-positive states: $(i)$ any system in dimension $2$; $(ii)$ an open and dense set of bases in dimension $3$; and $(iii)$ the discrete-Fourier-transform bases in prime dimension. Finally, we investigate if there can exist mixed KD-positive states that cannot be written as convex combinations of pure KD-positive states. We show that for some choices of observables $A$ and $B$ this phenomenon does indeed occur. We explicitly construct such states for a spin-$1$ system.

\end{abstract}
\section{Introduction}

In classical mechanics,  a joint probability distribution $\mathcal{P}(\bf{x}, \bf{p})$ can describe a system with respect to two observables, such as position $\bf{x}$ and momentum $\bf{p}$. In quantum mechanics, however, observables generally do not commute and probabilistic descriptions of states with respect to more than one observable are often not available \cite{wigner1932,hudson1974, cohen1966, srinivaswolf1975, hartle2004, allahverdyan2015}.  
Nevertheless, one can describe a quantum state with respect to two joint observables via a \textit{quasiprobability} distribution.   Quasiprobability distributions obey all but one of Kolmogorov's axioms for probability distributions \cite{brookeskolmogorov1951}: their entries sum to unity;  their marginals correspond to the probability distributions given by the Born Rule; but individual quasiprobabilities can take negative or  nonreal values.  

The quasiprobability formalism provides a useful alternative to other descriptions of quantum states.  The most famous quasiprobability distribution is the Wigner function. It deals with continuous-variable systems with clear analogues of position and momentum. Most notably, the Wigner function has played a pivotal role in the analyses of quantum states of light~\cite{cagl69a, cagl69b, leonhardt2010, serafini2017}. The Wigner function, and other quasiprobability distributions {\cite{Husimi40, Sudarshan63, Glauber63, Terletsky37, Margenau61, Johansen04, Johansen04-2}}, allow techniques from statistics and probability theory to be applied to quantum mechanics. 

Most  modern quantum-information research is phrased in terms of finite-dimensional systems---often systems of qubits. Moreover, the observables of interest are, unlike position and momentum, not necessarily conjugate. The Wigner function is ill-suited for such systems and observables. Instead,  recent years have seen a different  quasiprobability distribution come to the foreground: the Kirkwood-Dirac (KD) distribution {\cite{kirkwood1933, dirac1945, YuSwDr18, arvidsson-shukuretal2021, debievre2021, debievre2023a}}. 

The KD distribution has proven itself a tremendously versatile tool in studying and developing  quantum-information processing.  In its standard form, the KD distribution describes a quantum state $\rho$ with respect to two orthonormal bases  $ \left(\cket{a_i}\right)_{i\in\IntEnt{1}{d}}$ and $\left(\cket{b_j}\right)_{j\in\IntEnt{1}{d}}$ in a complex Hilbert space $\H$ of dimension $d$. The KD distribution  reads
\begin{equation}\label{eq:KDGV1}
\forall (i,j)\in \IntEnt{1}{d}^2, \; Q_{ij}(\rho) = \bracket{b_j}{a_i}\bra{a_i}\rho\cket{b_j}.
\end{equation}

By associating the two bases with the eigenstates of observables of interest, the KD distribution can be tuned towards a specific problem.  So far, the KD distribution has been used  to study, describe or develop: direct state tomography {\cite{Johansen04, Lundeen11, lundeenbamber2012, bamberlundeen2014, thekkadathetal2016}}; quantum metrology {\cite{arvidsson-shukuretal2020, Jenne22, lupu-gladsteinetal2021}}; quantum chaos {\cite{Yunger17, YuSwDr18, Yunger18-2, GoHaDr19, Razieh19}}; weak measurements {\cite{Aharonov88, Duck89, Dressel14,pusey2014, Dressel12, YuSwDr18,Kunjwal19, Monroe20, Wagner23}}; quantum thermodynamics {\cite{Yunger17, lostaglioetal2022, Levy20, Lostaglio20, Hernandez23, Upadhyaya23}}; quantum scrambling {\cite{YuSwDr18, Yunger18-2}}; Leggett-Garg inequalities {\cite{Suzuki12, Suzuki16, Halliwell16}}; generalised contextuality {\cite{pusey2014, Levy20, Lostaglio20}}; consistent-histories interpretations of quantum mechanics {\cite{Griffiths84}}; measurement disturbance {\cite{hofmann2011, Dressel12, dressel2015, debievre2021, debievre2023a, Fiorentino22, Gao23}}{; and coherence \cite{Budiyono23}}. The list can be made longer, but the point is clear: the Kirkwood-Dirac distribution  currently experiences great prosperity and growing interest.  

Below, a state will be said to be KD positive when its KD distribution only takes on positive or zero values. Such states have been called KD classical elsewhere~\cite{arvidsson-shukuretal2021, debievre2021, debievre2023a}. We prefer to avoid this terminology since the terms ``classical'' and ``nonclassical''  lack unique definitions. The capacity of quasiprobability distributions to describe quantum phenomena hinges on their ability to assume negative or nonreal values.  An always-positive (probability) distribution cannot describe all of quantum mechanics. As concerns the KD distribution, nonpositive quasiprobabilities have been linked to {various forms of} quantum advantages in, {for example, weak measurements \cite{pusey2014, Kunjwal19},  quantum metrology \cite{arvidsson-shukuretal2020, Jenne22} and quantum thermodynamics \cite{YuSwDr18,Levy20,Lostaglio20}.} Therefore, it is important to understand: {When does a KD  distribution  assume only positive or zero values?} While this question has been addressed for pure states~\cite{arvidsson-shukuretal2021, debievre2021, debievre2023a}, a general study of the mixed KD-positive states is lacking. {In this work, we provide such a study.}

To analyse how the KD distribution underlies nonclassical phenomena, one {must first} understand the geometric structure of the convex set $\EcalKDC$ of KD-positive states.  We know, by the Krein-Milman theorem \cite{hiriart-urrutylemarechal2001}, that this set is the convex hull of the set $\EcalKDCext$ of its extreme points:
$$
\EcalKDC=\conv{\EcalKDCext}.
$$
{It is, therefore, desirable}  to have a full description and a convenient characterization of $\EcalKDCext$. The set $\EcalKDCext$  always contains all the  basis states $ \left(\cket{a_i}\right)_{i\in\IntEnt{1}{d}}$ and $\left(\cket{b_j}\right)_{j\in\IntEnt{1}{d}}$. {Additionally, $\EcalKDCext$}  may contain other pure and also mixed states. Experience with similar {analyses} for the Wigner function, where  {the mixed-state characterization of Wigner positive states is} not  fully solved, indicates that it might be {difficult to obtain a full characterization of  $\EcalKDCext$ for } general KD distributions.

 Our results {about the convex set $\EcalKDC$ of KD-positive states} can be summed up as follows. We first identify for what choices of the bases $ \left(\cket{a_i}\right)_{i\in\IntEnt{1}{d}}$ and $\left(\cket{b_j}\right)_{j\in\IntEnt{1}{d}}$ the only  KD-positive states are those that are convex mixtures of the basis states. The following theorem provides a precise statement of these results.  We introduce 
\begin{eqnarray*}
    \Acal=\{ |a_i\rangle\langle a_i|\mid i\in\IntEnt{1}{d}\},\quad
    \Bcal=\{ |b_j\rangle\langle b_j|\mid j\in\IntEnt{1}{d}\} ,
\end{eqnarray*}
which are the families of rank-one projectors associated to the two bases. Also, we write $U_{ij}=\langle a_i|b_j\rangle$ for the transition matrix between the two bases and introduce
$$
\mab=\min_{i,j}|\langle a_i|b_j\rangle|.
$$
\begin{Theorem}
\label{thm:Graal}
The equality
\begin{equation}\label{eq:Graalagain}
\EcalKDC=\convAB,
\end{equation}
holds under any single one of the following hypotheses:
\begin{enumerate}[label=(\roman*),wide, labelindent=0pt]
    \item If $d=2$ (for qubits) and $\mab>0$;
    \item If $d=3$, for all $U$ in a open dense set of probability $1$;
    \item If $d$ is prime and $U$ is the discrete Fourier transform (DFT) matrix;
    \item If $U$ is sufficiently close to some other $U^{\prime}$ for which Eq.~\eqref{eq:Graalagain} holds.
\end{enumerate}
\end{Theorem}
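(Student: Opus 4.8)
The plan is to prove all four parts through one common reduction and then treat the hypotheses separately. First I would record the trivial inclusion $\convAB \subseteq \EcalKDC$, which holds for \emph{every} $U$: each basis projector has a nonnegative KD distribution by the one-line computation $Q_{kj}(|a_i\rangle\langle a_i|) = \delta_{ik}|U_{ij}|^2$ (and dually for the $|b_j\rangle\langle b_j|$), and KD positivity is preserved under convex combinations because each $Q_{ij}$ is linear in $\rho$. So in all four cases it remains to prove the reverse inclusion $\EcalKDC \subseteq \convAB$. Since $\EcalKDC$ is compact and convex, by Krein--Milman this reduces to showing that every extreme point of $\EcalKDC$ lies in $\Acal\cup\Bcal$. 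I would split this into two logically independent claims: \textbf{(A)} every pure KD-positive state is a basis projector, and \textbf{(B)} every extreme point of $\EcalKDC$ is pure. Claim (A) is the characterization of Proposition~\ref{prop:pure_char} specialized to each hypothesis, whereas (B) is the rigidity statement that can fail in general (cf.\ the spin-$1$ example announced in the abstract) and is where the real work lies.

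For claim (B) I would run a range-restricted perturbation argument. Let $\rho$ be an extreme point of rank $r\ge 2$, let $Z(\rho)=\{(i,j): Q_{ij}(\rho)=0\}$ be its vanishing pattern, and consider Hermitian perturbations $\delta$ supported on $\Ran\rho$ with $\Tr\delta = 0$, with $\mathrm{Im}\,Q_{ij}(\delta)=0$ for all $(i,j)$, and with $Q_{ij}(\delta)=0$ for $(i,j)\in Z(\rho)$. For small such $\delta$ both $\rho\pm\delta$ are density matrices (because $\rho$ is positive definite on its $r$-dimensional range) and remain KD positive (the strictly positive entries stay positive while the active ones are held at zero), so extremality forces this perturbation space to be $\{0\}$. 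A dimension count then forces the number of independent active constraints to be at least $r^2$, bounding $r$ from above in terms of $|Z(\rho)|$ and the realness conditions. The heart of each case is to push this down to $r=1$: one must show that the vanishing pattern $Z(\rho)$ of a genuinely mixed KD-positive state cannot be large enough, which is exactly where the support-uncertainty structure of the pair of bases enters.

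With this machinery I would dispatch cases (i)--(iii). In $d=2$ I would parametrize $\rho$ on the Bloch ball; with $\mab>0$ the realness equalities and the four inequalities $Q_{ij}(\rho)\ge 0$ become explicit linear constraints whose feasible region is checked by finite computation to be the polytope with vertices $\Acal\cup\Bcal$, settling (A) and (B) at once. For the DFT in prime dimension, Chebotarev's theorem---that every minor of the DFT matrix is nonzero---yields the sharp support-uncertainty bound $n_{\Acal}(\psi)+n_{\Bcal}(\psi)\ge d+1$ (with $n_{\Acal}(\psi)$ the number of nonzero coefficients of $\psi$ in the $a$-basis); feeding this into Proposition~\ref{prop:pure_char} forces every pure KD-positive state to be a basis projector (claim (A)), and the same nonvanishing of minors severely restricts $Z(\rho)$, which I would use to bring the rank bound of the previous paragraph down to $r=1$ (claim (B)). For $d=3$ I would argue by genericity: claims (A) and (B) fail only when the entries of $U$ satisfy certain polynomial relations (vanishing of prescribed determinants governing extra pure solutions and large vanishing patterns), so the bad set is contained in a proper real-algebraic subvariety; exhibiting one good $U$ in $d=3$ (for instance the $d=3$ DFT from case (iii)) shows this subvariety is proper, hence of measure zero with open dense complement.

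The main obstacle is case (iv), because equality of convex sets is not a priori an open condition: a soft continuity argument only shows that a KD-positive state lying outside $\conv(\Acal(U)\cup\Bcal(U))$ has distance to it tending to $0$ as $U\to U'$, never a contradiction. To obtain genuine rigidity I would reformulate the equality through support functions: $\EcalKDC(U)=\conv(\Acal(U)\cup\Bcal(U))$ holds iff $S_U(H)=M_U(H)$ for every Hermitian $H$, where $M_U(H)=\max_k\langle e_k^U|H|e_k^U\rangle$ over the $2d$ basis vectors is the support function of the hull and $S_U(H)$ is that of $\EcalKDC(U)$, with $M_U(H)\le S_U(H)$ automatic from the inclusion. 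I would then upgrade this to an open statement by a quantitative local analysis at the $2d$ vertices: at each $|a_i\rangle\langle a_i|$ and $|b_j\rangle\langle b_j|$ of the polytope $\EcalKDC(U')$ I would show that the active KD inequalities together with the semidefinite constraint $\rho\ge 0$ are non-degenerate, so that their normals positively span the normal cone with a margin bounded below uniformly near $U'$; this persistence of the vertex and facet structure, combined with compactness of the state space to exclude new extreme points forming away from the vertices, yields the equality for all $U$ close to $U'$. Controlling the interplay between the finitely many linear KD constraints and the non-polyhedral constraint $\rho\ge 0$ uniformly in $U$ is the crux of this last step.
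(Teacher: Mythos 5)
There is a genuine gap, and it is concentrated in your claim (B) and in case (iv). The paper never decomposes the problem into ``pure KD-positive states are basis states'' plus ``extreme points are pure''; instead it proves (Proposition~\ref{prop:TSKD2bis}) that $\EcalKDC=\convAB$ holds \emph{if and only if} $\dim\VR=2d-1$, where $\VR=\Ker(\mathrm{Im}\,Q)$ is the space of self-adjoint operators with real KD symbol. The nontrivial direction is your perturbation idea, but run once, globally, at the maximally mixed state: if $\dim\VR>2d-1$, pick $F\in\VR$ orthogonal to $\spanRAB$ and note that $\bbone_d/d+xF$ stays in $\EcalKDC$ for all small $x$ because the entries $Q_{ij}(\bbone_d/d)=|\langle a_i|b_j\rangle|^2/d\geq\mab^2/d>0$ are uniformly bounded below --- no vanishing pattern $Z(\rho)$ ever enters. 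This converts the whole theorem into a rank computation for the linear map $\mathrm{Im}\,Q$. Your version of the perturbation argument, applied at an arbitrary extreme point $\rho$ of rank $r\geq 2$, needs an upper bound on the number of independent constraints coming from $Z(\rho)$ and from the realness conditions restricted to $\Ran{\rho}$; you correctly flag this as ``the heart of each case'' but never supply it, and it is unclear that it can be supplied: for the prime-dimension DFT, Chebotarev's theorem controls supports of \emph{pure} states, not the set of pairs $(i,j)$ with $\langle a_i|\rho|b_j\rangle=0$ for mixed $\rho$. The paper's route for case (iii) instead characterizes $\VR$ exactly (Lemma~\ref{lem:DFTLem}: $F\in\VR$ iff $F$ is constant on its off-diagonals), giving $\dim\VR=2d-1$ directly. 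Likewise, your genericity argument for $d=3$ presumes that the bad set is cut out by polynomial relations in the entries of $U$; that is true, but only \emph{via} the linear criterion (the bad set is where all $(d-1)^2\times(d-1)^2$ minors of the matrix of $\mathrm{Im}\,Q$ vanish), which your framework does not provide.

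Case (iv) is where the gap is most serious. You rightly observe that equality of convex sets is not a priori an open condition, and you propose a support-function/normal-cone stability analysis at the $2d$ vertices; but the crux you yourself name --- uniform non-degeneracy of the active constraints, including the non-polyhedral constraint $\rho\geq 0$ that is active at every pure-state vertex, together with the exclusion of new extreme points forming away from the vertices --- is exactly what is not proved, and soft upper semicontinuity of $U\mapsto\EcalKDC$ cannot close it. With the paper's criterion, openness is immediate: $\dim\VR=2d-1$ is equivalent to surjectivity of $\mathrm{Im}\,Q$ onto the $(d-1)^2$-dimensional space of real matrices with vanishing row and column sums, and surjectivity of a linear map depending continuously on $U$ is an open condition. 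In short, the missing ingredient throughout is the equivalence $\EcalKDC=\convAB\Leftrightarrow\dim\Ker(\mathrm{Im}\,Q)=2d-1$; without it, parts (ii), (iii) (its mixed-state half) and (iv) of your program remain unproved, while parts (i) and the pure-state claim (A) are essentially fine.
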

Note that Eq.~\eqref{eq:Graalagain} is equivalent to
\begin{equation}\label{eq:Graaltris}
\Acal\cup\Bcal=\EcalKDCpu=\EcalKDCext,
\end{equation} 
where  $\EcalKDCpu$ denotes the set of pure KD-positive states.  
For general $\Acal$ and $\Bcal$, one has
\begin{equation}\label{eq:pureinext}
\Acal\cup\Bcal\subseteq \EcalKDCpu\subseteq \EcalKDCext.
\end{equation}
In other words, Eq.~\eqref{eq:Graalagain} corresponds to the simplest situation, where the set of extreme states is minimal.  In that case, we have  {an} explicit description of the set of all KD-positive states since the convex set $\convAB $ forms a  polytope with a simple geometric structure, detailed in Appendix A.  Note that part~(iv) of the theorem guarantees that the property Eq.~\eqref{eq:Graalagain} is stable in the sense that it is verified in an open set of unitary matrices. We conjecture that part (ii) of the theorem in fact holds in all dimensions. In other words, we think that the simple structure obtained in Eq.~\eqref{eq:Graalagain} is ``typically'' realized, meaning that it holds in an open dense set of full measure. We have numerically checked this  conjecture by randomly choosing unitary matrices $U$ for dimensions $d$ up to $10$ (See Section~\ref{s:graalproof} for details). The following proposition, proven in Section~\ref{s:graalproof}, shows a partial result in this direction:
\begin{Prop}\label{prop:pure_char} Let $d\geq2$.
There exists an open dense set of unitaries of probability $1$ for which $\EcalKDCpu=\Acal\cup\Bcal$.
\end{Prop}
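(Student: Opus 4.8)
The plan is to reduce the KD positivity of a pure state to an elementary positivity condition on the $2\times2$ sub-blocks of the transition matrix $U$, and then to show that this condition generically forbids every pure KD-positive state other than the basis states.

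First I would record the reduction. For a pure state $\cket{\psi}$ one has $Q_{ij}(\cket{\psi}\bra{\psi})=\overline{U_{ij}}\,\bracket{a_i}{\psi}\,\overline{\bracket{b_j}{\psi}}$, and a direct check shows that every basis state is KD positive, so $\Acal\cup\Bcal\subseteq\EcalKDCpu$ holds for all $U$; only the reverse inclusion requires genericity. Writing $S_A=\{i:\bracket{a_i}{\psi}\neq0\}$ and $S_B=\{j:\bracket{b_j}{\psi}\neq0\}$, a pure state lies in $\Acal\cup\Bcal$ precisely when $|S_A|=1$ or $|S_B|=1$; hence it suffices to rule out, for generic $U$, any KD-positive pure state with $|S_A|\geq2$ and $|S_B|\geq2$.

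Next comes the key local obstruction. Restrict to the (open, dense, full-measure) set of $U$ with all entries nonzero. If $\cket{\psi}$ is KD positive with $|S_A|,|S_B|\geq2$, choose $i\neq i'$ in $S_A$ and $j\neq j'$ in $S_B$; then $Q_{ij},Q_{i'j},Q_{ij'},Q_{i'j'}$ are all strictly positive. Forming the ratio $Q_{ij}Q_{i'j'}/(Q_{ij'}Q_{i'j})$, the four state-dependent factors $\bracket{a_i}{\psi}$, $\bracket{a_{i'}}{\psi}$, $\overline{\bracket{b_j}{\psi}}$, $\overline{\bracket{b_{j'}}{\psi}}$ cancel exactly, leaving
\[
c_{ii'jj'}(U):=U_{ij}\,\overline{U_{i'j}}\,\overline{U_{ij'}}\,U_{i'j'}\in\R_{>0}.
\]
Thus the existence of a non-basis KD-positive pure state forces $c_{ii'jj'}(U)\in\R_{>0}$ for at least one quadruple $(i,i',j,j')$, a constraint on $U$ that does not even invoke the prior classification of pure KD-positive states.

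Finally I would establish genericity by excluding this constraint. Define $G$ to be the set of unitaries with all entries nonzero and with no quadruple satisfying $c_{ii'jj'}(U)\in\R_{>0}$; on $G$ the previous step gives $\EcalKDCpu=\Acal\cup\Bcal$. Showing that $G$ is open, dense and of probability $1$ is where the only real work lies, and it bifurcates by dimension. For $d=2$ the single quadruple always yields $c_{1212}(U)=-|U_{11}|^2|U_{21}|^2<0$ by unitarity, so $c\notin\R_{>0}$ automatically and $G$ reduces to the generic set of unitaries with nonzero entries. For $d\geq3$ one uses instead that $\{c_{ii'jj'}\in\R_{>0}\}\subseteq\{\mathrm{Im}\,c_{ii'jj'}=0\}$, where $\mathrm{Im}\,c_{ii'jj'}$ is real-analytic on the connected manifold $U(d)$; by the analytic-zero-set dichotomy its vanishing locus is either everything or else closed, nowhere dense and null. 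The crux is therefore to exhibit, for each quadruple, a single unitary with $\mathrm{Im}\,c_{ii'jj'}\neq0$: for prime $d$ the DFT matrix does this simultaneously for all quadruples, since its coboundary phase equals $\tfrac{2\pi}{d}(i-i')(j-j')\not\equiv0\pmod d$, while for general $d$ a witness comes from an arbitrarily small anti-Hermitian perturbation of the identity, whose extra off-diagonal phase freedom supplies precisely the imaginary part that the purely $2\times2$ situation (where $c$ is forced real and negative) cannot. Taking the finite intersection over all quadruples then makes $G$ open, dense and of probability $1$, which proves the proposition.
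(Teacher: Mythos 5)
Your core reduction is, up to notation, the same as the paper's: the quantity $c_{ii'jj'}(U)=U_{ij}\overline{U_{i'j}}\,\overline{U_{ij'}}U_{i'j'}$ lies in $\R_{>0}$ exactly when $\phi_{ij}-\phi_{i'j}=\phi_{ij'}-\phi_{i'j'}\ [2\pi]$ in the paper's polar notation $U_{kj}=A_{kj}e^{i\phi_{kj}}$, so your set $G$ is precisely the paper's set $\mcl{V}$. (The paper reaches the same constraint by absorbing the phases of $\bracket{a_i}{\psi}$ and $\bracket{b_j}{\psi}$ into the basis vectors rather than by taking the cross-ratio of four strictly positive KD entries, but the two computations are identical in substance.) Where you genuinely diverge is in the genericity argument. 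The paper rotates two columns by opposite phases $e^{\pm i\theta}$ and re-unitarizes to prove density, then gets full measure by noting the complement is a finite union of lower-dimensional submanifolds. You instead invoke the real-analytic zero-set dichotomy for $\mathrm{Im}\,c_{ii'jj'}$ on the connected manifold $U(d)$, which delivers density and nullity in one stroke once a single witness with $\mathrm{Im}\,c_{ii'jj'}\neq0$ is exhibited. Two small repairs: for openness you should either observe that $\R_{>0}$ is closed in $\C\setminus\{0\}$, so that $\{c_{ii'jj'}\notin\R_{>0}\}$ is open relative to the open set where all entries are nonzero, or simply define $G$ by the stronger open condition $\mathrm{Im}\,c_{ii'jj'}\neq0$ for all quadruples.

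The one step you have not actually established is the witness for general $d\geq3$. The suggestion that a small anti-Hermitian perturbation $e^{tX}$ of the identity "supplies the imaginary part" fails at the order you implicitly rely on: for a quadruple of the form $(i,i';i,i')$ one has $U_{ii}U_{i'i'}=1+O(t)$ while anti-Hermiticity forces $\overline{U_{i'i}}\,\overline{U_{ii'}}=-t^2|X_{i'i}|^2+O(t^3)$, so $c_{ii'ii'}$ is real to leading order, and the $t^3$ term also cancels unless $X$ couples in a third index. The witness does exist, but it needs an argument; the cleanest one is in fact the paper's own perturbation: replacing columns $j$ and $j'$ by $e^{i\theta}$- and $e^{-i\theta}$-rotations of their top entries multiplies $c_{ii'jj'}$ by $e^{2i\theta}$ while leaving the other factors intact, so starting from any unitary with nonzero entries a small $\theta\neq0$ moves $c_{ii'jj'}$ off the positive real axis. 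With that (or any other explicit witness) supplied, your argument is complete and correct.
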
 
 We stress that we do nevertheless not know if, for the unitaries referred to in the proposition, the stronger property $\EcalKDC=\convAB$ holds.

In general, it is a formidable task to identify  $\EcalKDCpu$ and $\EcalKDCext$, given two specific bases $\Acal$ and $\Bcal$. Part~(iii) of the theorem shows that, when $U$ is the DFT matrix, and the dimension $d$ is a prime number, one again {satisfies} Eq.~\eqref{eq:Graaltris}. 
When $d$ is prime and { the columns of $U$ form two}  {mutually unbiased} (MUB) {bases with the canonical basis},  it is still true that $\EcalKDCpu=\Acal\cup\Bcal$ (See~\cite{Xu22} and Appendix~\ref{s:MUBpure}). But in that case, we have no information about the possible existence of mixed extreme states. 
When the dimension is not prime, and $U$ the DFT matrix, one can identify all pure KD-positive states~\cite{debievre2021,Xu22, debievre2023a} and one observes that there exist pure KD-positive states that are not basis states, \emph{i.e.} $\Acal\cup\Bcal\subsetneq \EcalKDCpu$. It is again, to the best of our knowledge, not known in that case if there also exist extreme KD-positive states that are mixed, meaning if $\EcalKDCpu\subsetneq \EcalKDCext$.

By analyzing in detail the situation where the transition matrix is real-valued, we  provide below (Section~\ref{s:eurekanew}) examples for which $
\Acal\cup\Bcal \subsetneq \EcalKDCpu\subsetneq \EcalKDCext$ or $\Acal\cup\Bcal = \EcalKDCpu\subsetneq \EcalKDCext. $
In these cases, there therefore exist mixed extreme states, some of which we explicitly identify. {We will highlight such situations with examples  where the bases $\Acal$ and $\Bcal$ are the eigenbases of two spin-$1$ components in some particular directions. }  While this situation is in a sense exceptional, {it has a precise analogue in the analysis of Wigner function.} Indeed, the pure Wigner positive  states are known to be the Gaussian states~\cite{hudson1974}. But it is also well known that the convex hull of the pure Gaussian states (which contains all Gaussian states) does not exhaust all Wigner positive states~\cite{genonietal2013}. As it turns out, even though examples of Wigner positive states not in this convex hull have been constructed~\cite{genonietal2013, vanherstraetencerf2021, hertzdebievre2023}, a complete description of the extreme states of the set of all Wigner positive states is, to the best of our knowledge, not available. In fact, no mixed extreme states have been explicitly identified for the Wigner function \cite{gross2006}.

 The {remainder of this} paper is structured as follows. In Section~\ref{s:setting}, we describe the general framework of our investigation, recall some definitions and necessary background information, and {introduce our}  notation. In Section~\ref{s:gen_res}, we  
prove several results on the {general structure of the} geometry of the set of KD-positive states. {These results} are essential ingredients for the proofs of our main results. 
In Section~\ref{s:graalproof}, we prove Theorem \ref{thm:Graal} and Proposition~\ref{prop:pure_char}. 
 In Section~\ref{s:eurekanew}, we focus on real unitary matrices to construct examples of mixed states that are KD positive but cannot be written as convex combinations of pure KD-positve states. Section~\ref{s:conc} contains our conclusions and outlook. 

\section{The setting and background}\label{s:setting}

In this section, we introduce some notation, define KD distributions and recall some of their  properties.

Throughout this manuscript, we consider a complex Hilbert space $\H$ of dimension $d$. We consider also two orthonormal bases $ \left(\cket{a_i}\right)_{i\in\IntEnt{1}{d}}$ and $\left(\cket{b_j}\right)_{j\in\IntEnt{1}{d}}$ in $\H$. We denote by $U=\left(\bracket{a_i}{b_j}\right)_{(i,j)\in\IntEnt{1}{d}^2}$ the transition matrix between these two bases. If $\rho$ is a density matrix, we define the KD distribution $Q(\rho)$  to be the $d \times d$ matrix~\cite{kirkwood1933, dirac1945}
\begin{equation}\label{eq:KDGV2}
\forall (i,j)\in \IntEnt{1}{d}^2, Q_{ij}(\rho) = \bracket{b_j}{a_i}\bra{a_i}\rho\cket{b_j}.
\end{equation}
Note that, for a given $\rho$, the matrix $Q(\rho)$ depends on the two bases. Although this will be crucial for our developments below, we do not indicate this dependence, not to burden the notation. 
The KD distribution thus satisfies the following   of Kolmogorov's axioms for joint probability distributions:
\begin{equation}\label{eq:Qbasics}
\sum_{i,j} Q_{ij}(\rho)=\Tr \rho=1, \quad \sum_j Q_{ij}(\rho)=\langle a_i|\rho|a_i\rangle,\quad \sum_i Q_{ij}(\rho)=\langle b_j|\rho|b_j\rangle.
\end{equation}

However, unlike joint probabilities, $Q(\rho)$ is in general a complex-valued matrix. {We call} a state KD positive whenever  $Q_{ij}(\rho)\geq 0$ for all $(i,j)\in\IntEnt{1}{d}^2$. The transition matrix $U$ is determined by the choice of bases and  determines whether $\rho$ is KD positive. For example, if $U = \bbone_{d}$, the bases are identical. Then, clearly all states are KD positive. 

We will say that two bases, $(\cket{a_i})_i$ and $(\cket{a_i'})_i$ or $(\cket{b_j})_j$ and $(\cket{b_j'})_j$, are equivalent if they can be obtained from each other by permutations of the basis vectors and/or phase rotations. In that case, the matrices $U$ and $U'$ are obtained from one another by permutations of their columns and rows, and global  phase rotations of the rows and columns. We shall say such matrices are equivalent. The point of these definitions is that, when the bases (and hence the transition matrices) are equivalent, then the corresponding sets of KD-positive states are identical. 
In particular, we note for later use that, if $\phi=(\phi_1,\dots, \phi_d)\in [0,2\pi)^d$, $\psi=(\psi_1,\dots,\psi_d)\in[0,2\pi)^d$, and if 
$$
|a_j'\rangle=\exp(-i\phi_j)|a_j\rangle,\quad |b_j'\rangle=\exp(-i\psi_j)|b_j\rangle,
$$
then the transition matrix $U'_{ij}=\langle a_i'|b_j'\rangle$ is given by
\begin{equation}\label{eq:phases}
U'=D(-\phi)UD(\psi),
\end{equation}
where, for any $\phi=(\phi_1,\dots, \phi_d)\in [0,2\pi)^d$,
$$
D(\phi)_{jk}=\exp(-i\phi_j)\delta_{jk}.
$$

Let us point out that we will often identify a unit vector $|\psi\rangle\in\Hcal$ with its projector $|\psi\rangle\langle\psi|$.

The questions we address in this work are of interest only if the two bases are in a suitable sense incompatible. 
For most of this work we will therefore assume that the unitary matrix $U$ has no zeros:
\begin{equation}\label{eq:mab}
\mab=\min_{i,j}|\langle a_i|b_j\rangle|>0.
\end{equation}
This guarantees that $Q(\rho)$ determines a unique $\rho$ (see Eq.~\eqref{eq:infcomplete}). In addition, it implies that none of the $|a_i\rangle\langle a_i|$ commutes with any of the $|b_j\rangle\langle b_j|$.  This is a weak  form of incompatibility between the two bases~\cite{debievre2023a}. Indeed, {$\mab > 0$ means that} if a measurement in the $\Acal$ basis yields an outcome $i$, then a subsequent measurement in the $\Bcal$ basis may yield any outcome $j$ with a nonvanishing probability.  
We recall that a special role is played by mutually unbiased (MUB) bases, for which $\mab$ takes the maximum possible value $\mab=\frac{1}{\sqrt{d}}$. All outcomes $j$ for a $\Bcal$-measurement after an initial measurement in the $\Acal$-basis are then equally probable, and vice versa. 

\section{General structural results}\label{s:gen_res}
In this section, we prove general results regarding the geometry of KD-positive states. We work under the  assumption that $\mab>0$.

 \subsection{The KD symbol of an observable}
 It is well known that the Wigner function can be defined not only for states $\rho$, but also for arbitrary observables $F$, in which case it is referred to as the Weyl symbol of $F$. One can proceed similarly {with} the KD distribution.  
 Denoting by $\SAO$ the set of self-adjoint operators,  we define, 
 \begin{equation} \label{eq:Qdef}
Q : \left\{\begin{array}{rcl}
\SAO  & \rightarrow &\MatC{d} \\
 F & \mapsto & (Q_{ij}( F))_{(i,j)\in\IntEnt{1}{d}^2}
\end{array}\right.,
\end{equation}
where
\[
\forall (i,j)\in\IntEnt{1}{d}^2, \  Q_{ij}(F)=\langle a_i|F|b_j\rangle \langle b_j|a_i\rangle ,
\]
and where $\MatC{d}$ is the the space of {complex} $d$ by $d$  matrices.
We shall refer to $Q(F)$ as the KD symbol of $F$. 
We note that
\begin{equation}\label{eq:Qmarginals}
\sum_{j=1}^{d} Q_{ij}(F)=\langle a_i|F|a_i\rangle\in\R, \quad \sum_{i=1}^{d} Q_{ij}(F)=\langle b_j|F|b_j\rangle\in\R, \quad \sum_{i,j}Q_{ij}(F)=\Tr (F).
\end{equation}
Also, for $F,G\in\SAO$, we have 
\[
\begin{array}{rcl}
\mathrm{Tr}(FG) &=&  \sum_{(i,j)\in\IntEnt{1}{d}^2} \bra{a_i}F\cket{b_j}\bra{b_j}G\cket{a_i} \\
&=& \displaystyle\sum_{(i,j)\in\IntEnt{1}{d}^2} \frac{1}{\left|\bracket{a_i}{b_j}\right|^2}\bracket{b_j}{a_i}\bra{a_i}F\cket{b_j}\bracket{a_i}{b_j}\bra{b_j}G\cket{a_i} \\
&=&\displaystyle \sum_{(i,j)\in\IntEnt{1}{d}^2}\frac{1}{\left|\bracket{a_i}{b_j}\right|^2}Q_{ij}(F)\overline{Q_{ij}(G)}.
\end{array}\]
If $\Acal$ and $\Bcal$ are MUB bases, then
\[
\mathrm{Tr}(FG) = d\sum_{(i,j)\in\IntEnt{1}{d}^2}Q_{ij}(F)\overline{Q_{ij}(G)}=d \mathrm{Tr}(Q(F)Q^{\dagger}(G)).
\]
One may note the analogy between these two identities and the well known ``overlap identity'' for the Wigner function/Weyl symbol which expresses $\Tr(FG)$ as a phase space integral of the product of the Wigner function/Weyl symbol of $F$ and $G$~\cite{leonhardt2010}.

We point out that, when $\mab>0$, the KD symbol $Q(F)$ determines the observable $F$ uniquely. The reconstruction formula is~\cite{johansen2007}
\begin{equation}\label{eq:infcomplete}
\forall (i,k)\in\IntEnt{1}{d}^2, \ \langle a_i|F|a_k\rangle = \sum_{j=1}^{d} Q_{ij}(F)\frac{\langle b_j|a_k\rangle}{\langle b_j|a_i\rangle}.
\end{equation}
This property is sometimes referred to as informational completeness. 
In other words, the map $Q$ is injective: $\Ker Q=\{0\}$ if $\mab>0${, where $\Ker Q$ denotes the kernel of $Q$.}

Since  the dimension of the real vector space $\SAO$ is $d^2$, it follows that
$\dim \Ran Q= d^2,$ where $\Ran{Q}$ denotes the image of {$Q$}. Hence, $\Ran Q$ is a $d^2$-dimensional real vector subspace of the $2d^2$-dimensional real vector space $\MatC{d}$. Note that a matrix $M\in\mathcal \blue{\MatC{d}}$  belongs to $\Ran Q$ if and only if it satisfies the $d^2$ real linear constraints 
\[
\sum_j M_{kj}\frac{\langle b_j|a_i\rangle}{\langle b_j|a_k\rangle}=\sum_j\overline{M}_{ij}\frac{\langle a_k|b_j\rangle}{\langle a_i|b_j\rangle}.
\]
We will further find it useful to consider the imaginary part of $Q$:
\[
\Im{Q}: \left\{\begin{array}{rcl}
\SAO  & \rightarrow &\MatR{d} \\
 F & \mapsto & (\Im{Q_{ij}(F)})_{(i,j)\in\IntEnt{1}{d}^2}
\end{array}\right.,
 \]
which is a real-linear map into the space of real matrices $\MatR{d}$. 

To streamline the discussion, we introduce the following terminology. We will say that a self-adjoint operator $F$ is a \emph{KD-real operator}
whenever its KD distribution is real-valued. In other words, $F$ is KD real if and only if 
\[
F\in\VR:=\Ker\,{\mathrm{Im}}Q=Q^{-1}(\mcl{M}_{d}(\R)).
\]
 We will say it is \emph{KD positive}  
 if its KD distribution takes on real nonnegative values only. In other words, iff 
 \[
 F\in \VRp:=Q^{-1}(\mcl{M}_{d}(\R^{+})) \subseteq \VR.
 \]
 We point out for later use that, since $\VR= \Ker(\mathrm{Im} Q)\subset \SAO$ and $\dim\SAO=d^2$, 
 \begin{equation}
 \dim \VR \leq d^2.
 \end{equation}
Clearly,  if $F_1, F_2\in\VRp$, then $\lambda_1 F_1+\lambda_2F_2\in\VRp$ for all $\lambda_1, \lambda_2\geq 0$. In particular,
$\VRp$ is a closed convex cone. Note that it has no extreme points, except for the origin.

\subsection{The case $\EcalKDC=\convAB$: a geometric condition}
Recall that a density matrix representing a quantum state  is a nonnegative operator $\rho$ satisfying $\mathrm{Tr}\rho=1$. We will write $\SAOpone$ for the set of density matrices and $\SAOp$ for the set of positive operators.  Hence 
\begin{equation}\label{eq:EcalKDCchar}
\EcalKDC=\VRp\cap \SAOpone.
\end{equation}
Note that $\EcalKDC$ is compact
so that, by the Krein-Milman theorem,
$
\EcalKDC=\conv{\EcalKDCext}.
$

 The question we are addressing in this section 
 is under which conditions on $U$ it is true that
\begin{equation}\label{eq:Graal}
\EcalKDC=\convAB,
\end{equation}
where 
\begin{eqnarray}
\convAB&=&\conv{\{\cket{a_i}\bra{a_i}, \cket{b_j}\bra{b_j} \mid 1\leq i,j\leq d\}}.
\end{eqnarray}
 In other words, the question is:  Is it  true or false that all KD-positive states are convex mixtures of the basis states? 
 This is equivalent to {checking if} the inclusions in Eq.~\eqref{eq:pureinext} are equalities, { \textit{i.e.}, if}
\begin{equation}\label{eq:Graalequiv}
\Acal\cup\Bcal=\EcalKDCpu=\EcalKDCext.
\end{equation}
One can think of Eq. \eqref{eq:Graalequiv} as the situation where the set $\EcalKDC$ of KD-positive states is the smallest possible.   In a sense then, this corresponds to the choice of two bases $\Acal$ and $\Bcal$ that are ``most strongly quantum.'' Note that, when Eq. \eqref{eq:Graalequiv} holds true,
$\EcalKDC$ is a convex polytope with 2d known summits $\{\cket{a_i}\bra{a_i}, \cket{b_i}\bra{b_i}\}_{i\in\IntEnt{1}{d}}$. Its geometry is described in  Appendix~\ref{s:convABstructure}.  

In Proposition~\ref{prop:TSKD2bis} we will prove conditions of a geometric nature on the set of KD-real operators that are equivalent to Eq.~\eqref{eq:Graal}. 

 We introduce the vector space
\begin{eqnarray}
\spanRAB&=&\mathrm{span}_{\R} \{\cket{a_i}\bra{a_i}, \cket{b_j}\bra{b_j} \mid 1\leq i,j\leq d\}.
\end{eqnarray}
and show the following result:
\begin{Lemma}\label{lem:G1}
If $\mab> 0$, then
\[
\mathrm{dim}\left(\spanRAB\right) =2d- 1\quad\mathrm{and}\quad \EcalKDC\cap \spanRAB =\convAB.
\]
\end{Lemma}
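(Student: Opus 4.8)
```latex
The plan is to prove the two assertions of the lemma separately, treating the dimension count first and then the set equality. For the dimension claim, I would observe that $\spanRAB$ is spanned by the $2d$ rank-one projectors $\{|a_i\rangle\langle a_i|, |b_j\rangle\langle b_j|\}$. There is one obvious linear relation among them, namely $\sum_i |a_i\rangle\langle a_i| = \bbone = \sum_j |b_j\rangle\langle b_j|$, which shows $\dim(\spanRAB) \leq 2d-1$. To obtain the reverse inequality, I would show that these $2d$ projectors span a space of dimension \emph{at least} $2d-1$ by exhibiting $2d-1$ linearly independent combinations. The natural approach is to apply the injective KD symbol map $Q$ (valid since $\mab>0$), or to argue directly: suppose $\sum_i \lambda_i |a_i\rangle\langle a_i| + \sum_j \mu_j |b_j\rangle\langle b_j| = 0$, and extract constraints by sandwiching between the bases. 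Sandwiching with $\langle b_k|\cdot|b_k\rangle$ and using $\mab>0$ (so that no $|\langle a_i|b_k\rangle|^2$ vanishes) should force all the $\lambda_i$ to be equal and all the $\mu_j$ to be equal, leaving exactly the one relation above; this yields $\dim = 2d-1$.

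For the set equality $\EcalKDC \cap \spanRAB = \convAB$, the inclusion $\convAB \subseteq \EcalKDC \cap \spanRAB$ is immediate, since each basis projector lies in both $\spanRAB$ (by definition) and in $\EcalKDC$ (basis states are always KD positive), and both $\EcalKDC$ and $\spanRAB$ are convex. The substance is the reverse inclusion. I would take an arbitrary $\rho \in \EcalKDC \cap \spanRAB$ and write it as $\rho = \sum_i \lambda_i |a_i\rangle\langle a_i| + \sum_j \mu_j |b_j\rangle\langle b_j|$ for real coefficients. The goal is to show that $\rho$ is in fact a convex combination of the basis projectors, i.e.\ that (after using the freedom in the one relation) the coefficients can be taken nonnegative and summing to one.

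The key is to exploit the KD-positivity constraints on such a $\rho$ of this special form. Computing $Q_{ij}(\rho)$ for $\rho$ in $\spanRAB$ should be explicit: the cross terms $\langle a_i|(|a_k\rangle\langle a_k|)|b_j\rangle\langle b_j|a_i\rangle$ collapse because $\langle a_i|a_k\rangle = \delta_{ik}$, and similarly for the $\Bcal$ terms. So $Q_{ij}(\rho) = \lambda_i |\langle a_i|b_j\rangle|^2 + \mu_j |\langle a_i|b_j\rangle|^2 = (\lambda_i + \mu_j)|\langle a_i|b_j\rangle|^2$. Since $\mab>0$, the factor $|\langle a_i|b_j\rangle|^2$ is strictly positive for every $(i,j)$, so KD positivity $Q_{ij}(\rho)\geq 0$ is \emph{equivalent} to $\lambda_i + \mu_j \geq 0$ for all $i,j$. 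Combined with the normalization $\Tr\rho = \sum_i\lambda_i + \sum_j \mu_j = 1$, I then need the elementary convexity fact that any real vector of the form $\rho = \sum_i \lambda_i |a_i\rangle\langle a_i| + \sum_j \mu_j|b_j\rangle\langle b_j|$ with $\lambda_i+\mu_j\geq 0$ for all $i,j$ and total coefficient sum $1$ lies in $\convAB$.

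The main obstacle is this last combinatorial/convexity step. Concretely, I expect to argue as follows: using the single relation $\sum_i|a_i\rangle\langle a_i| = \sum_j|b_j\rangle\langle b_j|$, I can shift the coefficients by a constant $c$, replacing $\lambda_i \mapsto \lambda_i + c$ and $\mu_j \mapsto \mu_j - c$ without changing $\rho$. Choosing $c = -\min_i \lambda_i$ makes all shifted $\lambda_i \geq 0$; the constraint $\lambda_i + \mu_j \geq 0$ applied at the minimizing index $i_0$ then forces all shifted $\mu_j \geq 0$ as well, and the total sum is preserved at $1$, exhibiting $\rho$ as a genuine convex combination. I would check carefully that the shifted coefficients are indeed all nonnegative and sum to one, which establishes $\rho \in \convAB$ and completes the proof.
```
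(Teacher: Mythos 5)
Your proposal is correct and follows essentially the same route as the paper: the kernel of the coefficient map is identified as the one-dimensional span of the relation $\sum_i\cket{a_i}\bra{a_i}=\sum_j\cket{b_j}\bra{b_j}$, and for the set equality one uses $Q_{ij}(\rho)=(\lambda_i+\mu_j)\left|\bracket{a_i}{b_j}\right|^2$ together with a shift of the coefficients by $\min_i\lambda_i$ along that relation, which (by the KD-positivity constraint $\lambda_i+\mu_j\geq 0$ applied at the minimizing row) makes them all nonnegative while preserving the unit sum --- exactly the paper's substitution of $\cket{a_1}\bra{a_1}$. One minor point: for the dimension count the matrix element that works immediately is the mixed one, $\bra{a_i}F\cket{b_j}=\bracket{a_i}{b_j}(\lambda_i+\mu_j)=0$, which with $\mab>0$ forces $\lambda_i+\mu_j=0$ for all $i,j$ directly; the diagonal sandwich $\bra{b_k}F\cket{b_k}$ you mention only yields the eigenvalue equation $\lambda=SS^{T}\lambda$ for the doubly stochastic matrix $S_{ik}=|\bracket{a_i}{b_k}|^2$ and would need an extra Perron--Frobenius step to reach the same conclusion.
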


\begin{proof} To prove the first statement, we consider the linear map 
\[
\Gamma : \left\{\begin{array}{rcl}
 \R^{2d}  & \rightarrow & \spanRAB \\
 ((\lambda_{i})_{i\in\IntEnt{1}{d}},(\mu_{j})_{j\in\IntEnt{1}{d}})& \mapsto & \sum_{i=1}^{d} \lambda_{i}\cket{a_i}\bra{a_i} + \sum_{j=1}^{d} \mu_{j}\cket{b_j}\bra{b_j}
\end{array}\right.
\]
for which the rank theorem gives $\text{dim}\left(\spanRAB\right) = 2d- \text{dim}\left(\Ker(\Gamma)\right)$.
Now suppose that $F = \sum_{i=1}^{d} \lambda_{i}\cket{a_i}\bra{a_i} + \sum_{j=1}^{d} \mu_{j}\cket{b_j}\bra{b_j} =0$. Then
\[
\forall (i,j)\in\IntEnt{1}{d}^2, \bra{a_i}F\cket{b_j} = \bracket{a_i}{b_j}\left(\lambda_{i}+\mu_{j}\right)=0.
\]
Now, for any  $i\in\IntEnt{1}{d}$,  $\bracket{a_i}{b_1}\neq 0$ because $\mab > 0$, hence $\lambda_{i}+\mu_{1} = 0$ and finally $\lambda_{i} = -\mu_{1}$ for all $i\in\IntEnt{1}{d}$. 
Exchanging the roles $(\lambda_{i})_{i\in\IntEnt{1}{d}}$ and $(\mu_{j})_{j\in\IntEnt{1}{d}}$, we find that for all $j\in\IntEnt{1}{d}$,  $\mu_{j}=\mu_{1}$. So, the relation stands as
\[
\mu_{1}\left(-\sum_{i=1}^{d} \cket{a_i}\bra{a_i} + \sum_{j=1}^{d}\cket{b_j}\bra{b_j}\right) =0,
\]
which is true for all $\mu_{1}\in\R$. This means that $\text{dim}\left(\Ker(\Gamma)\right) =1$ and so 
\[
\text{dim}\left(\spanRAB\right) = 2d- 1.
\]

We now turn to the second statement. That $\convAB\subseteq\EcalKDC\cap \spanRAB$ is immediate. {Thus,} we only need to prove the other inclusion. 
Let therefore $\rho\in \EcalKDC\cap \spanRAB$. Hence, there exist $\lambda_i, \mu_j\in\R$ so that 
$$
\rho=\sum_{i=1}^{d} \lambda_i\cket{a_i}\bra{a_i} + \sum_{j=1}^{d} \mu_j\cket{b_j}\bra{b_j}.
$$ 
Consequently 
\[
\forall (i,j)\in\IntEnt{1}{d}^2, Q_{ij}(\rho) = \left|\bracket{a_i}{b_j}\right|^2 \left(\lambda_{i}+\mu_{j}\right).
\]
After a possible reordering of the basis, we can suppose that $\frac{Q_{11}(\rho)}{|\langle a_1|b_1\rangle|^2} = \min_{(i,j)\in\IntEnt{1}{d}^2}\frac{Q_{ij}(\rho)}{\left|\bracket{a_i}{b_j}\right|^2}$ so that 
\[
\forall j \in \IntEnt{2}{d}, \mu_j -\mu_1 =  \frac{Q_{1j}(\rho)}{\left|\bracket{a_1}{b_j}\right|^2} - \frac{Q_{11}(\rho)}{\left|\bracket{a_1}{b_1}\right|^2} \geqslant 0 \ \mathrm{and} \ \lambda_j -\lambda_1=\frac{Q_{j1}(\rho)}{\left|\bracket{a_j}{b_1}\right|^2} - \frac{Q_{11}(\rho)}{\left|\bracket{a_1}{b_1}\right|^2} \geqslant 0.
\]
Moreover, since $\rho$ is KD positive, $Q_{11}(\rho)=  \left|\bracket{a_1}{b_1}\right|^2 \left(\lambda_{1}+\mu_{1}\right) \geqslant 0$. So, either  $\mu_1$ or $\lambda_1$ must be nonnegative. Suppose $\lambda_1\geq 0$, then as $\cket{a_1}\bra{a_1} = \sum_{j=1}^{d}\cket{b_j}\bra{b_j} - \sum_{i=2}^{d}\cket{a_i}\bra{a_i}$ we can rewrite $\rho$ as 
\[
\rho = \sum_{i=2}^{d} \left(\lambda_i -\lambda_1\right)\cket{a_i}\bra{a_i} + \sum_{j=1}^{d} \left(\mu_j +\lambda_1\right)\cket{b_j}\bra{b_j} = \sum_{i=2}^{d} \left(\lambda_i -\lambda_1\right)\cket{a_i}\bra{a_i} + \sum_{j=1}^{d} \frac{Q_{1j}(\rho)}{\left|\bracket{a_1}{b_j}\right|^2}\cket{b_j}\bra{b_j}.
\]
Hence $\rho\in \mathrm{span}_{\R^+} (\Acal \cup \Bcal)$. Together with the fact that $\Tr\rho=1$, this shows that $\rho\in \convAB$ and {finalizes our}  proof.
\end{proof}
 
Recall that $\VR=\Ker(\Im{Q})$ and that 
$$
\spanRAB\subset \VR.
$$
So we conclude that 
\begin{equation}\label{eq:dimVRbounds}
2d-1\leq \dim \VR=\dim(\Ker(\Im{Q}))\leq d^2.
\end{equation}

The following proposition shows that the condition $\dim\VR=2d-1$ is equivalent to the requirement that the basis states are the only extreme KD-positive  states which is equivalent to Eq.~\eqref{eq:Graal}.\\

\begin{Prop}\label{prop:TSKD2bis}
Suppose $m_{A,B} > 0$. Consider the following statements:\\
(ia)  $\VR=\spanRAB $;\\
(ib) $\dim \VR=2d-1$;\\
(iia) $\EcalKDCext= \Acal \cup \Bcal$;\\
(iib) $\EcalKDC=\convAB$.\\
Then (ia) $\Leftrightarrow$ (ib) $\Leftrightarrow$ (iia) $\Leftrightarrow$ (iib). 
\end{Prop}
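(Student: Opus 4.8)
The plan is to establish the four statements are equivalent through the cycle (ia)$\Leftrightarrow$(ib), (iia)$\Leftrightarrow$(iib), (ib)$\Rightarrow$(iib), and finally (iib)$\Rightarrow$(ib), the last being the only substantial step. First, (ia)$\Leftrightarrow$(ib) is immediate from Lemma~\ref{lem:G1}: since $\spanRAB\subseteq\VR$ and $\dim(\spanRAB)=2d-1$, the inclusion is an equality exactly when $\dim\VR=2d-1$. For (iia)$\Leftrightarrow$(iib) I would argue as follows. If (iia) holds, the Krein--Milman representation $\EcalKDC=\conv{\EcalKDCext}$ gives $\EcalKDC=\conv{\Acal\cup\Bcal}=\convAB$. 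Conversely, if $\EcalKDC=\convAB$, then every extreme point of $\EcalKDC$ lies in the generating set $\Acal\cup\Bcal$; and since each $\cket{a_i}\bra{a_i}$ and $\cket{b_j}\bra{b_j}$ is a pure state, it is extreme already in $\SAOpone$ and hence extreme in the convex subset $\EcalKDC$. Thus $\EcalKDCext=\Acal\cup\Bcal$, which is (iia).

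The implication (ib)$\Rightarrow$(iib) again reduces to Lemma~\ref{lem:G1}. Assuming $\dim\VR=2d-1$, part (ia) gives $\VR=\spanRAB$. Since any KD-positive state has a real (indeed nonnegative) KD distribution, we have $\EcalKDC\subseteq\VRp\subseteq\VR=\spanRAB$, whence $\EcalKDC=\EcalKDC\cap\spanRAB=\convAB$ by the second statement of the Lemma.

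The hard part is (iib)$\Rightarrow$(ib), for which I would use the maximally mixed state $\unit/d$ as a relative interior point and then compare affine dimensions. First I would observe that $\unit/d\in\EcalKDC$ and that it is a relative interior point of $\EcalKDC$ inside the affine space $\mcl{A}:=\VR\cap\{F\in\SAO : \Tr F=1\}$, whose dimension is $\dim\VR-1$. Indeed, $\unit/d$ is strictly positive definite, so a whole neighborhood of it in $\SAO$ consists of positive operators; and its KD symbol has entries $Q_{ij}(\unit/d)=\frac{1}{d}|\bracket{a_i}{b_j}|^2\geq \mab^2/d>0$, so by continuity of $Q$ every operator close enough to $\unit/d$ also has a strictly positive KD symbol. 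Hence a neighborhood of $\unit/d$ in $\mcl{A}$ lies in $\VRp\cap\SAOpone=\EcalKDC$, which forces $\dim(\mathrm{aff}(\EcalKDC))=\dim\mcl{A}=\dim\VR-1$.

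To close the argument I would compute the affine dimension of the right-hand side: the $2d$ generators of $\convAB$ linearly span the $(2d-1)$-dimensional space $\spanRAB$ and all have trace $1$, so their affine hull lies in the trace-one hyperplane, which does not contain the origin, and therefore has dimension $(2d-1)-1=2d-2$. If $\EcalKDC=\convAB$, equating the two affine dimensions yields $\dim\VR-1=2d-2$, i.e.\ $\dim\VR=2d-1$, which is (ib). The only delicate point to check carefully is the relative-interior claim — that both the positivity constraint $F\geq0$ and the KD-positivity constraint $Q(F)\geq0$ are \emph{inactive} at $\unit/d$ — and this is precisely where the standing hypothesis $\mab>0$ enters.
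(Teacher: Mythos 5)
Your proof is correct and rests on the same key idea as the paper's: the maximally mixed state $\unit/d$ satisfies both the operator-positivity and the KD-positivity constraints strictly, so it can be perturbed in every trace-preserving direction of $\VR$ while remaining in $\EcalKDC$, and Lemma~\ref{lem:G1} supplies the dimension $2d-1$ on the other side. The paper packages this as a contrapositive argument exhibiting an explicit witness $\unit/d + xF \in \EcalKDC\setminus\spanRAB$ with $F$ orthogonal to $\spanRAB$ in $\VR$, whereas you package it as a comparison of affine dimensions; the two formulations are interchangeable.
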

\begin{proof} That (ia) $\Leftrightarrow$ (ib) is immediate and so is the equivalence between (iia) and (iib).

We first show that (ia) implies (iib). Let $\rho\in\EcalKDC$. Then it belongs to $\VR$ and hence, by (ia), $\rho\in\spanRAB$. Hence, by the second statement of Lemma~\ref{lem:G1}, as $\rho\in \EcalKDC\cap \spanRAB$, it follows that $\rho\in\convAB$. Thus, $\EcalKDC=\convAB$.

It remains to show that (iib) implies (ia).
We proceed by contraposition. Suppose that (ia) does not hold so that $\dim \VR>2d-1$. Lemma~\ref{lem:G1} then implies that 
$\spanRAB$ is a proper subspace of $\VR$. So
\[
\VR=\spanRAB \oplus W,
\]
with $W$ equal to the orthogonal complement of $\spanRAB$ in $\VR$, which is nontrivial by assumption. Note that $F\in W$ implies that $\langle a_i|F|a_i\rangle=0=\langle b_j|F|b_j\rangle$ for all $(i,j)\in  \IntEnt{1}{d}^2$. This implies $\Tr F=0$.
In addition, $Q(F)$ has only real entries by the definition of $\VR$.  Choose $F\in W\backslash \{0\}$, 
and consider, for all $x\in\R$
\[
\rho(x)=\rho_*+xF \ \mathrm{where} \ 
\rho_{*}=\frac1{d} \bbone_d\in \convAB.
\]
Note that $\Tr \rho(x)=1$ for all $x\in\R$ and that, for all $x\in\R$, one has
$$
\langle \psi|\rho(x)|\psi\rangle \geq \frac1{d}-f_{\mathrm{max}}|x| ,
$$
where $\cket{\psi}$ is any norm-1 vector in $\Hcal$.
Here, $f_{\mathrm{max}}=\max\{|f_i||i\in \IntEnt{1}{d}\}>0$, where the $f_i$ are the eigenvalues of $F$. In particular,
if $|x|\leq\frac1{df_{\mathrm{max}}}$, then $\rho(x)$ is a positive operator of trace $1$. 
We now show that there exist $0< x_+\leq\frac1{d f_{\mathrm{max}}}<+\infty$  so that
\begin{equation}\label{eq:xsmall}
\forall x\in [-x_+,x_+],\quad \rho(x)\in \EcalKDC.
\end{equation}
Since $F\in \VR$, we know $\rho(x)\in\VR$. 
One has, for all $x\in\R$,
$$
Q_{i j}(\rho(x))=\left|\bracket{a_i}{b_j}\right|^2+x Q_{ij}(F)\geq \mab^2-|x|\max_{i,j}|Q_{i j}(F)|.
$$
Taking $x_+=\min\{\frac1{d f_{\mathrm{max}}}, \frac{\mab^2}{\max_{i,j}|Q_{i j}(F)|}\} > 0$, we have Eq.~\eqref{eq:xsmall}.
This implies that (iib) does not hold since for all $x\not=0$, $\rho(x)\not\in\spanRAB$. 
\end{proof}

\subsection{Characterizing $\rho \in \convAB$ }

The following proposition is essential to the proof of Theorem~\ref{thm:Graal}~(iii). 

\begin{Prop}\label{prop:CSC2}
Suppose $\mab > 0$. Then,
$$
\rho\in \convAB
$$
 if and only if $\rho\in\EcalKDC$ and
\begin{equation}\label{eq:supcond2}
\quad \forall (i,j,k,l)\in\IntEnt{1}{d}^4,  \frac{Q_{ij}(\rho)}{\left|\bracket{a_i}{b_j}\right|^2} + \frac{Q_{kl}(\rho)}{\left|\bracket{a_k}{b_l}\right|^2} =  \frac{Q_{il}(\rho)}{\left|\bracket{a_i}{b_l}\right|^2} +\frac{Q_{kj}(\rho)}{\left|\bracket{a_k}{b_j}\right|^2}.
\end{equation}
\end{Prop}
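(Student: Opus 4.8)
The plan is to read condition~\eqref{eq:supcond2} as an \emph{additivity} (``separability'') constraint on a rescaled version of the KD symbol, and then to transport that structure from $Q(\rho)$ back to $\rho$ itself using informational completeness (Eq.~\eqref{eq:infcomplete}), finishing with Lemma~\ref{lem:G1}. Since $\mab>0$, I set $R_{ij}=Q_{ij}(\rho)/|\langle a_i|b_j\rangle|^2$, which is well defined. Condition~\eqref{eq:supcond2} then reads $R_{ij}+R_{kl}=R_{il}+R_{kj}$ for all indices. The first step is to observe that this is equivalent to the existence of reals $(\lambda_i)_i,(\mu_j)_j$ with $R_{ij}=\lambda_i+\mu_j$: the converse is immediate, while specializing the constraint to $k=l=1$ gives $R_{ij}=R_{i1}+R_{1j}-R_{11}$, so one may take $\lambda_i=R_{i1}-R_{11}$ and $\mu_j=R_{1j}$.

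For the forward implication, if $\rho\in\convAB$ then $\rho=\sum_i\lambda_i\cket{a_i}\bra{a_i}+\sum_j\mu_j\cket{b_j}\bra{b_j}$ with $\lambda_i,\mu_j\geq0$. The identity $Q_{ij}(\rho)=|\langle a_i|b_j\rangle|^2(\lambda_i+\mu_j)$, already derived in the proof of Lemma~\ref{lem:G1}, gives $R_{ij}=\lambda_i+\mu_j$, which is of additive form and is nonnegative; hence $\rho\in\EcalKDC$ and~\eqref{eq:supcond2} both hold. For the converse, I assume $\rho\in\EcalKDC$ and~\eqref{eq:supcond2}. By the first step, $R_{ij}=\lambda_i+\mu_j$ for suitable reals, i.e. $Q_{ij}(\rho)=|\langle a_i|b_j\rangle|^2(\lambda_i+\mu_j)$. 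I then introduce the self-adjoint operator $F=\sum_i\lambda_i\cket{a_i}\bra{a_i}+\sum_j\mu_j\cket{b_j}\bra{b_j}\in\spanRAB$, whose KD symbol, by the same identity and linearity of $Q$, satisfies $Q(F)=Q(\rho)$. Because $\mab>0$, the map $Q$ is injective, so $\rho=F\in\spanRAB$. Combined with the hypothesis $\rho\in\EcalKDC$, the second statement of Lemma~\ref{lem:G1} yields $\rho\in\EcalKDC\cap\spanRAB=\convAB$.

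The only genuinely non-routine step is the converse: condition~\eqref{eq:supcond2} is an algebraic constraint living at the level of the KD symbol, whereas the desired conclusion $\rho\in\convAB$ is a statement about the operator $\rho$ itself. The bridge is informational completeness --- the injectivity of $Q$ when $\mab>0$ --- which upgrades the additive structure of $Q(\rho)$ into the membership $\rho\in\spanRAB$; from there Lemma~\ref{lem:G1} does all of the remaining work. I expect this injectivity argument, rather than the elementary reformulation of~\eqref{eq:supcond2} as the additive condition $R_{ij}=\lambda_i+\mu_j$, to be the load-bearing part of the proof.
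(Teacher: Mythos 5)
Your proof is correct and follows essentially the same route as the paper's: both reduce condition~\eqref{eq:supcond2} to the additive form $Q_{ij}(\rho)=|\bracket{a_i}{b_j}|^2(\lambda_i+\mu_j)$, build the corresponding operator in $\spanRAB$ with the same KD symbol, and invoke informational completeness (injectivity of $Q$ for $\mab>0$) to identify it with $\rho$. The only organizational difference is that you delegate the final positivity bookkeeping to the second statement of Lemma~\ref{lem:G1}, whereas the paper verifies directly that its constructed operator has nonnegative coefficients and unit trace, hence lies in $\convAB$.
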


\begin{proof}
We first show the reverse implication. 
Let $\rho \in \EcalKDC$ and satisfying~Eq.~\eqref{eq:supcond2}. We construct a state $\rho_2 \in \convAB$ such that $Q(\rho_2) = Q(\rho)$. Since $\mab>0$, we know from~Eq.~\eqref{eq:infcomplete} that the KD distribution determines the state, such that $\rho_2 = \rho$.

Note that the basis states have the following KD distribution: 
\[
\forall (i,j,k)\in\IntEnt{1}{d}^3, Q_{ij}(\cket{a_k}\bra{a_k}) = \left|\bracket{a_k}{b_j}\right|^2\delta_{i,k} \text{ and } Q_{ij}(\cket{b_k}\bra{b_k}) = \left|\bracket{a_i}{b_k}\right|^2 \delta_{j,k}.
\]
By permuting the order of the vectors in $\mcl{B}$, we can suppose that $\frac{Q_{11}(\rho)}{\left|\bracket{a_1}{b_1}\right|^2} =  \min_{j\in\IntEnt{1}{d}} \frac{Q_{1j}(\rho)}{\left|\bracket{a_1}{b_j}\right|^2}$. We define 
\[
\rho_{2} = \sum_{i=1}^{d}\lambda_{i}\cket{a_i}\bra{a_i} + \sum_{j=1}^{d}\mu_{j}\cket{b_j}\bra{b_j}
\]
where $\lambda_{i} = \frac{Q_{i1}(\rho)}{\left|\bracket{a_i}{b_1}\right|^2}$ for all $i\in\IntEnt{1}{d}$ and $\mu_{j} = \frac{Q_{1j}(\rho)}{\left|\bracket{a_1}{b_j}\right|^2}-\frac{Q_{11}(\rho)}{\left|\bracket{a_1}{b_1}\right|^2}$ for $j\in\IntEnt{1}{d}$ so that $\mu_{1}=0$.
Since $\rho$ is KD positive, $\lambda_{i}\geqslant 0$ and $\mu_{i}\geqslant 0$ for all $i\in\IntEnt{1}{d}$. Moreover, using Eq.~\eqref{eq:supcond2}, one has that
\[
\begin{array}{rcl}
\displaystyle \Tr \rho_2=\sum_{j=1}^{d} \mu_{j} +\sum_{i=1}^{d} \lambda_{i} &=&  \sum_{j=1}^{d} \sum_{i=1}^{d} \left|\bracket{a_i}{b_j}\right|^2\mu_{j} + \sum_{i=1}^{d} \sum_{j=1}^{d} \left|\bracket{a_i}{b_j}\right|^2 \lambda_{i}  \\
&=& \displaystyle \sum_{j=1}^{d} \sum_{i=1}^{d} \left|\bracket{a_i}{b_j}\right|^2\left(\frac{Q_{i1}(\rho)}{\left|\bracket{a_i}{b_1}\right|^2} + \frac{Q_{1j}(\rho)}{\left|\bracket{a_1}{b_j}\right|^2}-\frac{Q_{11}(\rho)}{\left|\bracket{a_1}{b_1}\right|^2} \right) \\
&=& \displaystyle \sum_{j=1}^{d} \sum_{i=1}^{d} Q_{ij}(\rho) = 1,
\end{array} 
\]
so that $\rho_2\in\convAB$.  
Using Eq.~\eqref{eq:supcond2} again, we find $\forall (i,j)\in\IntEnt{1}{d}^2,$
\[
 \begin{array}{rcl}
 Q_{ij}(\rho_{2}) &=& \left|\bracket{a_i}{b_j}\right|^2\left(\lambda_{i}+\mu_{j}\right) \\
 & = &  \left|\bracket{a_i}{b_j}\right|^2\left( \frac{Q_{i1}(\rho)}{\left|\bracket{a_i}{b_1}\right|^2}+ \frac{Q_{1j}(\rho)}{\left|\bracket{a_1}{b_j}\right|^2}-\frac{Q_{11}(\rho)}{\left|\bracket{a_1}{b_1}\right|^2}\right) \\
 &=& Q_{ij}(\rho).
 \end{array}
\]
This shows that $\rho_2=\rho$ so that $\rho\in\convAB$. 

For the proof of the direct implication, we note that if $\rho\in\convAB$, then $\rho=\sum_{i=1}^{d}\lambda_{i}\cket{a_i}\bra{a_i} + \sum_{j=1}^{d}\mu_{j}\cket{b_j}\bra{b_j}$ with $\sum_{i=1}^{d} \lambda_{i}+\mu_{i} = 1$, $\lambda_{i}\geqslant 0$ and $\mu_{i}\geqslant 0$ for all $i\in\IntEnt{1}{d}$. The KD distribution of $\rho$ is given by
\[
\forall (i,j)\in\IntEnt{1}{d}^2, Q_{ij}(\rho) = \left|\bracket{a_i}{b_j}\right|^2\left(\lambda_{i}+\mu_{j}\right).
\]
Hence, $\rho$ is KD-positive and  for all $ (i,j)\in\IntEnt{1}{d}^2$, 
\[
\frac{Q_{11}(\rho)}{\left|\bracket{a_1}{b_1}\right|^2} + \frac{Q_{ij}(\rho)}{\left|\bracket{a_i}{b_j}\right|^2} = \left(\lambda_{1}+\mu_{1}\right) +\left(\lambda_{i}+\mu_{j}\right) = \lambda_{1}+\mu_{1} + \lambda_{i}+\mu_{j} ,
\] 
\[
\frac{Q_{1j}(\rho)}{\left|\bracket{a_1}{b_j}\right|^2} + \frac{Q_{i1}(\rho)}{\left|\bracket{a_i}{b_1}\right|^2} = \left(\lambda_{1}+\mu_{j}\right) +\left(\lambda_{i}+\mu_{1}\right) = \lambda_{1}+\mu_{1} + \lambda_{i}+\mu_{j} .
\]
This implies Eq.~\eqref{eq:supcond2} with $k=1=l$. For the general case,
we write
\[
\frac{Q_{ij}(\rho)}{\left|\bracket{a_i}{b_j}\right|^2} + \frac{Q_{kl}(\rho)}{\left|\bracket{a_k}{b_l}\right|^2} =  \frac{Q_{i1}(\rho)}{\left|\bracket{a_i}{b_1}\right|^2} +\frac{Q_{1j}(\rho)}{\left|\bracket{a_1}{b_j}\right|^2} -  \frac{Q_{11}(\rho)}{\left|\bracket{a_1}{b_1}\right|^2} +\frac{Q_{k1}(\rho)}{\left|\bracket{a_k}{b_1}\right|^2} +\frac{Q_{1l}(\rho)}{\left|\bracket{a_1}{b_l}\right|^2} -  \frac{Q_{11}(\rho)}{\left|\bracket{a_1}{b_1}\right|^2}
\]
and
\[
\frac{Q_{il}(\rho)}{\left|\bracket{a_i}{b_l}\right|^2} + \frac{Q_{kj}(\rho)}{\left|\bracket{a_k}{b_j}\right|^2} =  \frac{Q_{i1}(\rho)}{\left|\bracket{a_i}{b_1}\right|^2} +\frac{Q_{1l}(\rho)}{\left|\bracket{a_1}{b_l}\right|^2} -  \frac{Q_{11}(\rho)}{\left|\bracket{a_1}{b_1}\right|^2} +\frac{Q_{k1}(\rho)}{\left|\bracket{a_k}{b_1}\right|^2} +\frac{Q_{1j}(\rho)}{\left|\bracket{a_1}{b_j}\right|^2} -  \frac{Q_{11}(\rho)}{\left|\bracket{a_1}{b_1}\right|^2}.
\]
The right hand sides of these two equations are identical up to a reorganization of the terms, so
\[
\frac{Q_{ij}(\rho)}{\left|\bracket{a_i}{b_j}\right|^2} + \frac{Q_{kl}(\rho)}{\left|\bracket{a_k}{b_l}\right|^2} = \frac{Q_{il}(\rho)}{\left|\bracket{a_i}{b_l}\right|^2} + \frac{Q_{kj}(\rho)}{\left|\bracket{a_k}{b_j}\right|^2}.
\]
This ends {our} proof.
\end{proof}

The relations~\eqref{eq:supcond2} are simpler for MUB bases and are given in the following corollary. 
\begin{Cor}\label{cor:CSC1}
Let $\mcl{A}$ and $\mcl{B}$ be MUB bases.
Then
$$
\rho\in \convAB
$$ 
if and only if $\rho\in\EcalKDC$ and 
\begin{equation}\label{eq:supcond4}
\quad \forall (i,j,k,l)\in\IntEnt{1}{d}^4, Q_{ij}(\rho) +Q_{kl}(\rho) = Q_{il}(\rho) +Q_{kj}(\rho).
\end{equation}
\end{Cor}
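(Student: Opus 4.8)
The plan is to derive this directly from Proposition~\ref{prop:CSC2} by exploiting the defining property of MUB bases. Recall that $\mcl{A}$ and $\mcl{B}$ are mutually unbiased precisely when $|\langle a_i|b_j\rangle| = 1/\sqrt{d}$ for every pair $(i,j)$, so that $|\langle a_i|b_j\rangle|^2 = 1/d$ is a constant independent of $i$ and $j$. In particular $\mab = 1/\sqrt{d} > 0$, so the hypothesis of Proposition~\ref{prop:CSC2} is automatically satisfied and the proposition applies verbatim.

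First I would invoke Proposition~\ref{prop:CSC2}: one has $\rho \in \convAB$ if and only if $\rho \in \EcalKDC$ and condition~\eqref{eq:supcond2} holds for all $(i,j,k,l) \in \IntEnt{1}{d}^4$. It then remains only to show that, under the MUB hypothesis, condition~\eqref{eq:supcond2} is equivalent to condition~\eqref{eq:supcond4}. Substituting $|\langle a_i|b_j\rangle|^2 = 1/d$ into each of the four fractions in~\eqref{eq:supcond2} turns every denominator into $1/d$, so that~\eqref{eq:supcond2} reads $d\, Q_{ij}(\rho) + d\, Q_{kl}(\rho) = d\, Q_{il}(\rho) + d\, Q_{kj}(\rho)$. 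Dividing both sides by $d$ yields exactly~\eqref{eq:supcond4}, and since every step is reversible, the two conditions coincide for each fixed $\rho$.

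There is no genuine obstacle here: the corollary is a direct specialization of the preceding proposition, and the only content is the observation that the weights $1/|\langle a_i|b_j\rangle|^2$ appearing in~\eqref{eq:supcond2} collapse to the single common value $d$ when the bases are mutually unbiased. The one point worth stating explicitly, to keep the logical chain clean, is that the equivalence between~\eqref{eq:supcond2} and~\eqref{eq:supcond4} is an equivalence of conditions for a fixed $\rho$, so that both directions of the ``if and only if'' in Proposition~\ref{prop:CSC2} transfer intact to the MUB statement.
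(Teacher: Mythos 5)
Your proof is correct and is exactly the paper's argument: the paper simply states that the corollary is a direct consequence of Proposition~\ref{prop:CSC2}, and your observation that all denominators $|\langle a_i|b_j\rangle|^2$ collapse to the common value $1/d$ under the MUB hypothesis (so that Eq.~\eqref{eq:supcond2} and Eq.~\eqref{eq:supcond4} coincide for each fixed $\rho$) is precisely the intended justification.
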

\begin{proof}
This is a direct consequence of Proposition~\ref{prop:CSC2}. 
\end{proof}

\section{Proofs of Theorem~\ref{thm:Graal} and of Proposition~\ref{prop:pure_char}.}\label{s:graalproof}
For convenience, we restate {our} theorem:
\begin{thmIntro}
The equality
\begin{equation}\tag{\ref{eq:Graalagain}}
\EcalKDC=\convAB,
\end{equation}
holds under any single one of the following hypotheses:
\begin{enumerate}[label=(\roman*),wide, labelindent=0pt]
    \item If $d=2$ (for qubits) and $\mab>0$;
    \item If $d=3$, for all $U$ in a open dense set of probability $1$;
    \item If $d$ is prime and $U$ is the discrete Fourier transform (DFT) matrix;
    \item If $U$ is sufficiently close to some other $U^{\prime}$ for which Eq.~\eqref{eq:Graalagain} holds.
\end{enumerate}
\end{thmIntro}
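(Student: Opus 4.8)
\emph{Overall strategy.} All four parts assert the single identity $\EcalKDC=\convAB$, and Proposition~\ref{prop:TSKD2bis} lets me trade this geometric statement for the purely linear‑algebraic condition (ib), namely $\dim\VR=2d-1$. Since $\VR=\Ker(\mathrm{Im}\,Q)$ and $\dim\SAO=d^2$, this is equivalent to $\mathrm{rank}(\mathrm{Im}\,Q)=(d-1)^2$, the \emph{maximal} value permitted by the a priori bound $\dim\VR\geq 2d-1$ of Eq.~\eqref{eq:dimVRbounds}. The plan is therefore, in each case, to show that the imaginary part of the KD symbol map attains maximal rank. Two facts are used throughout: the marginal identities~\eqref{eq:Qmarginals}, which force every matrix in $\Ran{\mathrm{Im}\,Q}$ to have vanishing row and column sums; and the observation that the entries of the matrix of $\mathrm{Im}\,Q$ are polynomial in the entries of $U$ and $\overline U$.

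\emph{Parts (i) and (iv).} For $d=2$ the vanishing of row and column sums pins every element of $\Ran{\mathrm{Im}\,Q}$ to a multiple of $\bigl(\begin{smallmatrix}1&-1\\-1&1\end{smallmatrix}\bigr)$, so $\mathrm{rank}(\mathrm{Im}\,Q)\leq 1$. To see that the rank is exactly $1$ I would evaluate $\mathrm{Im}\,Q$ on the two self‑adjoint operators $|a_1\rangle\langle a_2|+|a_2\rangle\langle a_1|$ and $i(|a_1\rangle\langle a_2|-|a_2\rangle\langle a_1|)$: their $(1,1)$ KD entries are $U_{21}\overline{U_{11}}$ and $iU_{21}\overline{U_{11}}$, and since $\mab>0$ forces $U_{21}\overline{U_{11}}\neq 0$, at least one of these is non‑real. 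Hence $\mathrm{rank}(\mathrm{Im}\,Q)=1=(d-1)^2$ and (i) follows from Proposition~\ref{prop:TSKD2bis}. Part (iv) is then immediate: $\mathrm{rank}(\mathrm{Im}\,Q)$ is lower semicontinuous in $U$ and bounded above by $(d-1)^2$, so the locus where it attains this maximum is open, and on it $\dim\VR=2d-1$ holds throughout a neighbourhood of any witness $U'$.

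\emph{Part (iii), the crux.} Here I would compute $\VR$ by hand. Writing $|a_i\rangle=|i\rangle$ and $|b_j\rangle=\frac1{\sqrt d}\sum_k\omega^{jk}|k\rangle$ with $\omega=e^{2\pi i/d}$, a short calculation gives, for $F_{ik}=\langle a_i|F|a_k\rangle$,
\[
Q_{ij}(F)=\frac1d\sum_{m\in\Z_d}\omega^{jm}F_{i,i+m}\qquad(\text{indices mod }d).
\]
For fixed $i$ the numbers $(Q_{ij}(F))_j$ are the discrete Fourier transform of the superdiagonal $(F_{i,i+m})_m$, so $Q_{ij}(F)\in\R$ for all $j$ is equivalent to $F_{i,i+m}=\overline{F_{i,i-m}}=F_{i-m,i}$, the last equality using self‑adjointness. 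Read along a fixed superdiagonal $m$, this says that $i\mapsto F_{i,i+m}$ is invariant under the shift $i\mapsto i-m$ of $\Z_d$. When $d$ is prime and $m\neq 0$ this shift acts with a single orbit, so each superdiagonal is constant; with the free real diagonal ($d$ parameters) and the $d-1$ off‑diagonal constants $c_m$ constrained by $c_{-m}=\overline{c_m}$ ($d-1$ real parameters), this gives exactly $2d-1$. Hence $\dim\VR=2d-1$, and since $\spanRAB\subseteq\VR$ has the same dimension by Lemma~\ref{lem:G1}, the two spaces coincide; Proposition~\ref{prop:TSKD2bis} concludes. (Equivalently, as the DFT bases are mutually unbiased, one may invoke Corollary~\ref{cor:CSC1} and verify that KD‑positivity forces the additive relation~\eqref{eq:supcond4}.) I expect this to be the main obstacle: it is the only place where a genuine arithmetic input enters, and it is sharp—for composite $d$ some shift $i\mapsto i-m$ has several orbits, the superdiagonals need not be constant, $\dim\VR>2d-1$, and Eq.~\eqref{eq:Graal} genuinely fails.

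\emph{Part (ii).} For $d=3$ I would combine the openness from Part (iv) with an analyticity argument. The bad locus $\{U:\mathrm{rank}(\mathrm{Im}\,Q)<(d-1)^2\}$ is the common zero set of the $(d-1)^2\times(d-1)^2$ minors of the matrix of $\mathrm{Im}\,Q$, which are real‑analytic functions of the entries of $U,\overline U$ on the connected manifold of unitaries. It therefore suffices to exhibit a single witness at which the maximal rank is attained; for $d=3$ this is furnished by Part (iii), since $3$ is prime and the DFT satisfies $\dim\VR=2d-1$. Consequently at least one minor is not identically zero, its zero set is closed, nowhere dense and of measure zero, and the complementary good set is open, dense and of full measure. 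I note that this reasoning applies verbatim in every prime dimension (with the DFT as witness); the obstruction to the conjectured statement in all dimensions is precisely the absence of a known witness in composite dimensions, where the DFT itself does not qualify.
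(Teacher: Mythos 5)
Your proposal is correct, and while it shares the paper's master strategy---reducing Eq.~\eqref{eq:Graalagain} in every case to the linear-algebraic condition $\dim\VR=2d-1$ of Proposition~\ref{prop:TSKD2bis}---it departs from the paper's own proof in two substantive places. For part (iii), the paper establishes Lemma~\ref{lem:DFTLem} by a longer explicit manipulation of the sums (treating rows $i\geq 2$ first and extending to $i=1$ separately) and then concludes in the main text via Corollary~\ref{cor:CSC1}; your Fourier-inversion/orbit argument reaches the same characterization of $\VR$ (constancy of the off-diagonals, with primality entering exactly through the transitivity of the shift $i\mapsto i-m$ on $\Z/d\Z$) more directly, and your count $d+(d-1)=2d-1$ is precisely the paper's ``Remark'' giving the alternative proof. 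For part (ii) the divergence is genuine: the paper exhibits an explicit open set $\mathcal W$ of unitaries defined by six phase inequalities and verifies by hand that a specific $4\times4$ submatrix of the matrix of $\mathrm{Im}\,Q$ is invertible there, whereas you observe that the maximal minors are real-analytic on the connected manifold of unitaries and use the $d=3$ DFT (available from part (iii)) as a witness of maximal rank. Your route buys generality---it proves the open/dense/full-measure statement in \emph{every} prime dimension, a nontrivial piece of the paper's conjecture---at the price of being non-constructive, while the paper's $\mathcal W$ is explicit. Parts (i) and (iv) are essentially the paper's arguments in different clothing (the paper reduces $d=2$ to real $U$ and shows $F$ must be real symmetric; your lower semicontinuity of rank is the content of Proposition~\ref{prop:stability1}). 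One small point you should make explicit: Proposition~\ref{prop:TSKD2bis} and Lemma~\ref{lem:G1} require $\mab>0$, so in parts (ii) and (iv) you must intersect your maximal-rank locus with the open, dense, full-measure set where $\mab>0$ (the paper builds this into the definition of $\mathcal W$); this costs nothing but needs to be said.
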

\noindent\emph{Proof of Theorem~\ref{thm:Graal}~(i).}
We can, without loss of generality, suppose that the transition matrix $U$ is a real matrix by executing appropriate phase changes on the basis vectors. 
If  $U$ has no zeros ($\mab>0$), we can therefore write $U=\begin{pmatrix} \cos(\theta) & \sin(\theta) \\ -\sin(\theta) & \cos(\theta) \end{pmatrix}$ for $\theta\in\R\backslash\frac{\pi}{2}\Z$. To find the dimension of the space of KD-real operators, we consider $F\in\VR$ and write
\begin{eqnarray*}
Q_{11}(F) &=& \bracket{b_1}{a_1}\bra{a_1}F\cket{b_1} =  \bracket{b_1}{a_1}\bracket{a_1}{b_1}F_{11} +  \bracket{b_1}{a_1}\bracket{a_2}{b_1}F_{12}\\
&=&   \cos(\theta)^2F_{11} +  \cos(\theta)\sin(\theta)F_{12}\in\R,
\end{eqnarray*}
with $F_{ij}=\langle a_i|F|a_j\rangle$.
Since, by hypothesis, $F$ is self-adjoint and $\Im{Q_{1 1}(F)}=0$, one finds  $\Im{F_{12}}=0$ so $F_{12} = F_{21}$.
%\sout{This means that $\dim(\Ran{\Im Q})=1$  so that $\dim\VR=\dim(\Ker ( \Im Q ))=3$.} \red{For Christopher. It seems to me this argument is upside down. What precedes shows the kernel of ImQ is three dimensional, which is what we need. That the Ran(ImQ) is one dimensional is true, but of no interest. I rewrote things below, please check. And throw out the comment if you agree.} 
Hence $F\in \VR$ implies that $F$ is real symmetric. Conversely, one {can check} that for any real symmetric $F$, $Q(F)$ is a real matrix. Consequently, $\dim(\VR)=3$. The result then follows from Proposition~\ref{prop:TSKD2bis}.

\noindent \emph{Proof of Theorem~\ref{thm:Graal}~(ii).}
This result is restated more explicitly  in the following proposition. 

\begin{Prop}
In dimension $d=3$, there exists a set $\mathcal W$ of unitary matrices such that:
\begin{itemize}
    \item $\forall U \in \mathcal W$, $\EcalKDC = \convAB$; 
    \item $\mathcal{W}$ is an open and dense subset of the set of unitary matrices;
    \item $\mathcal W$ is a set of probability one for the Haar measure on the unitary group.
\end{itemize}
\end{Prop}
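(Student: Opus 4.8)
The plan is to translate the desired equality into a rank condition on a single real-linear map and then run a genericity argument on the real-analytic manifold $U(3)$. By Proposition~\ref{prop:TSKD2bis}, for a unitary $U$ with $\mab>0$ the identity $\EcalKDC=\convAB$ is equivalent to $\dim\VR=2d-1=5$. Since $\VR=\Ker(\mathrm{Im}\,Q)$ and $\dim\SAO=d^2=9$, rank–nullity gives $\dim\VR=9-\mathrm{rank}(\mathrm{Im}\,Q)$, and the lower bound in~\eqref{eq:dimVRbounds} (valid because $\mab>0$) forces $\mathrm{rank}(\mathrm{Im}\,Q)\le 4$. Hence the target condition $\dim\VR=5$ is exactly the statement that $\mathrm{Im}\,Q$ attains its maximal possible rank $4$. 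I would therefore set
$$
\mathcal W=\{\,U\in U(3)\;:\;\mab>0\ \text{and}\ \mathrm{rank}(\mathrm{Im}\,Q)=4\,\}
$$
and prove that this set is open, dense and of full Haar measure.

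For openness, fix real-linear bases of $\SAO$ and $\MatR{d}$ and represent $\mathrm{Im}\,Q$ by a $9\times 9$ real matrix whose entries are the imaginary parts of the products $\langle a_i|F|b_j\rangle\langle b_j|a_i\rangle$; these are polynomials in the real and imaginary parts of the $U_{ij}$ and hence real-analytic functions on the compact connected real-analytic manifold $U(3)$. Likewise $U\mapsto\mab$ is continuous, so $\{\mab>0\}$ is open. Since the rank of a matrix is lower semicontinuous, $\{\mathrm{rank}(\mathrm{Im}\,Q)\ge 4\}$ is open; on $\{\mab>0\}$ one has $\mathrm{rank}\le 4$, so $\mathcal W=\{\mab>0\}\cap\{\mathrm{rank}\ge 4\}$ is open and there the rank equals $4$ exactly.

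For density and full measure I would study the complement: one has $\mathcal W^{c}\subseteq\{\mab=0\}\cup\{\mathrm{rank}(\mathrm{Im}\,Q)<4\}$. The first set is the union of the zero sets of the real-analytic functions $U\mapsto|U_{ij}|^2$, and the second is the common zero set of all $4\times 4$ minors of the matrix of $\mathrm{Im}\,Q$, again real-analytic on $U(3)$. On a connected real-analytic manifold, the zero set of a real-analytic function that is not identically zero has empty interior (by the identity theorem) and Lebesgue (hence Haar) measure zero. Thus it suffices to exhibit a single $U$ at which $\mab>0$ and some $4\times 4$ minor is nonzero, i.e. at which $\dim\VR=5$; then $\mathcal W^{c}$ is contained in a finite union of proper analytic subvarieties, whence $\mathcal W$ is dense and of full measure, and the conclusion $\EcalKDC=\convAB$ on $\mathcal W$ follows from Proposition~\ref{prop:TSKD2bis}.

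The one genuinely concrete step, and the main obstacle, is producing this witness. The hard part will be verifying that the required $4\times 4$ minor does not vanish identically; everything else is soft semicontinuity and real-analyticity. I would take $U$ to be the $3\times 3$ discrete Fourier transform matrix, $U_{jk}=\tfrac1{\sqrt3}\,\omega^{(j-1)(k-1)}$ with $\omega=e^{2\pi i/3}$, which has no zero entries so that $\mab>0$; since $d=3$ is prime, Theorem~\ref{thm:Graal}~(iii) yields $\EcalKDC=\convAB$ and hence, by Proposition~\ref{prop:TSKD2bis}, $\dim\VR=5$. To keep part~(ii) logically independent of part~(iii), one can instead compute the $9\times 9$ matrix of $\mathrm{Im}\,Q$ for this explicit $U$ and check by a single finite determinant evaluation that its rank is $4$. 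Either route supplies the witness and completes the argument.
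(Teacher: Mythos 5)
Your proof is correct, but it takes a genuinely different route from the paper's. The paper defines $\mathcal W$ \emph{explicitly}, by six inequalities on the phases $\phi_{kj}$ of the entries of $U$ (its Eq.~\eqref{cond:rk4}), and then verifies by hand that a particular $4\times4$ submatrix $\Sigma$ of the matrix of $\mathrm{Im}\,Q$ is invertible under those conditions; density is obtained by an explicit phase perturbation of two columns, and full measure by observing that the complement is a union of lower-dimensional submanifolds. You instead define $\mathcal W$ abstractly as the locus where $\mathrm{Im}\,Q$ attains the maximal rank $4$ permitted by Eq.~\eqref{eq:dimVRbounds}, get openness for free from lower semicontinuity of the rank, and reduce density and full measure to the existence of a \emph{single} witness via the identity theorem for real-analytic functions on the connected manifold $U(3)$ (the $4\times4$ minors and the $|U_{ij}|^2$ being polynomial in $\mathrm{Re}\,U_{ij},\mathrm{Im}\,U_{ij}$). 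Your witness, the $d=3$ DFT matrix, is legitimate and non-circular, since Theorem~\ref{thm:Graal}~(iii) (equivalently, the remark after Lemma~\ref{lem:DFTLem}, which gives $\dim\VR=2d-1$ directly) is proved independently of part~(ii); your fallback of a single finite rank computation also works. What each approach buys: the paper's yields concrete, checkable sufficient conditions on $U$, whereas yours is shorter, sidesteps the need to verify that the exceptional sets are genuinely submanifolds, and--more significantly--scales verbatim to any dimension in which a maximal-rank witness exists. In particular, combining your argument with the DFT witness in prime dimension $p$ would establish the open/dense/full-measure statement in every prime dimension, which goes beyond what the paper proves and partially settles its conjecture; you may wish to state that explicitly. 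The only polish I would add is a reference for the standard fact that the zero set of a not-identically-zero real-analytic function on a connected real-analytic manifold has empty interior and Lebesgue measure zero.
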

\begin{proof}
For any unitary matrix $U$ with $\mab > 0$, we write $U = (A_{kj}e^{i\phi_{kj}})_{(k,j)\in\IntEnt{1}{3}^2}$, with $A_{kj} > 0$. We define $\mathcal W$ to be the set of unitary matrices in dimension 3 for which $\mab > 0$ and the following conditions are fulfilled: 
\begin{equation}\label{cond:rk4}
\left\{\begin{array}{rclc}
\phi_{21}-\phi_{11}& \neq &0& [\frac{\pi}{2}] \\
\phi_{22}-\phi_{12} &\neq& 0& [\frac{\pi}{2}] \\
\phi_{31}-\phi_{11}& \neq& 0& [\frac{\pi}{2}] \\
\phi_{32}-\phi_{12} &\neq &0& [\frac{\pi}{2}] \\
\phi_{21}-\phi_{11} &\neq &\phi_{22}-\phi_{12}& [\pi] \\
\phi_{31}-\phi_{11} &\neq &\phi_{32}-\phi_{12} & \ [\pi].
\end{array}\right. 
\end{equation}
 Let $U\in\mathcal W$. We want to show that $\EcalKDC = \convAB$. According to Proposition~\ref{prop:TSKD2bis}, it is sufficient to show that $\dim(\Ran{\Im{Q}})=4$. (Here and below, $\Ran{T}$ stands for the range of  a linear map $T$.) For that purpose, we shall consider the $9\times 9$ matrix $T$ of the linear map $\mathrm{Im}Q: \SAO   \rightarrow \MatR{d} $ with respect to the basis 
$$
\left\{\cket{a_k}\bra{a_k}, (\cket{a_k}\bra{a_j}+\cket{a_j}\bra{a_k}), i(\cket{a_k}\bra{a_j}-\cket{a_j}\bra{a_k})\right\}_{k\in\IntEnt{1}{3},k< j}
$$ 
of $\SAO$ and the canonical basis of $\MatR{d}$. The matrix $T$ can be readily computed but we do not display it here.  Note that, by Eq.~\eqref{eq:dimVRbounds}, $\dim(\Ker{(\Im{Q}})) \geqslant 5$, so that $\dim{(\Ran{\Im{Q}})} = 9 - \dim(\Ker{(\Im{Q}})) \leqslant 4$. Equality is obtained, \emph{i.e.} $\dim{(\Ran{\Im{Q}})}= 4$, if and only if  there exists a $4$ by $4$ submatrix $\Sigma$ of $T$ that has rank $4$. We will show that the submatrix $\Sigma$, given by 
\[
\Sigma = \begin{psmallmatrix}
A_{11}A_{21}\sin{(\phi_{21}-\phi_{11})}&A_{11}A_{21}\cos{(\phi_{21}-\phi_{11})}&A_{11}A_{31}\sin{(\phi_{31}-\phi_{11})}&A_{11}A_{31}\cos{(\phi_{31}-\phi_{11})}\\
A_{12}A_{22}\sin{(\phi_{22}-\phi_{12})}&A_{12}A_{22}\cos{(\phi_{22}-\phi_{12})}&A_{12}A_{32}\sin{(\phi_{32}-\phi_{12})}&A_{12}A_{32}\cos{(\phi_{32}-\phi_{12})}\\
-A_{11}A_{21}\sin{(\phi_{21}-\phi_{11})}&-A_{11}A_{21}\cos{(\phi_{21}-\phi_{11})}&0&0\\
-A_{12}A_{22}\sin{(\phi_{22}-\phi_{12})}&-A_{12}A_{22}\cos{(\phi_{22}-\phi_{12})}&0&0
\end{psmallmatrix},
\]
is indeed of rank $4$. To prove this, suppose there exists $(a_1,a_2,a_3,a_4)\in\R^{4}$ such that $(a_1,a_2,a_3,a_4)\in\Ker{\Sigma}$. Then, 
\[
\left\{\begin{array}{rcl}
A_{21}(a_1\sin{(\phi_{21}-\phi_{11})}+a_2\cos{(\phi_{21}-\phi_{11})})&=&-A_{31}(a_3\sin{(\phi_{31}-\phi_{11})}+a_4\cos{(\phi_{31}-\phi_{11})})\\
A_{22}(a_1\sin{(\phi_{22}-\phi_{12})}+a_2\cos{(\phi_{22}-\phi_{12})})&=& -A_{32}(a_3\sin{(\phi_{32}-\phi_{12})}+a_4\cos{(\phi_{32}-\phi_{12})})\\
a_1\sin{(\phi_{21}-\phi_{11})}+a_2\cos{(\phi_{21}-\phi_{11})} &=& 0\\
a_1\sin{(\phi_{22}-\phi_{12})}+a_2\cos{(\phi_{22}-\phi_{12})}&=&0.
\end{array}\right. 
\]
The last two rows simplify to
\[
\left\{\begin{array}{rcl}
a_1\tan{(\phi_{21}-\phi_{11})}&=& -a_2,\\
a_1\tan{(\phi_{22}-\phi_{12})}&=& -a_2.\\
\end{array}\right. 
\]
If $a_1\neq 0$, then $\tan{(\phi_{21}-\phi_{11})}=\tan{(\phi_{22}-\phi_{12})}$, which contradicts the condition $\phi_{21}-\phi_{11} \neq \phi_{22}-\phi_{12} [\pi]$. So $a_1=a_2=0$.
Consequently, the first two conditions reduce to
\[
\left\{\begin{array}{rcl}
a_3A_{11}A_{31}\sin{(\phi_{31}-\phi_{11})}+a_4A_{31}A_{11}\cos{(\phi_{31}-\phi_{11})}=0,\\
a_3A_{12}A_{32}\sin{(\phi_{32}-\phi_{12})}+a_4A_{32}A_{12}\cos{(\phi_{32}-\phi_{12})}=0.\\
\end{array}\right. 
\]
Following the same argument, we find that $a_3=a_4=0$.  
Consequently, the matrix $\Sigma$ has a vanishing kernel and is therefore of rank $4$. 
In conclusion, for any unitary matrix in $\mathcal W$ it is true that $\dim{(\Ran{\Im{Q}})}= 4$, and hence $\dim{\Ker{(\Im{Q}})}=5$. This concludes the proof of the first part of the Proposition.

The set $\mathcal W$ is clearly open.
We now show that it is dense {also}. For that purpose, consider an arbitrary unitary matrix $U$. Suppose it does not belong to $\mathcal W$ so that at least one  of the six conditions in~Eq.~\eqref{cond:rk4} is not satisfied for $U$. We write $C_1, C_2, C_3$ for the columns of $U$ and remark that $C_{3} = \epsilon C_{1}\wedge C_{2}$ with $\epsilon\in \{-1,1\}$; here $\wedge$ {denotes}  the vector product. We then construct, for $\theta\in\R$, the two columns
$$
C_{1}(\theta)=\begin{pmatrix} A_{11}e^{i(\phi_{11}+\theta)} \\ A_{21}e^{i\phi_{21}} \\ A_{31}e^{i\phi_{31}}\end{pmatrix}, \quad C_{2}(\theta)=\begin{pmatrix} A_{12}e^{i(\phi_{12}-\theta)} \\ A_{22}e^{i\phi_{22}} \\ A_{32}e^{i\phi_{32}}\end{pmatrix}.
$$
They are orthogonal to each other and normalized. Defining $C_{3}(\theta) = \epsilon C_{1}(\theta)\wedge C_{2}(\theta)$, we construct  $U(\theta) = (C_{1}(\theta),C_{2}(\theta), C_{3}(\theta))$. This is a family of unitary matrices for which $U(\theta) \to U$ when $\theta \to 0$. By construction, for all $\theta\in\R$, the conditions of Eq.~\eqref{cond:rk4} read:
\[
\left\{\begin{array}{rclc}
\phi_{21}-\phi_{11}  &\neq & \theta & [\frac{\pi}{2}] \\
\phi_{22}-\phi_{12}  &\neq &-\theta &[\frac{\pi}{2}] \\
\phi_{31}-\phi_{11}  & \neq& \theta& [\frac{\pi}{2}] \\
\phi_{32}-\phi_{12}  &\neq &-\theta& [\frac{\pi}{2}] \\\
\phi_{21}-\phi_{11} - \theta &\neq &\phi_{22} - \phi_{12}+\theta& [\pi] \\
\phi_{31}-\phi_{11}-\theta &\neq &\phi_{32}- \phi_{12}+\theta &[\pi] \\
\end{array}\right. 
\]
 {These conditions}   are all fulfilled for $\theta\not=0$ small enough.
This implies that the set $\mathcal W$ is dense.  

To show the set $\mcl{W}$ is of full Haar measure, we show that its complement, $\mcl{W}^{\textrm c}$ is of zero Haar measure. The group $U(3)$ is a $9$-dimensional real manifold. Its Haar measure is absolutely continuous with respect to the Lebesgue measure in any local coordinate patch~\cite{folland2016}. %\red{refer to Folland harmonic analysis book here.} 
Now, $\mcl{W}^{\textrm c}$ is the union of the sets where one of the inequalities in Eq.~\eqref{cond:rk4} is an equality and of the sets where one of the matrix elements of $U$ vanishes. Each of these sets is an lower dimensional submanifold of $U(3)$. Hence it is of zero Lebesgue measure, which concludes the proof.  
\end{proof}
\noindent\emph{Proof of Theorem~\ref{thm:Graal}~(iii).}
We write the entries of a DFT transition matrix $U$ as $U_{kl}=\frac{\omega^{(k-1)(l-1)}} {\sqrt{p}}$ for all $(k,l) \in \IntEnt{1}{p}^2$, where $\omega=e^{-\frac{2i\pi}{p}}$. In this proof, the indices on the matrix $U$ and on all other matrices appearing should be thought of as being extended to all integers and as being periodic with period $p$.

As $\convAB\subseteq \EcalKDC$, we only have to prove that $\EcalKDC \subseteq \convAB$. To that end, we use Corollary~\ref{cor:CSC1}. In other words, we need to show that Eq.\eqref{eq:supcond4} holds for all $\rho\in\EcalKDC$; this is achieved in  Eq.\eqref{eq:DFT2} below. 

%\sout{We first determine necessary and sufficient conditions for $\rho$ to belong to $\EcalKDC$  using that $\Im(Q(\rho))=0$: see Eq.\eqref{eq:DFT6}.} 
We need the following lemma, which characterizes $\VR$ in the case where $U$ is the DFT matrix in prime dimension.
\begin{Lemma}\label{lem:DFTLem} Let $U$ be the DFT matrix in prime dimension $d=p$. Then, a self-adjoint operator $F\in\mathcal{S}_{p}$ belongs to $\VR$ if and only if for all $(i,k)\in\IntEnt{1}{p}^2$,
\begin{equation}\label{eq:DFTF}
    F_{i (i+k)}=F_{(i-k)i}%=\overline{F_{i(i-k)}}.
\end{equation}
Here, $F_{ik}=\langle a_i|F|a_k\rangle$ for $(i,k)\in\IntEnt{1}{p}^2$.
\end{Lemma}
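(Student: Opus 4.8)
The plan is to write the KD symbol $Q_{ij}(F)$ explicitly in terms of the $A$-basis matrix elements $F_{ik}=\langle a_i|F|a_k\rangle$ and to recognize the reality constraint $\Im{Q_{ij}(F)}=0$, imposed for every column index $j$, as the statement that a discrete Fourier transform vanishes identically.

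First I would substitute $|b_j\rangle=\sum_k U_{kj}|a_k\rangle$ into $Q_{ij}(F)=\langle a_i|F|b_j\rangle\langle b_j|a_i\rangle$, use $\langle b_j|a_i\rangle=\overline{U_{ij}}$ and $\overline{\omega}=\omega^{-1}$, and collect the powers of $\omega$. The cross term $(k-1)(j-1)-(i-1)(j-1)=(j-1)(k-i)$ simplifies everything, and after the shift $k=i+m$ (all indices read modulo $p$) one obtains
\[
Q_{ij}(F)=\frac1p\sum_{m} F_{i(i+m)}\,\omega^{(j-1)m}.
\]

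Next I would take the complex conjugate of this expression. Because $F$ is self-adjoint, $\overline{F_{i(i+m)}}=F_{(i+m)i}$, and after the reindexing $m\mapsto -m$ this yields
\[
\overline{Q_{ij}(F)}=\frac1p\sum_m F_{(i-m)i}\,\omega^{(j-1)m}.
\]
Hence $F\in\VR$, that is $Q_{ij}(F)=\overline{Q_{ij}(F)}$ for all $i$ and $j$, is equivalent to
\[
\sum_m \big(F_{i(i+m)}-F_{(i-m)i}\big)\,\omega^{(j-1)m}=0\qquad\text{for all } (i,j)\in\IntEnt{1}{p}^2.
\]

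The final step is to fix $i$ and read the left-hand side, as $j$ varies, as the discrete Fourier transform of the vector $g_m:=F_{i(i+m)}-F_{(i-m)i}$ evaluated at the frequencies $j-1$, which run through all residues modulo $p$ as $j$ runs through $\IntEnt{1}{p}$. Since the DFT matrix $(\omega^{nm})_{n,m}$ is invertible, the vanishing of all these transforms is equivalent to $g_m=0$ for every $m$, i.e.\ to $F_{i(i+k)}=F_{(i-k)i}$; the converse direction is read off the same equivalence backwards. I do not expect a genuine obstacle here: the only point requiring care is the consistent index bookkeeping under the mod-$p$ periodicity (the shift $k=i+m$ together with the reindexing $m\mapsto -m$), while the structural content---reality of the whole $i$-th row of $Q(F)$ being an inverse-DFT condition---is transparent once the symbol is put in the Fourier form above. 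I note in passing that primality of $p$ is not actually used for this particular equivalence; it enters elsewhere in the proof of Theorem~\ref{thm:Graal}~(iii).
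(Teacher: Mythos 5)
Your proof is correct, and it takes a cleaner route than the paper's. The paper also reduces the reality of $Q_{i\cdot}(F)$ to a Vandermonde system, but it first symmetrizes $Q_{ij}(F)$ by pairing the indices $k$ and $2i-k$, restricts to $i,j\in\IntEnt{2}{p}$ (invoking the marginal identities of Eq.~\eqref{eq:Qmarginals} to discard the $i=1$ and $j=1$ equations), solves a $(p-1)\times(p-1)$ Vandermonde system in the variables $z_k=F_{i(i+k)}-\overline{F_{i(i-k)}}$, and then needs a separate chain argument $F_{(nk+1)[(n+1)k+1]}=F_{[(n+1)k+1][(n+2)k+1]}$ --- which is where primality is genuinely used --- to recover the case $i=1$. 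By keeping all $p$ frequencies $j-1=0,\dots,p-1$ and writing $Q_{ij}(F)-\overline{Q_{ij}(F)}$ directly as the DFT of $g_m=F_{i(i+m)}-F_{(i-m)i}$, you treat every row $i$ uniformly and the invertibility of the full $p\times p$ DFT matrix finishes the argument in one stroke. Your closing remark is also accurate and worth taking seriously: the equivalence $F\in\VR\Leftrightarrow F_{i(i+k)}=F_{(i-k)i}$ holds verbatim for the DFT in any dimension $d$. Primality enters only in the paper's subsequent interpretation: for $d$ prime the relation forces $F$ to be \emph{constant} on each cyclic off-diagonal (the translation $i\mapsto i-k$ acts transitively), giving $\dim\VR=2d-1$; for composite $d$ the $k$-th off-diagonal splits into $\gcd(k,d)$ orbits, each carrying an independent value, so $\dim\VR>2d-1$ --- consistent with the known failure of $\EcalKDC=\convAB$ for the DFT in composite dimension.
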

We remark that Eq.\eqref{eq:DFTF} means that the matrix $F$ is constant on its $d-1$ off-diagonals.

\noindent \emph{Proof of Lemma~\ref{lem:DFTLem}.}
 For all $(i,j)\in\IntEnt{1}{p}^2$
\[
Q_{ij}(F) = \sum_{k=1}^{p} \bracket{b_j}{a_i}\bracket{a_k}{b_j}F_{ik} = \frac{1}{p} \sum_{k=1}^{p}\omega^{(j-1)(1-i)}\omega^{(j-1)(k-1)}F_{i k} = \frac{1}{p} \sum_{k=1}^{p}\omega^{(j-1)(k-i)}F_{i k}.
\]
%where $F_{ij} = \bra{a_i}F\cket{a_j}$ for $(i,j)\in\IntEnt{1}{p}^2$. 
In order to compute $\Im(Q_{ij}(F))$, we rewrite $Q_{ij}(F)$ as follows.
Let $i\in\IntEnt{2}{p}$ and $j\in\IntEnt{2}{p}$. Then, 
\[
Q_{ij}(F) =\frac1{2}\left(Q_{ij}(F)+Q_{ij}(F)\right)=\frac{1}{2p} \left(\sum_{k=1}^{p}\omega^{(j-1)(k-i)}F_{i k} + \sum_{k=1}^{p}\omega^{(j-1)(k-i)}F_{i k}\right).
\]
We now rewrite the second sum. We note that 
$\overline{\omega}^{(j-1)(k'-i)} = \omega^{(j-1)(k-i)}$ if and only if $(j-1)(k-i)=(j-1)(i-k') \ [p]$;  as $(j-1)\neq 0~[p]$, it follows that $k+k'=2i~[p]$ and thus {that} $k'=2i-k~[p]$. As the map $k\in \IntEnt{1}{p} \mapsto (2i-k)~[p]\in\IntEnt{1}{p}$ is bijective, one finds {that}
\[
Q_{ij}(F) = \frac{1}{2p} \left(\sum_{k=1}^{p}\omega^{(j-1)(k-i)}F_{i k}+ \sum_{k=1}^{p}\omega^{(j-1)[(2i-k)-i]}F_{i (2i-k)}\right).
\]
 Note that the indices on $F_{ij}$ are considered modulo $p$. Therefore,
\begin{equation}\label{eq:DFT5}
Q_{ij}(F) =\displaystyle\frac{1}{2p} \left(\sum_{k=1}^{p}\omega^{(j-1)(k-i)}F_{i k}+\overline{\omega}^{(j-1)(k-i)}F_{i (2i-k)}\right).
\end{equation}
By changing the summation index, we {have} 
\[
Q_{ij}(F) =\frac{1}{2p} \left(\sum_{k'=1-i}^{p-i}\omega^{(j-1)k'}F_{i (i+k')}+\overline{\omega}^{(j-1)k'}F_{i (i-k')}\right).
\]
As the indices are considered modulo $p$, the summand is periodic with period $p$, and we can shift the sum to obtain
\[
Q_{ij}(F) =\frac{1}{2p} \left(\sum_{k'=0}^{p-1}\omega^{(j-1)k'}F_{i (i+k')}+\overline{\omega}^{(j-1)k'}F_{i (i-k')}\right).
\]
If $k'\in\IntEnt{1}{\frac{p-1}{2}}$, then $(p-k')\in\IntEnt{\frac{p+1}{2}}{p-1}$ and
\[
\omega^{(j-1)(p-k')}F_{i [i+(p-k')]}+\overline{\omega}^{(j-1)(p-k')}F_{i [i-(p-k')]} = \omega^{(j-1)k'}F_{i (i+k')}+\overline{\omega}^{(j-1)k'}F_{i (i-k')},
\]
so that we can group these terms together. This leads to
\begin{equation}\label{eq:DFTQij}
Q_{ij}(F) =\frac{1}{p} \left(\sum_{k'=1}^{\frac{p-1}{2}}\omega^{(j-1)k'}F_{i (i+k')}+\overline{\omega}^{(j-1)k'}F_{i (i-k')}\right) + \frac{1}{p}F_{ii}.
\end{equation}
We can then finally compute $\text{Im}(Q_{ij}(F))$ for $(i,j)\in\IntEnt{2}{p}^2$:
\[
\begin{array}{rcl}
\text{Im}\left(Q_{ij}(F)\right)  &=&  \displaystyle \frac{1}{p} \sum_{k'=1}^{\frac{p-1}{2}}\text{Im}\left(\overline{\omega}^{(j-1)k'}F_{i (i-k')}+\omega^{(j-1)k'}F_{i (i+k')}\right) \\
&=& \displaystyle \frac{1}{2p\sqrt{-1}} \sum_{k'=1}^{\frac{p-1}{2}}\overline{\omega}^{(j-1)k'}F_{i (i-k')} - \omega^{(j-1)k'}\overline{F_{i (i-k')}} + \omega^{(j-1)k'}F_{i (i+k')} - \overline{\omega}^{(j-1)k'}\overline{F_{i (i+k')}} \\
&=& \displaystyle \frac{1}{2p\sqrt{-1}} \sum_{k'=1}^{\frac{p-1}{2}}\omega^{(j-1)k'}\left(F_{i (i+k')} - \overline{F_{i (i-k')}}\right) + \overline{\omega}^{(j-1)k'}\left(F_{i (i-k')}- \overline{F_{i (i+k')}}\right).
\end{array}
\]
Recall that $F\in\VR$ if and only if, for any $i\in\IntEnt{2}{p}$, the $p-1$ equations $\text{Im}(Q_{ij}(F))=0$ for $j\in\IntEnt{2}{p}$  are satisfied. Indeed, as a consequence of Eq.~\eqref{eq:Qmarginals}, this is equivalent to $\text{Im}(Q_{ij}(F))=0$ for all $(i,j)\in\IntEnt{2}{p}^2$.
Hence $F\in\VR$ if and only if
\[
 \sum_{k'=1}^{\frac{p-1}{2}}\omega^{(j-1)k'}\left(F_{i (i+k')} - \overline{F_{i (i-k')}}\right) + \overline{\omega}^{(j-1)k'}\left(F_{i (i-k')}- \overline{F_{i (i+k')}}\right) = 0.
\]
This system can be rewritten with $z_k= F_{i (i+k)} - \overline{F_{i (i-k)}} \text{ and } z_{p-k} = F_{i (i-k)}- \overline{F_{i (i+k)}}$ for $k\in\IntEnt{1}{\frac{p-1}{2}}$:
\[
A_{\omega}\begin{pmatrix}
z_1 \\
z_2 \\
\vdots\\
z_{p-1}
\end{pmatrix}=0
\]
where 
\[
A_{\omega} = 
\begin{psmallmatrix}
\omega & \omega^2 & \cdots & \omega^{\frac{p-1}{2}} & \overline{\omega}^{\frac{p-1}{2}} & \cdots & \overline{\omega}\\
\omega^2 & \omega^4 &  \cdots & \omega^{p-1} & \overline{\omega}^{p-1} & \cdots &\overline{\omega}^{2} \\
\vdots &\vdots &  \vdots  & \vdots  & \vdots & \vdots  & \vdots  \\
\omega^{p-1} & \omega^{2(p-1)} &  \cdots & \omega^{\frac{(p-1)^2}{2}} & \overline{\omega}^{\frac{(p-1)^2}{2}} & \cdots & \overline{\omega}^{p-1} \\
\end{psmallmatrix}
= 
\begin{psmallmatrix}
\omega & \omega^2 & \cdots & \omega^{\frac{p-1}{2}} &  \omega^{\frac{p+1}{2}} & \cdots & \omega^{p-1} \\
\omega^2 & \omega^4 &  \cdots & \omega^{p-1} &  \omega^{p+1} & \cdots & \omega^{2(p-1)} \\
\vdots &\vdots &  \vdots  & \vdots  & \vdots & \vdots  & \vdots  \\
\omega^{p-1} & \omega^{2(p-1)} &  \cdots & \omega^{\frac{(p-1)^2}{2}} &  \omega^{\frac{(p-1)(p+1)}{2}} & \cdots & \omega^{(p-1)^2} \\
\end{psmallmatrix} .
\]
The matrix $A_{\omega}$ is a Vandermonde matrix, written $V(\omega, \omega^2, \dots, \omega^{p-1})$ for which all parameters are different so $A_{\omega}$ is invertible. This means that for all $k\in\IntEnt{1}{p-1}, z_{k}=0$. Hence, $F\in\VR$ if and only if
\begin{equation}\label{eq:DFT3} \forall(i,k)\in\IntEnt{2}{p}\times\IntEnt{1}{\frac{p-1}{2}}, F_{i (i+k)} = \overline{F_{i (i-k)}}=F_{(i-k)i}.
\end{equation}
 We further rewrite these conditions in a more symmetric form: see Eq.~\eqref{eq:DFT6} below. 
Consider $(i,k)\in\IntEnt{2}{p}\times\IntEnt{\frac{p+1}{2}}{p-1}$. As all  indices are taken modulo $p$,  
$$
F_{i (i+k)} = F_{i (i+k-p)} = F_{i [i-(p-k)]}.
$$
Since $p-k\in\IntEnt{1}{\frac{p-1}{2}}$, Eq.\eqref{eq:DFT3} implies that $F_{i [i-(p-k)]}=\overline{F_{i [i+(p-k)]}} = \overline{F_{i (i-k)}}$. Therefore, we obtain the following recursion relation: 
\begin{equation}\label{eq:DFT4} 
\forall (i,k)\in\IntEnt{2}{p}\times\IntEnt{1}{p-1}, F_{i (i+k)} = \overline{F_{i (i-k)}}=F_{(i-k)i}.
\end{equation}
Next, we want to show that the relation also holds for $i=1$ and $k\in\IntEnt{1}{p-1}$. Suppose $k\in\IntEnt{1}{p-1}$. If $n\in\IntEnt{0}{p-2}$ and since $F$ is self-adjoint,
\[
F_{(nk+1) [(n+1)k+1]} = \overline{F_{[(n+1)k+1](nk+1)}} = \overline{F_{[(n+1)k+1][(n+1)k+1 -k]}}.
\]
As $n+1\neq 0~[p]$ and $k\neq 0~[p]$, it follows that $(n+1)k+1\neq 1~[p]$. We can therefore use Eq.~\eqref{eq:DFT4} to obtain
\[
\forall n\in\IntEnt{0}{p-2}, k\in\IntEnt{1}{p-1},\quad F_{(nk+1) [(n+1)k+1]} = F_{[(n+1)k+1][(n+1)k+1+k]} = F_{[(n+1)k+1][(n+2)k+1]}.
\]
It follows from this that for all $n\in\IntEnt{1}{p-1}$,
$F_{1(k+1)} = F_{(nk+1)[(n+1)k+1]}$. And thus, $F_{1(1+k)} = F_{(1-k)1} $ which is the above relation  for $n=p-1$. Thus, this shows that Eq.\eqref{eq:DFT3} holds for $i=1$.

Summing up, $F\in\EcalKDC$ if and only if 
\begin{equation}\label{eq:DFT6}
\begin{array}{rcl}
F_{i (i+k)}&=&F_{(i-k)i}\text{ for all }(i,k)\in\IntEnt{1}{p}\times\IntEnt{1}{p}.
%\rho_{1 (1+k)} &= &\rho_{i (i+k)} \text{ for all } (i,k)\in\IntEnt{1}{p}\times\IntEnt{1}{p-1}. \\
\end{array}
\end{equation}
\qed
\\

We can now use this result to show that $\rho\in\EcalKDC$ implies that $\rho\in\convAB$, by showing that  Eq.~\eqref{eq:supcond4} holds. Indeed, since $\rho\in\EcalKDC$ implies that $\rho\in\VR$, it follows from Eq.~\eqref{eq:DFTQij}  and Lemma~\ref{lem:DFTLem} that for all $(i,j)\in\IntEnt{1}{p}^2$, 
\begin{equation}\label{eq:DFT2}
\begin{array}{rcl}
Q_{ij}(\rho)+Q_{11}(\rho) &=&  \displaystyle \frac{2}{p} \sum_{k=1}^{\frac{p-1}{2}}\text{Re}\left(\omega^{(j-1)k'}\rho_{i (i+k')}\right) + \frac{1}{p}\rho_{i i}+  \frac{2}{p} \sum_{k=1}^{\frac{p-1}{2}}\text{Re}\left(\rho_{1 (1+k')}\right) + \frac{1}{p}\rho_{1 1} \\
&=& \displaystyle \frac{2}{p} \sum_{k=1}^{\frac{p-1}{2}}\text{Re}\left(\omega^{(j-1)k'}\rho_{1 (1+k')}\right) + \frac{1}{p}\rho_{1 1}+  \frac{2}{p} \sum_{k=1}^{\frac{p-1}{2}}\text{Re}\left(\rho_{i (i+k')}\right) + \frac{1}{p}\rho_{i i} \\
&=& Q_{1j}(\rho)+ Q_{i1}(\rho).
\end{array}
\end{equation}
This establishes the relations~\eqref{eq:supcond4}  for $(i,j)\in\IntEnt{1}{p}^2$ and $k=l=1$. As in the proof of Proposition~\ref{prop:CSC2}, this implies that they  hold for all $k,l\in\IntEnt{1}{p}^2$. Thus, we have proven that $\EcalKDC \subseteq \convAB$. This ends the proof.

\noindent{\bf{Remark :}} An alternative proof of Theorem \ref{thm:Graal}.(iii) can be obtained as follows. Lemma \ref{lem:DFTLem} implies that $F$ is constant on its $(d-1)$ off-diagonals and as {$F$} is self-adjoint, only $\frac{d-1}{2}$ of these values are independent. Hence, the off-diagonals of F are determined by $(d-1)$ real parameters. The diagonal of {$F$} contains $d$ real parameters. Lemma \ref{lem:DFTLem} implies that $\VR$ is a $(2d-1)$ real vector space. Proposition \ref{prop:TSKD2bis}$(ib)$ then implies that $\EcalKDC =  \convAB$.
%From the proof of proposition~\ref{prop:CSC2} and thanks to Corollary~\ref{cor:CSC1}, we have proven that , this ends the proof.
\vskip0.2cm

\noindent{\it Proof of Theorem~\ref{thm:Graal}~(iv).} 
Note that this statement means that the set of $U$ for which Eq.~\eqref{eq:Graalagain} holds is open. The result follows from the following Proposition.
\begin{Prop}\label{prop:stability1} Let $U$ be such that $\mab>0$ and $\EcalKDC=\convAB$. Let $U_\epsilon$ be a family of unitary transition matrices between bases $\Acal_\epsilon$ and $\Bcal_\epsilon$ satisfying $\lim_{\epsilon\to 0}U_\epsilon=U$. Then, for all $\epsilon$ sufficiently small, one has
$\EcalKDC^\epsilon=\conv{\Acal_\epsilon,\Bcal_\epsilon}$. 
\end{Prop}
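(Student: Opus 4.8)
The plan is to recast the hypothesis $\EcalKDC=\convAB$ as a maximal-rank condition on a linear map and then to invoke the lower semicontinuity of the rank. By Proposition~\ref{prop:TSKD2bis}, under the standing assumption $\mab>0$ the equality $\EcalKDC=\convAB$ is equivalent to $\dim\VR=2d-1$; since $\VR=\Ker(\Im{Q})$ and $\dim\SAO=d^2$, this is in turn equivalent to $\dim\Ran{\Im{Q}}=(d-1)^2$. By the bound~\eqref{eq:dimVRbounds}, which holds as soon as the minimal modulus of the transition matrix is positive, one always has $\dim\Ran{\Im{Q}}\le(d-1)^2$. Thus the hypothesis on $U$ states exactly that $\Im{Q}$ attains its maximal possible rank $(d-1)^2$, and the goal becomes to show that this maximality persists for $U_\epsilon$ close to $U$.

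Next I would make precise the dependence of $\Im{Q}$ on the transition matrix. The quantity $\dim\VR$ depends, up to a global unitary conjugation, only on $U$ and not on the particular pair of bases realizing it: two pairs of bases sharing the same transition matrix differ by a single unitary $W$, and conjugation $C_W(F)=WFW^\dagger$ carries $\Im{Q}$ to $\Im{Q}\circ C_{W^{-1}}$, an automorphism of $\SAO$, hence leaves $\dim\Ran{\Im{Q}}$ unchanged. I may therefore compute $\dim\VR$ using the canonical representative in which $\Acal$ is the standard basis and the columns of $U$ are the vectors of $\Bcal$, so that $Q_{ij}(F)=\overline{U_{ij}}\sum_k U_{kj}F_{ik}$. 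Fixing once and for all a real basis of $\SAO$ and the canonical basis of $\MatR{d}$, the matrix of $\Im{Q}$ then has entries that are polynomial, hence continuous, in the real and imaginary parts of the entries of $U$.

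With this in hand the conclusion follows from two opposing inequalities. Since $\mab>0$ and $U_\epsilon\to U$, for $\epsilon$ small one has $\min_{i,j}|(U_\epsilon)_{ij}|>0$, so~\eqref{eq:dimVRbounds} applies to $U_\epsilon$ and gives $\dim\Ran{\Im{Q_\epsilon}}\le(d-1)^2$. Conversely, maximality of the rank of $\Im{Q}$ means that some $(d-1)^2\times(d-1)^2$ minor of its matrix is nonzero; by the continuity just established, the corresponding minor for $U_\epsilon$ remains nonzero for $\epsilon$ small, so $\dim\Ran{\Im{Q_\epsilon}}\ge(d-1)^2$. Combining the two, $\dim\Ran{\Im{Q_\epsilon}}=(d-1)^2$, i.e. $\dim\VR^\epsilon=2d-1$, and Proposition~\ref{prop:TSKD2bis} applied to $U_\epsilon$ returns $\EcalKDC^\epsilon=\conv{\Acal_\epsilon,\Bcal_\epsilon}$.

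The conceptual crux, and the step I would set up most carefully, is the reformulation: once one sees that $\EcalKDC=\convAB$ is nothing but maximality of $\mathrm{rank}(\Im{Q})$, the stability is the classical openness of the maximal-rank locus. The one place demanding genuine attention is the continuity argument, specifically the unitary-invariance observation that makes $\dim\VR$ a well-defined, continuous function of $U$ alone; this is what allows passing to the limit $U_\epsilon\to U$ without being derailed by the gauge freedom in the bases $\Acal_\epsilon,\Bcal_\epsilon$. The explicit minors themselves I would not compute.
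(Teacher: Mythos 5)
Your proof is correct and takes essentially the same route as the paper's: both invoke Proposition~\ref{prop:TSKD2bis} to translate the hypothesis into the statement that $\mathrm{Im}\,Q$ attains its maximal possible rank $(d-1)^2$, and then show that this maximality is an open condition as $U$ varies. The only cosmetic difference is in how openness is established --- the paper perturbs the isomorphism induced by $\mathrm{Im}\,Q$ onto the $(d-1)^2$-dimensional space $\mathcal L$ of real matrices with vanishing row and column sums, whereas you combine the nonvanishing of a maximal minor with the upper bound coming from Eq.~\eqref{eq:dimVRbounds}; these arguments are equivalent.
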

Proposition \ref{prop:stability1} states that the set of $U$ for which  $\EcalKDC=\convAB$ is an open set, so this proposition proves part (iv) of Theorem~\ref{thm:Graal}.
\begin{proof} Consider 
$$
\mathcal L=\{M\in\mathcal{M}_d(\R)\mid \sum_i M_{ij}=0,\ \sum_j M_{ij}=0\},
$$
which is  a $(d-1)^2$-dimensional real vector space. As a result of Eq.~\eqref{eq:Qmarginals}, one has
$$
\mathrm{Im}Q_\epsilon :\SAO\to \mathcal L.
$$
Here, $Q_\epsilon(\, \cdot \, )$ is the KD distribution associated to $U_\epsilon$. Suppose that, for $\epsilon=0$, $\EcalKDC=\convAB$. 
Then, according to Proposition~\ref{prop:TSKD2bis}, $\dim(\Ker{(\mathrm{Im} Q)}  )=2d-1$ and hence, since $\dim \SAO=d^2$, it follows that $\Im Q$ is surjective. We now show that, for sufficiently small $\epsilon$, $\Im Q_{\epsilon}$ is also surjective. Writing 
$$
\SAO= \Ker(\mathrm{Im} Q)\oplus [\Ker(\mathrm{Im} Q)]^\perp,
$$
it follows that $\widehat{\mathrm{Im} Q}:= \mathrm{Im}Q_{\mid [\Ker(\mathrm{Im} Q)]^\perp}$  is a linear isomorphism between $ [\Ker(\mathrm{Im} Q)]^\perp$ and $\mathcal L$. It therefore has an inverse $\widehat{\mathrm{Im} Q}^{-1}$.  Let us now write 
$$
\delta U_\epsilon:=U_\epsilon -U,
$$
with $\delta U_\epsilon\to0$ as $\epsilon\to0$. Then, 
$$
\widehat{\delta \mathrm{Im} Q_\epsilon}:= \widehat{\mathrm{Im}Q_\epsilon}-\widehat{\mathrm{Im}Q},
$$
with $\delta\mathrm{Im} Q_\epsilon\to0$. %{\color{red} This may require a line or two perhaps. It really looks obvious. In finite dimension, we can use any norm we want for convergence, so their cannot be a problem, I would say. Basically $Q(\rho)$ is a polynomial in the $U_{ij}$. } 
We now consider $\widehat {\mathrm{Im}Q_\epsilon}:=\mathrm{Im}Q_{\epsilon\mid [\Ker(\mathrm{Im} Q)]^\perp}$, so that
$$
\widehat {\mathrm{Im}Q_\epsilon} =\widehat  {\mathrm{Im} Q}+\delta \widehat{\mathrm{Im}Q_\epsilon} : [\Ker(\mathrm{Im} Q)]^\perp\to \mathcal L.
$$ 
 To conclude the proof,  we show  that $\widehat {\mathrm{Im}Q_\epsilon}$ is a linear isomorphism. One has
$$
\widehat {\mathrm{Im}Q_\epsilon}=\widehat {\mathrm{Im}Q}(\widehat \bbone +\widehat{\mathrm{Im}Q}^{-1}\delta\widehat {\mathrm{Im}Q_\epsilon}).
$$
Since $(\widehat \bbone +\widehat{\mathrm{Im}Q}^{-1}\delta\widehat {\mathrm{Im}Q_\epsilon})$ is a small perturbation of the identity, it is invertible. So $\widehat {\mathrm{Im}Q_\epsilon}$ is invertible as the composition of two invertible maps. This implies that $ \mathrm{Im} Q_\epsilon$ is surjective and, hence, that the $\dim\left(\Ker({\Im{Q_{\epsilon}}})\right) = 2d-1$. Proposition \ref{prop:TSKD2bis} then implies that $\EcalKDC^{\epsilon} = \conv{\Acal_{\epsilon},\Bcal_{\epsilon}}$.
\end{proof}

\noindent{\bf Remark.}
We note that, in dimension $d=2$, if there is a zero in $U$, then $U$ is, up to phase changes, either equal to $\begin{pmatrix} 1 & 0 \\ 0 &1 \end{pmatrix}$ or to $\begin{pmatrix} 0 & 1 \\ 1 &0 \end{pmatrix}$. In that case, the two bases are not distinct and all pure states are KD-positive, and thus all mixed state are also KD-positive. In higher dimension $d\geq 3$, the presence of zeroes in $U$ considerably complicates the analysis. \\

We  conjecture that Theorem~\ref{thm:Graal}~(ii) is true in all dimensions $d\geq 2$. We numerically checked this conjecture in dimensions $d$ up to $10$.  For that purpose, we sampled random unitary matrices according to the Haar measure on the unitary group and computed numerically the rank of $\Im Q$ for each such matrix. When it equals $(d-1)^2$, Proposition~\ref{prop:TSKD2bis}~(ib) guarantees that $\EcalKDC=\convAB$. We did not find any instance where this condition was not satisfied. 

We now prove Proposition~\ref{prop:pure_char}, which can be seen as a first step in the proof of this conjecture. For convenience, we repeat it here:

\begin{propIntro}
Let $d\geq2$.
There exists an open dense set of unitaries of probability $1$ for which $\EcalKDCpu=\Acal\cup\Bcal$.
\end{propIntro}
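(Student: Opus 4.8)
The plan is to characterize KD positivity of a pure state by a family of phase conditions on the entries of $U$, and then to show that, for $U$ outside a Haar-null, nowhere-dense set, these conditions can be satisfied only by the basis vectors themselves. First I would record the pure-state KD symbol: writing $\psi_i^{A}=\bracket{a_i}{\psi}$ and $\psi_j^{B}=\bracket{b_j}{\psi}$, one has
\[
Q_{ij}(\cket{\psi}\bra{\psi})=\overline{U_{ij}}\,\psi_i^{A}\,\overline{\psi_j^{B}}.
\]
Let $S_A=\{i:\psi_i^A\neq0\}$ and $S_B=\{j:\psi_j^B\neq0\}$ be the two supports. Since $\mab>0$, for $(i,j)\in S_A\times S_B$ all three factors are nonzero, so $Q_{ij}\geq0$ forces its argument to vanish, while for $(i,j)\notin S_A\times S_B$ the entry is automatically $0$. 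Hence $\cket{\psi}$ is KD positive if and only if $\arg\psi_i^A-\arg\psi_j^B=\arg U_{ij}$ for all $(i,j)\in S_A\times S_B$.

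Next I would extract a necessary condition that no longer involves $\psi$. If $\cket{\psi}$ is not a basis vector, then a singleton support on either side would place $\cket{\psi}$ in $\Acal$ or $\Bcal$, so necessarily $|S_A|\geq2$ and $|S_B|\geq2$. Choosing $i\neq i'$ in $S_A$ and $j\neq j'$ in $S_B$ and taking the alternating sum of the four phase equations, the $\arg\psi^A$ and $\arg\psi^B$ contributions cancel, leaving
\[
\arg\!\left(U_{ij}U_{i'j'}\overline{U_{ij'}}\,\overline{U_{i'j}}\right)\equiv0 \pmod{2\pi}.
\]
In particular $P_{ii'jj'}(U):=U_{ij}U_{i'j'}\overline{U_{ij'}}\,\overline{U_{i'j}}$ must be a positive real number, so $g_{ii'jj'}(U):=\mathrm{Im}\,P_{ii'jj'}(U)=0$. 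Consequently, if $g_{ii'jj'}(U)\neq0$ for \emph{every} quadruple with $i\neq i'$ and $j\neq j'$, then no non-basis pure state can be KD positive, i.e. $\EcalKDCpu=\Acal\cup\Bcal$. It then suffices to prove that
\[
\mathcal G=\{U:\mab>0\}\cap\bigcap_{i\neq i',\,j\neq j'}\{U:g_{ii'jj'}(U)\neq0\}
\]
is open, dense, and of full Haar measure. Openness is immediate, each defining condition being open and the intersection finite. For the measure and density, I would observe that each $g_{ii'jj'}$ is a polynomial in the real and imaginary parts of the entries of $U$, hence a real-analytic function on the compact connected manifold $U(d)$; the zero set of a real-analytic function not identically zero is closed, nowhere dense, and Lebesgue-null, and the Haar measure is absolutely continuous with respect to Lebesgue measure in local charts~\cite{folland2016}. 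The same applies to each $\{U:U_{ij}=0\}$, whose union is the complement of $\{\mab>0\}$. Thus the complement of $\mathcal G$ is a finite union of such sets, hence nowhere dense and Haar-null, which gives all three properties.

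The main obstacle is the final analytic input, that each $g_{ii'jj'}$ is not identically zero on $U(d)$; this is exactly where the dimension enters. For $d=2$ the four entries $U_{11},U_{12},U_{21},U_{22}$ are those of a single $2\times2$ unitary, for which a short computation gives $P_{1212}=-|U_{11}|^2|U_{12}|^2<0$ whenever $\mab>0$; so $g_{1212}\equiv0$ and the analytic argument fails, but the positivity requirement $P_{1212}>0$ fails by sign, and in any case part~(i) of Theorem~\ref{thm:Graal} already yields $\EcalKDCpu=\Acal\cup\Bcal$ on the full-measure set $\{\mab>0\}$, settling $d=2$. For $d\geq3$ the maps $g_{ii'jj'}$ are all carried into one another by permutations of rows and columns, which are diffeomorphisms of $U(d)$; hence it is enough to exhibit one unitary with $g_{1212}\neq0$. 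Taking the discrete Fourier matrix $U_{kl}=\omega^{(k-1)(l-1)}/\sqrt d$ with $\omega=e^{-2\pi i/d}$ gives $P_{1212}=\tfrac1{d^2}\,\omega^{(1-2)(1-2)}=\tfrac1{d^2}\,\omega\notin\R$ for $d\geq3$, so $g_{1212}\not\equiv0$ and, by the permutation symmetry, $g_{ii'jj'}\not\equiv0$ for every quadruple. This closes the argument.
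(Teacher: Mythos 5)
Your proposal is correct. The first half --- reducing KD positivity of a non-basis pure state to the phase-coherence condition $\arg\bigl(U_{ij}U_{i'j'}\overline{U_{ij'}}\,\overline{U_{i'j}}\bigr)\equiv 0$ on $2\times2$ submatrices over the supports --- is essentially the paper's argument in multiplicative rather than additive notation (the paper writes $U_{kj}=A_{kj}e^{i\phi_{kj}}$ and derives $\phi_{11}-\phi_{12}=\phi_{21}-\phi_{22}$ as the obstruction). Where you genuinely diverge is in establishing that the good set is dense and of full Haar measure. The paper works with the weaker exclusion $\arg P_{ii'jj'}\neq 0$ and proves density by an explicit perturbation: rotating the phases of two columns by $\pm\theta$ and restoring unitarity via Gram--Schmidt. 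You instead impose the stronger condition $\mathrm{Im}\,P_{ii'jj'}\neq 0$ and invoke the fact that a real-analytic function on the connected manifold $U(d)$ which is not identically zero has a closed, nowhere-dense, Haar-null zero set, certified by a single witness (the DFT matrix, correctly computed to give $P_{1212}=\omega/d^2\notin\R$ for $d\geq3$) and transported to all quadruples by row/column permutations. Your route buys uniformity and rigor --- in particular it cleanly justifies the measure-zero claim, which the paper handles somewhat quickly by asserting the exceptional sets are lower-dimensional submanifolds --- at the cost of needing the analytic-set machinery and a separate treatment of $d=2$, where $\mathrm{Im}\,P_{1212}\equiv 0$; you correctly notice this and resolve it by observing that $P_{1212}=-|U_{11}|^2|U_{12}|^2<0$ already violates the positivity requirement whenever $\mab>0$ (consistent with Theorem~\ref{thm:Graal}~(i)). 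The paper's perturbative construction is more elementary and covers $d=2$ uniformly, since its condition $\arg P\neq 0$ is automatically satisfied there.
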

\begin{proof}
We define $\mcl{V}$ to be the set of $d$ by $d$ unitary matrices $U$ satisfying $\mab > 0$ and 
\begin{equation}\label{eq:proof_pure_char}
\forall   (k,k',j,j')\in \IntEnt{1}{d}^4, k\neq k', j\neq j', \  \phi_{k,j} -  \phi_{k',j} \neq  \phi_{k,j'} -  \phi_{k',j'} [2\pi].
\end{equation}
Here we wrote, as before, $U_{kj}=\langle a_k|b_j\rangle=A_{kj}\exp(i\phi_{kj})$, with $A_{kj}>0$. 
We will first show that, if $U\in\mcl{V}$, then the only pure KD-positive states are the basis states. We proceed by contradiction.  Suppose that there exists  a pure KD-classical state $\cket{\psi}\in\H$ that is not a basis state. By reordering the two bases, we can suppose that $S_{\Acal}(\psi)=\IntEnt{1}{n_{\Acal}(\psi)}$ and $S_{\Bcal}(\psi)=\IntEnt{1}{n_{\Bcal}(\psi)}$ with $n_{\Acal}(\psi) \geqslant 2$ and $n_{\Bcal}(\psi) \geqslant 2$.
Then, we change the phases of the basis states as follows: $\cket{a_j}$ is changed to $e^{-i\alpha_j}\cket{a_j}$ for $j\in\IntEnt{1}{n_{\Acal}(\psi)}$ where $\alpha_j=\arg(\bracket{a_j}{\psi})$ and  $\cket{b_j}$ is changed to $e^{i\beta_j}\cket{b_j}$ for $j\in\IntEnt{1}{n_{\Bcal}(\psi)}$ where $\beta_j=\arg(\bracket{b_j}{\psi})$. Thus, for all $(i,j)\in\IntEnt{1}{n_{\Acal}(\psi)}\times\IntEnt{1}{n_{\Bcal}(\psi)}$,
\[
 \bracket{a_i}{\psi}\in\R^{+}_*, \bracket{\psi}{b_j}\in\R^{+}_* \ \mathrm{ and } \ Q_{ij}(\psi)=A_{ij}e^{i(\phi_{ij}+\alpha_i+\beta_j)} \bracket{a_i}{\psi}\bracket{\psi}{b_j}\in\R^{+}_*.
\]
Consequently, 
\[
\forall (i,j)\in\IntEnt{1}{n_{\Acal}(\psi)}\times\IntEnt{1}{n_{\Bcal}(\psi)}, \phi_{ij}+\alpha_i+\beta_j = 0~[2\pi].
\]
As $n_{\Acal}(\psi) \geqslant 2$ and $n_{\Bcal}(\psi) \geqslant 2$, it follows that
\[
\phi_{11} - \phi_{12} = \beta_2 - \beta_1 ~[2\pi] \ \mathrm{ and } \ \phi_{21} - \phi_{22} = \beta_2 - \beta_1 ~[2\pi].
\]
Thus,
\[
\phi_{11} - \phi_{12} =\phi_{21} - \phi_{22} ~[2\pi],
\]
which is a contradiction because $U\in\mcl{V}$. Therefore, the only KD-classical pure states associated to $U\in\mcl{V}$ are the basis states.

We now show that the set $\mcl{V}$ is an open and dense set. That it is open follows directly from its definition. It remains to show that it is dense. For that purpose, we will show below that the set $\mcl{Z}$ defined by 
\[
\phi_{11} - \phi_{12} \neq \phi_{21} - \phi_{22} ~[2\pi]
\ \mathrm{and} \ \mab > 0
\]
is dense. Reordering the basis elements, it then follows that all sets defined by 
$$
\phi_{k,j} -  \phi_{k',j} \neq  \phi_{k,j'} -  \phi_{k',j'} ~[2\pi]
$$
are dense. One concludes that $\mathcal V$ is dense as a finite intersection of dense sets. 
It remains to prove that $\mathcal Z$ is dense. Let $U\in\mcl{Z}$; hence $\mab > 0$ and $\phi_{11} - \phi_{12} = \phi_{21} - \phi_{22} ~[2\pi]$. We denote by $(C_i)_{i\in\IntEnt{1}{d}}$ the columns of $U$. We define, for $\theta\in \R$
\begin{equation}\label{eq:pure_char_1}
C_{1}(\theta) = \begin{pmatrix}
A_{11}e^{i(\phi_{11}+\theta)} \\
A_{21}e^{i\phi_{21}}\\
\vdots\\
A_{d1}e^{i\phi_{d1}}
\end{pmatrix} \ \mathrm{ and } \ C_{2}(\theta) = \begin{pmatrix}
A_{12}e^{i(\phi_{12}-\theta)} \\
A_{22}e^{i\phi_{22}}\\
\vdots\\
A_{d2}e^{i\phi_{d2}}
\end{pmatrix}.
\end{equation}
Note that $C_1(\theta)$ and $C_2(\theta)$ are normalized and orthogonal. By applying the Gram-Schmidt algorithm to $\begin{pmatrix}
    C_{1}(\theta) & C_{2}(\theta) & C_{3} & \cdots & C_{d}
\end{pmatrix}$, we obtain a unitary matrix 
$$
U_{\theta}=\begin{pmatrix}
    C_{1}(\theta) & C_{2}(\theta) & C_{3}(\theta) & \cdots & C_{d}(\theta)
\end{pmatrix}
$$
such that $U(\theta) \rightarrow U$ when $\theta \rightarrow 0$. %with $C_{1}(\theta)$ and $C_{2}(\theta)$ defined in Eq.\eqref{eq:pure_char_1}. 
Therefore, for $\theta\not=0$ small enough, $\mab(\theta) > 0$ and 
\[
\phi_{21} - \phi_{11} - \theta \neq \phi_{22} - \phi_{12}+\theta \ [2\pi].
\]
This proves $\mcl{V}$ is dense. 

We finally show  that $\mcl{V}$ is a set of probability one for the unique normalized Haar measure on the unitary group.  Note that $\mcl{V}$ is contained in the complement of the union of the subsets  
$$
\phi_{k,j} -  \phi_{k',j} =  \phi_{k,j'} -  \phi_{k',j'} \ [2\pi]
$$
and of the subsets where one of the elements of $U$ vanishes.  Each of those subsets is a lower dimensional submanifold of the unitary group.  Since the Haar measure is known to be absolutely continuous with respect to Lebesgue measure in any local coordinate system on the unitary group~\cite{folland2016}%\red{Christopher, add reference to Folland here}
, this implies {that} the Haar measure of these manifolds vanishes. The same property therefore holds for their union, so that indeed $\mcl{V}$ has measure $1$.

\end{proof}

Proving Eq.~\eqref{eq:Graalagain} for a particular $U$ can be hard, as the result on the DFT in prime dimensions (Theorem~\ref{thm:Graal}~(iii)) shows. It is certainly not always true. To see this, one may first note that for the DFT in non-prime dimensions, it is well known (see for example~\cite{ debievre2023a,Xu22}) that
$
\Acal\cup\Bcal\subsetneq\EcalKDCpu.
$
We do not know, however, if in {this} case 
$
\EcalKDCpu\subsetneq \EcalKDCext
$
. In the next section we construct examples where
$$
\Acal\cup\Bcal\subsetneq \EcalKDCpu\subsetneq \EcalKDCext.
$$

We further point out again that Eq.~\eqref{eq:Graalagain} is notably different from what happens for the continuous-variable Wigner positivity. Indeed, there  exist Wigner positive states outside the convex hull of the pure Wigner positive states \cite{hudson1974, takagizhuang2018}. We note also that the discrete-variable Wigner function in $d=3$ has positive states outside the convex hull of its pure positive states~\cite{gross2006}. This is not the case for the KD distribution associated to the DFT, as a result of Theorem~\ref{thm:Graal}~(iii).

\section{Extreme KD-positive states are not necessarily pure}\label{s:eurekanew}

Below, we construct examples where
$$
\Acal\cup\Bcal\subsetneq \EcalKDCpu\subsetneq \EcalKDCext\quad \textrm{or}\quad \Acal\cup\Bcal = \EcalKDCpu\subsetneq \EcalKDCext.
$$
In other words, in these cases there exist mixed extreme KD-positive states. In particular, when such states are viewed as a convex combination of pure states, at least one of those pure states must be KD-negative. In fact, the convex combination of any state in
$$
\EcalKDC\setminus \conv{\EcalKDCpu},
$$
must include pure KD-negative states.
Proposition~\ref{prop:TSKD2bis} states that $\EcalKDCext=\Acal\cup\Bcal$ is true if and only if the space $\VR$ of KD-real operators is of its smallest possible dimension: $\dim \VR = 2d-1$ (see Eq.~\eqref{eq:dimVRbounds}). In Section \ref{subsec:realTran}, we show  that this is never the case if $U$ is a real (hence orthogonal) matrix and $d\geq 3$ (Lemma~\ref{lem:realU}).

Let us point out that this statement does not contradict Theorem~\ref{thm:Graal}~(ii), which states that in dimension $d=3$, the equality $\EcalKDC=\convAB$ holds with probability one; nor does it contradict our conjecture that it holds for all $d\geq 3$. Indeed, the space of unitary matrices is of dimension $d^2$; the space of orthogonal matrices is only of dimension $\frac{d(d+1)}{2}$, i.e., a ``thin'' subset. More formally, the space of orthogonal matrices is a hypersurface of lower dimension with empty interior and is therefore of zero probability among all unitary matrices.

 The result  of Lemma~\ref{lem:realU} allows us to construct examples  of bases $\Acal$ and $\Bcal$ for which $\EcalKDCpu\subsetneq\EcalKDCext$. 
In Section~\ref{s:mixedextremed=3}, we  provide an explicit example of an orthogonal matrix $U^\star$ in $d=3$ for which
$$
\Acal\cup\Bcal\subsetneq \EcalKDCpu\subsetneq \EcalKDCext.
$$
In Section~\ref{s:spin1}, we first show that the transition matrix between the bases of two different spin components of a spin-$s$ system is always (equivalent to) a real transition matrix. 
We then show that the a matrix $U^\star$ constructed in Secion~\ref{s:mixedextremed=3}  arises for a spin-$1$ system, in which the two bases $\Acal$ and $\Bcal$ correspond to the eigenvectors of the spin component in the $z$ direction and another, specific direction, respectively.

Finally, in Section~\ref{s:mixedextremed=4n}, we show that there exist examples in all dimensions $d=2^n$ (for integer $n$) and in all dimensions $d=4m$ (for integer $m \leq 166$)  for which 
$$
\Acal\cup\Bcal = \EcalKDCpu\subsetneq \EcalKDCext.
$$
In these cases, the two bases are perturbations of real MUB bases.  

For completeness, we mention that we did not find any example where
$$
\Acal\cup\Bcal\subsetneq \EcalKDCpu= \EcalKDCext.
$$

\subsection{Real transition matrices: structural results}\label{subsec:realTran}
\begin{Lemma}\label{lem:realU}
If $U$ is real and $\mab >0$, then $\VR={\SAOr}$, where ${\SAOr}$ is the set of self-adjoint operators that have a real and symmetric matrix on the $\Acal$ (and hence also on the $\Bcal$) basis. Hence, $\dim\VR=\frac{d(d+1)}{2}$. If in addition $d\geq 3$, then $\Acal\cup\Bcal\subsetneq \EcalKDCext$.
\end{Lemma}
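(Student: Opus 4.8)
The plan is to prove the three assertions of the lemma in sequence: the identification $\VR=\SAOr$, the resulting dimension count, and then the strict inclusion via Proposition~\ref{prop:TSKD2bis}.

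First I would establish the inclusion $\SAOr\subseteq\VR$ by direct computation. Since $U$ is real, $\langle b_j|a_i\rangle=\langle a_i|b_j\rangle=U_{ij}\in\R$, and expanding $\langle a_i|F|b_j\rangle=\sum_k F_{ik}U_{kj}$ in the $\Acal$ basis gives
$$
Q_{ij}(F)=U_{ij}\sum_{k=1}^d F_{ik}U_{kj},\qquad F_{ik}=\langle a_i|F|a_k\rangle.
$$
If $F\in\SAOr$, all the $F_{ik}$ are real, so every $Q_{ij}(F)$ is real and hence $F\in\VR$.

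For the converse inclusion $\VR\subseteq\SAOr$, I would take imaginary parts. Because $U$ is real, $\mathrm{Im}\,Q_{ij}(F)=U_{ij}\sum_k \mathrm{Im}(F_{ik})\,U_{kj}$. If $F\in\VR$ this vanishes for all $(i,j)$; the hypothesis $\mab>0$ forces $U_{ij}\neq0$, so for each fixed $i$ one obtains $\sum_k \mathrm{Im}(F_{ik})\,U_{kj}=0$ for all $j$. In other words the row vector $(\mathrm{Im}\,F_{ik})_k$ annihilates the invertible matrix $U$, whence $\mathrm{Im}\,F_{ik}=0$ for all $i,k$. Thus $F$ has a real matrix in the $\Acal$ basis, which is symmetric because $F$ is self-adjoint, so $F\in\SAOr$. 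This is the exact higher-dimensional analogue of the $d=2$ computation in the proof of Theorem~\ref{thm:Graal}~(i). Conjugating by the real orthogonal $U$ then shows that the matrix of $F$ is also real symmetric in the $\Bcal$ basis, which justifies the parenthetical remark in the statement.

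The dimension claim is then immediate: $\SAOr$ is isomorphic, via its $\Acal$-basis matrix, to the space of $d\times d$ real symmetric matrices, so $\dim\VR=\tfrac{d(d+1)}{2}$. For the last assertion, I would compute $\tfrac{d(d+1)}{2}-(2d-1)=\tfrac{(d-1)(d-2)}{2}$, which is strictly positive for $d\geq3$. Hence $\dim\VR>2d-1$, so condition (ib) of Proposition~\ref{prop:TSKD2bis} fails, and therefore so does (iia); that is, $\EcalKDCext\neq\Acal\cup\Bcal$. Combined with the always-valid inclusion $\Acal\cup\Bcal\subseteq\EcalKDCext$ from Eq.~\eqref{eq:pureinext}, this yields $\Acal\cup\Bcal\subsetneq\EcalKDCext$. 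The argument is essentially routine linear algebra; the only step requiring care is the converse inclusion, where one must use both $\mab>0$ (to cancel the nonzero prefactors $U_{ij}$) and the invertibility of $U$ (to conclude that the imaginary parts of the off-diagonal entries vanish), with everything else reducing to a dimension count and an appeal to the already-established equivalences.
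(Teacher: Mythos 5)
Your proposal is correct and follows essentially the same route as the paper: you expand $Q_{ij}(F)$ over the $\Acal$-basis, use $\mab>0$ to cancel the nonzero real prefactor $U_{ij}$, and then use the invertibility of $U$ to force $\mathrm{Im}\,F_{ik}=0$, before concluding via the dimension count and Proposition~\ref{prop:TSKD2bis}. The only cosmetic difference is that you also spell out the easy inclusion $\SAOr\subseteq\VR$ and the change of basis to $\Bcal$, which the paper leaves implicit.
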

Note that $\frac{d(d+1)}{2}$ is strictly larger than $2d-1$ for all $d>2$. We can further interpret this lemma as follows.   If $U$ is real, then the observables $F$ that have a real KD symbol are precisely those described by a real symmetric matrix on the $\Acal$ and $\Bcal$ bases. This constitutes a concrete identification of $\VR$, which is not available in general. 
In this situation, the kernel of $\textrm{Im} Q$, which is $\VR$, is large and in particular, larger than $\spanRAB$. 
\begin{proof} Suppose that $F\in\VR$, so
\[
\forall (i,j)\in\IntEnt{1}{d}^2, \; \Im(Q_{i j}(F))=0.
\]
We  know also that $\forall (i,j)\in\IntEnt{1}{d}^2, \; Q_{i j}(F) = \bracket{b_j}{a_i}\bra{a_i}F\cket{b_j}= \bracket{b_j}{a_i}\sum_{k=1}^{d}\bracket{a_k}{b_j}\bra{a_i}F\cket{a_k}$.
As $\bracket{b_j}{a_i}\bracket{a_k}{b_j}\in\R$, we finally find that
\[
\Im(Q_{i j}(F))=\sum_{k=1, k\neq i}^{d}\bracket{b_j}{a_i}\bracket{a_k}{b_j}\Im{\left(\bra{a_i}F\cket{a_k}\right)} = \bracket{b_j}{a_i} \left( \sum_{k=1, k\neq i}^{d}\Im{\left(\bra{a_i}F\cket{a_k}\right)} \bracket{a_k}{b_j}\right) =0. 
\]
Since this equation holds for all $j$, by fixing $i\in\IntEnt{1}{d}$ we can write
\[
\forall j\in\IntEnt{1}{d}, \sum_{k=1, k\neq i}^{d}\Im{\left(\bra{a_i}F\cket{a_k}\right)} \bracket{a_k}{b_j}=0
\] 
and thus
\[
\sum_{k=1, k\neq i}^{d}\Im{\left(\bra{a_i}F\cket{a_k}\right)}\cket{a_k}=0.
\]
So, for all $k\in\IntEnt{1}{d}, \Im{\left(\bra{a_i}F\cket{a_k}\right)}=0$ and thus, $F\in \SAOr$.

For the last statement, note that, if $\EcalKDCext=\Acal\cup\Bcal$, then $\EcalKDC=\convAB$ and hence, by Proposition~\ref{prop:TSKD2bis}, $\dim\VR=2d-1$, which is a contradiction with the fact that dim$\VR=\frac{d(d+1)}{2}$.
\end{proof}

As announced above, it is the goal of this section to exhibit mixed KD-positive states that cannot be written as convex combinations of the pure KD-positive states. In order to find and analyse such states, we will concentrate on dimension $d=3$ where the analysis is tractable. 

Note that, when $d=3$, then  $\dim \VR=6$ and $\dim (\spanRAB)=5$, so that
$\spanRAB$ is a co-dimension $1$ subspace of $\VR$. We characterize the one-dimensional subspace of $\VR$ perpendicular to $\spanRAB$, as follows. We use the Hilbert-Schmidt inner product on $\SAOr$. Then, $F_\perp\in \SAOr$ is a unit vector orthogonal to $\spanRAB$ if and only if
$$
\forall i\in\IntEnt{1}{3}, \ \Tr(F_\perp\cket{a_i}\bra{a_i})=0, \ \Tr(F_\perp\cket{b_i}\bra{b_i})=0, \ \mathrm{ and } \  \Tr F^2_\perp=1.
$$
It follows that the matrix of $F_\perp$ in the $\Acal$-basis is of the form
\begin{equation}\label{def:F_perp}
F_\perp=\frac1{\sqrt2}
\begin{pmatrix}
0& f_1& f_2\\
f_1&0&f_3\\
f_2&f_3&0
\end{pmatrix}
\end{equation}
with $f=(f_1, f_2, f_3)\in\R^3$ and 
$$
f_1^2+f_2^2+f_3^2=1.
$$
The vector $f$ is, up to a sign, uniquely determined by the conditions $\Tr(F_\perp|b_j\rangle\langle b_j|)=0$ for all $j\in\IntEnt{1}{3}$. 
Since $\Tr F_\perp=0$, $F_\perp$ can  be neither positive nor negative. We can assume that $f_-<0 \leq f_0\leq f_+$ are its three eigenvalues.
Consequently,
\begin{equation}
\VR=\spanRAB\oplus \R F_\perp.
\end{equation}
Any $F\in\VR$ can then be decomposed uniquely as $F=F_{\mathrm{b}}+xF_\perp$, with $F_{\mathrm{b}}\in\spanRAB$ and $x\in\R$. Note that
\begin{equation}\label{eq:purity}
\Tr F=\Tr F_{\mathrm{b}}, \quad \Tr F^2=\Tr F_{\mathrm{b}}^2 +x^2.
\end{equation}

We finally note that it follows from results in~\cite{debievre2023a}  that, when $d=3$ and $\mab>0$, there is a finite set of pure KD-positive  states. Further structural information on the set $\EcalKDC$ and its extreme points, for real orthogonal $U$, is given in Lemma~\ref{lem:EKDCorthV2} and Proposition~\ref{prop:EKD2}. 

The above results can be summarized as follows. 
There exists a convex subset $\mathcal D$ of $\spanRAB$, real numbers $x_{\min} \leq 0\leq  x_{\max}$, a concave function 
$$
x_{+}:\Dcal\to [x_{\min}, x_{\max}]\subset \R,
$$
as well as  a  convex function
$$
x_-:\mathcal D\to [x_{\min}, x_{\max}]\subset \R,
$$
so that 
$$
\EcalKDC=\{\sigma +[x_-(\sigma), x_+(\sigma)]F_\perp\mid \sigma\in\Dcal \}.
$$
In other words, $\EcalKDC$ can be seen as the intersection of the subgraph of $x_+$ and of the supergraph of $x_-$.
As a result, the extreme points of $\EcalKDC$ lie on the graphs of $x_-$ and of $x_+$:
$$
\EcalKDCext\subseteq x_+(\mathcal D)\cup x_-(\mathcal D).
$$
Since the precise form of the functions $x_{\pm}$ as well as of the set $\mathcal D$ depend in a nontrivial manner on the matrix $U$, determining explicitly the nature of the set of extreme KD-positive states is far from straigthforward for general real $U$, even in dimension $3$. In the next subsection, we study a special example where the nature of the extreme KD-positive states can be mapped out in more detail.  In particular, we show that $\EcalKDC$ is not necessarily a polytope.

\subsection{Extreme KD-positive  states can be mixed: an example in $d=3$.}\label{s:mixedextremed=3}

We now provide an example of an orthogonal $U$ in $d=3$ for which
$$
\Acal\cup\Bcal\subsetneq \EcalKDCpu\subsetneq \EcalKDCext.
$$
The first strict inclusion follows from Proposition~\ref{prop:TSKD2bis} and says  that there exist additional pure KD-positive  states distinct from the basis states. In our example, there will, in addition, be mixed extreme KD-positive  states. In other words, the polytope $\conv{\EcalKDCpu}$ does not exhaust all of $\EcalKDC$. 

To motivate our choice of $U$, we first recall that it was shown in~\cite{debievre2023a} (Theorem~13) that, when $d=3$, any $U$ for which $\frac{\mab}{\Mab}>\frac{1}{\sqrt{2}}$ has the property that $\EcalKDCpu=\Acal\cup\Bcal$. If there existed such $U$ among the real orthogonal matrices, this would imply by Lemma~\ref{lem:realU} that $\EcalKDCpu\subsetneq \EcalKDCext$. However, such real orthogonal matrices do not exist. The largest value of $\frac{\mab}{\Mab}$ that can be obtained for real orthogonal matrices is $\frac{1}{2}$, attained for the matrix
\begin{equation}\label{eq:specialU}
U^\star=\frac13
\begin{pmatrix} 
-1&2&2\\
2&-1&2\\
2&2&-1
\end{pmatrix}.
\end{equation}
 We further know from~\cite{debievre2021} that in $d=3$, a pure KD-positive  state $\cket{\psi}$ of any transition matrix with $\mab>0$ satisfies $\nab(\psi) = n_{\mathcal{A}}(\psi) + n_{\mathcal{B}}(\psi) =4$, where 
\begin{equation}\label{eq:nab}
\na(\psi)=\sharp\{i\in\llbracket 1,d\rrbracket \mid \langle a_i|\psi\rangle\not=0\}, \ \mathrm{ and } \ \nb(\psi)=\sharp\{j\in\llbracket 1,d\rrbracket \mid \langle b_j|\psi\rangle\not=0\}.
\end{equation}
Here, $\sharp \{ \cdot \}$ denotes the cardinality of $ \{ \cdot \}$.
One can  construct all such states and check if they are KD-positive. Doing so, we find, in addition to the basis states, the following three KD-positive  states:
\[
\cket{\varphi_3} = \frac{\cket{a_1}-\cket{a_2}}{\sqrt{2}},\
\cket{\varphi_2} = \frac{\cket{a_3}-\cket{a_1}}{\sqrt{2}}, \ \mathrm{ and } \
\cket{\varphi_1} = \frac{\cket{a_2}-\cket{a_3}}{\sqrt{2}}.
\]
Their associated KD distribution\blue{s} are
\[
Q(\varphi_3) = \frac{1}{6}\begin{pmatrix} 1 & 2 & 0 \\ 2 & 1 & 0 \\ 0 & 0 & 0 \end{pmatrix}, \ Q(\varphi_2) = \frac{1}{6}\begin{pmatrix} 1 & 0 & 2 \\ 0 & 0 & 0 \\ 2 & 0 & 1 \end{pmatrix}, \ \mathrm{ and } \ Q(\varphi_1) = \frac{1}{6}\begin{pmatrix} 0 & 0 & 0 \\ 0 & 1 & 2 \\ 0 & 2 & 1 \end{pmatrix} ,
\]
respectively. Here, to lighten the notation, we write $Q(\varphi_i):=Q(|\varphi_i\rangle\langle \varphi_i|)$.
The operator $F_\perp$ in Eq.~\eqref{def:F_perp} is readily computed to be 
$$
F_\perp=\frac1{\sqrt6}
\begin{pmatrix}
0&1&1\\
1&0&1\\
1&1&0
\end{pmatrix}.
$$
A simple computation shows that 
$$
|\varphi_i\rangle\langle \varphi_i|=F_{b}^{i} +x_iF_\perp,
$$
with $x_i=\langle \varphi_i| F_\perp |\varphi_i\rangle=-\frac{1}{\sqrt{6}}$ and  
\[
F_{b}^{i}= \frac{5}{6}\sum_{k\neq i} \cket{b_k}\bra{b_k} + \frac{1}{12}\cket{b_i}\bra{b_i} -\frac{3}{4}\cket{a_i}\bra{a_i}.
\]
Hence, that the triangle  with vertices $|\varphi_i\rangle\langle \varphi_i|$ lies in a  plane parallel to $\spanRAB$ at a distance $|x_i|$ from it.  It follows that $\conv{\EcalKDCpu}$ lies between $x=-\frac{1}{\sqrt6}$ and $x=0$. Indeed, if 
\[
\rho = \sum_{i=1}^{3} \lambda_{i}\cket{a_i}\bra{a_i} + \mu_{i}\cket{b_i}\bra{b_i} + \delta_{i}\cket{\varphi_i}\bra{\varphi_i}
\]
with  $\sum_{i=1}^{3} \lambda_{i}+ \mu_{i}+ \delta_{i}=1$ where $\lambda_{i}\geqslant 0, \mu_{i}\geqslant 0, \ \mathrm{ and } \ \delta_{i}\geqslant 0$ for all $i\IntEnt{1}{3}$, then 
\[
x_{\rho} = \mathrm{Tr}(\rho F_\perp) = \sum_{i=1}^{3} \delta_{i}\bra{\varphi_i}F_\perp\cket{\varphi_i} = -\frac{1}{\sqrt{6}}\sum_{i=1}^{3} \delta_{i} .
\]
Thus,  $-\frac{1}{\sqrt{6}}\leqslant x_{\rho} \leqslant 0$. It follows that any  KD-positive  states for which $x>0$ cannot belong to $\conv{\EcalKDCpu}$.
We now construct such states. 

We consider $\rho({x})=\frac{1}{3}I_{d} + xF_\perp$. Its  eigenvalues are $\left\{\frac{2(1+x\sqrt{6})}{6},\frac{2-x\sqrt{6}}{6}, \frac{2-x\sqrt{6}}{6} \right\}$ which are positive if and only if $x\in[-\frac{1}{\sqrt{6}},\frac{2}{\sqrt{6}}]$. Only in these cases is $\rho(x)$ a quantum state. 
Moreover,
\[
Q(F_\perp) =\frac{\sqrt{6}}{27} \begin{pmatrix}
-2 & 1 & 1 \\
1 & -2 & 1 \\
1 & 1 & -2
\end{pmatrix} ,
\]
and consequently 
\[
Q(\rho(x)) = \frac{1}{27}
\begin{pmatrix}
1-2x\sqrt{6} & 4+x\sqrt{6} & 4+x\sqrt{6} \\
4+x\sqrt{6} & 1-2x\sqrt{6}& 4+x\sqrt{6}\\
4+x\sqrt{6} & 4+x\sqrt{6} & 1-2x\sqrt{6}
\end{pmatrix}.
\]
So $\rho(x)$ is KD-positive if and only if $x\in [-\frac{4}{\sqrt{6}},\frac{1}{2\sqrt{6}}]$. Finally, $\rho(x)$ is a KD-positive  state provided that $x\in[-\frac{1}{\sqrt{6}},\frac{1}{2\sqrt{6}}]$.

Thus, for all $x\in ] 0,\frac{2}{3}]$, $\rho(x)$ is a KD-positive  mixed state and from what precedes, it follows that $\rho(x)$ is not in the convex hull of the KD-positive  pure states. This construction therefore exhibits explicit KD-positive  mixed states that are not in the convex hull of $\EcalKDCpu$.  In addition, Lemma~\ref{lem:EKDCorthV2} allows the generalisation of   this construction, which shows that, for every $\rho\in\mathrm{Int}(\convAB)$,  there exists a continuous family of states $\rho(x) = \rho+x F_{\perp}$ which are not in the convex hull of KD-positive  pure states.  Here and elsewhere, $\mathrm{Int}(A)$ stands for the interior of $A$. 

Since $\EcalKDC$ is a convex and compact set it follows from the Krein-Milman Theorem \cite{hiriart-urrutylemarechal2001} that  it is the convex hull of its extreme points. We thus reach the conclusion that $\EcalKDC$ has extreme points that are not pure and not in the convex hull of the pure KD-positive  states.  We identify some of them explicitly in Appendix~\ref{subs:KDCextmixed}. It is known that mixed extreme states   exist also for the discrete variable Wigner function (at least in $d=3$~\cite{gross2006}), as well as in the continuous variable Wigner function~\cite{genonietal2013,vanherstraetencerf2021, hertzdebievre2023}. However, to the best of our knowledge, such states have never been explicitly identified.

\noindent{\bf Comment.} We point out that the set $\EcalKDC$ is in the example above  not a polytope, since it has an infinite number of extreme points. To see this, note that,  if $\EcalKDC$ was a polytope, any 2-dimensional section of it would be a polygon. However, if we consider the 2-dimensional plane containing $F_{\perp}$, $\frac{1}{2}(\cket{a_1}\bra{a_1} + \cket{a_2}\bra{a_2})$ and $\cket{a_3}\bra{a_3}$ and we intersect it with $\EcalKDC$, then simple computations show that this intersection is not a polygon, as illustrated in Fig.\ref{fig:polytope}. Hence, $\EcalKDC$ is not a polytope.

\begin{figure}
    \centering
    \includegraphics[scale=0.5]{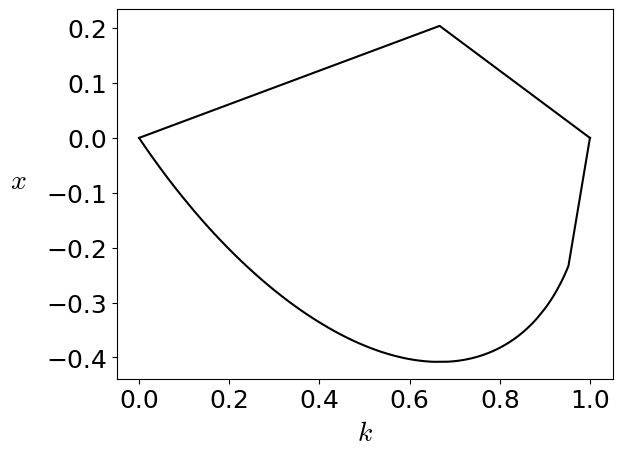}
    \caption{A point $(k,x)$ on this graph represents the state $\frac{k}{2}(\cket{a_1}\bra{a_1}+\cket{a_2}\bra{a_2}) + (1-k)\cket{a_3}\bra{a_3} + x F_{\perp}$. A state is KD positive if and only if it lies inside the region drawn. Thus, this figure displays a 2-dimensional section of $\EcalKDC$ that is not a polygon. This shows that $\EcalKDC$ is not a polytope.}
    \label{fig:polytope}
\end{figure}
%%%%%%%%%%%%%%%%%%%%%%%%%%%%%%%%%%%%%%%%%%%%%%%%%%%%%%
\subsection{An application: the case of spin-$s$}\label{s:spin1}
In this subsection, we show how the transition matrix $U^\star$ of the previous subsection arises naturally in a spin-$1$ system. 

First, we show that the transition matrices between spin component bases for spin-$s$ systems are (equivalent to) a real matrix.  Let 
$|z,m\rangle$
be the standard basis vectors of $J_z$, with eigenvalues $m=-s, \dots, s$. Let 
$$
R(\alpha,\beta,\gamma)=R_z(\alpha)R_y(\beta)R_z(\gamma) ,
$$
be the rotation matrix with Euler angles $\alpha,\beta,\gamma$. Further,  let 
$$
\hat R(\alpha, \beta,\gamma)=\exp(-i\alpha J_z)\exp(-i\beta J_y)\exp(-i\gamma J_z) ,
$$
be the irreducible unitary action of the rotation group on the spin-$s$ space $\mathcal H^{(s)}$. Then the states 
$$
\hat R(\alpha, \beta,\gamma)|z, m\rangle
$$
are the eigenstates of 
$$
\hat R(\alpha, \beta,\gamma)e_z\cdot \vec J,
$$
where $e_z$ is the unit vector along the $z$-axis and $\vec J=(J_x, J_y, J_z)$~\cite{SA94}. 
We define
\begin{equation}\label{eq:ABbasis}
|a_m\rangle=\exp(-i\alpha m)|z,m\rangle,\quad |b_m\rangle=\exp(i\gamma m)\hat R(\alpha, \beta,\gamma)|z, m\rangle.
\end{equation}
Let us write $\EcalKDC(\alpha,\beta,\gamma)$ for the corresponding space of KD-positive states. Then the transition matrix $U(\alpha,\beta,\gamma)$ between these two bases has matrix elements
$$
U_{m'm}(\alpha,\beta,\gamma)=d^{(s)}_{m'm}(\beta),
$$
where $d^{(s)}(\beta)$ is Wigner's small $d$-matrix for spin-$s$~\cite{SA94}.  Note that the transition matrix depends only on the Euler angle $\beta$, not on the two others. Consequently, the same is true for $\EcalKDC(\alpha,\beta,\gamma)=\EcalKDC(\beta)$. 
%\red{[[Stephan. I found this very hard to believe but it seems right? Better check!]]}
Wigner's small $d$-matrix is real-valued so that the theory of the previous subsections applies. In particular, one never has $\EcalKDCext(\alpha,\beta,\gamma)=\Acal\cup\Bcal$ in this situation.

Let us now concentrate on the case $s=1$. One can then check that, with $\alpha_0=-\gamma_0=\frac{\pi}{4}$ and $\beta_0=\arccos(-1/3)$,
$$
R(\alpha_0,\beta_0,\gamma_0)=
\frac13\begin{pmatrix} -1&2&2\\2&-1&2\\ 2&2&-1
\end{pmatrix}.
$$
This unitary can be interpreted as a rotation by the angle $\pi$ about the axis $n=\frac{1}{\sqrt{3}}
\begin{pmatrix} 1\\1\\1\end{pmatrix}$.
The $ \{ |b_m\rangle \}$ therefore forms, for $m=-1,0,1$,  the eigenbasis of the observable $e_{z'}\cdot J$ where 
$$
e_{z'}=R(\alpha_0, \beta_0,\gamma_0) e_z,
$$
so that $e_{z'}\cdot J=\frac13(2J_x+2J_y-J_z)$.
Hence,
$$
U(\beta_0)=d^{(1)}(\beta_0)=\begin{pmatrix} \frac{1+\cos(\beta_0)}{2}&-\frac{\sin(\beta_0)}{\sqrt{2}}&\frac{1-\cos(\beta_0)}{2}\\ \frac{\sin(\beta_0)}{\sqrt{2}}&\cos(\beta_0)&-\frac{\sin(\beta_0)}{\sqrt{2}}\\ \frac{1-\cos(\beta_0)}{2}&\frac{\sin(\beta_0)}{\sqrt{2}}&\frac{1+\cos(\beta_0)}{2}
\end{pmatrix} = \frac{1}{3}\begin{pmatrix} 1&-2&2\\ 2&-1&-2\\ 2&2&1
\end{pmatrix}.
$$
Furthermore, 
$$
U(\beta_0)=d^{(1)}(\beta_0)=\begin{pmatrix}1&0&0\\0&-1&0\\0&0&-1\end{pmatrix}U^\star
\begin{pmatrix}-1&0&0\\0&-1&0\\0&0&1\end{pmatrix},
$$
where $U^\star$ is as in Eq.~\eqref{eq:specialU}.  
In conclusion, it then follows from Eq.~\eqref{eq:phases} that the matrix $U(\beta_0)$ is equivalent to the matrix $U^\star$.
As a result, if, for spin-$1$, the bases $\Acal$ and $\Bcal$ are as in Eq.~\eqref{eq:ABbasis}, with $\beta=\beta_0$, then the set of KD-positive states $\EcalKDC(\beta_0)$ is as described in the previous subsection. In particular, there then  exist mixed KD-positive states that are not mixtures of pure KD-positive states.

\subsection{Examples where $\convAB = \conv{\EcalKDCpu}\subsetneq \EcalKDC$}\label{s:mixedextremed=4n}
In this subsection, we show that there exist bases $\Acal$ and $\Bcal$ for which
\begin{equation}\label{eq:eureka3}
\Acal\cup\Bcal= \EcalKDCpu\subsetneq \EcalKDCext,
\end{equation}
so that 
\begin{equation}
\convAB = \conv{\EcalKDCpu}\subsetneq \EcalKDC.
\end{equation}
In  words, in this situation, the only pure KD-positive  states are the basis states, but those do not exhaust all extreme KD-positive  states: there   exist also mixed extreme KD-positive  states.  With the following proposition we show the occurrence of such situations in a large class of examples in dimensions $d=4n$ (with integer $n\leq166$) or $2^m$ with $m\in\N^*$ \cite{boykinetal2005,Ba22}. These examples are  less explicit than the three-dimensional example of the previous subsections.
\begin{Prop}
    Suppose that $U$ is a real-valued transition matrix for MUB bases in dimension $d\geqslant 4$. Then, there exists a real-valued and unitary matrix $W$ close to $U$, such that $W$ satisfies Eq. \eqref{eq:eureka3}.
\end{Prop}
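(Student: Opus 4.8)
The plan is to manufacture $W$ as a small perturbation of $U$ \emph{inside the orthogonal group} $O(d)$, and to use the rigidity of the real Hadamard sign pattern to suppress all non-basis pure KD-positive states. Recall first that a real MUB transition matrix is, up to the factor $1/\sqrt d$, a real Hadamard matrix: all entries are $\pm 1/\sqrt d$, so $\mab(U)=1/\sqrt d>0$, and since no entry vanishes, every real orthogonal $W$ with $\|W-U\|$ small has the \emph{same} sign pattern as $U$ and still obeys $\mab(W)>0$. The decisive reduction is this: once such a $W$ with $\EcalKDCpu=\Acal\cup\Bcal$ is produced, Lemma~\ref{lem:realU} (applied to the real matrix $W$ in dimension $d\geq 4\geq 3$) immediately gives $\Acal\cup\Bcal\subsetneq\EcalKDCext$, which is exactly Eq.~\eqref{eq:eureka3}. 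So the entire task is to kill the non-basis pure KD-positive states.

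First I would record the real structure of pure KD-positive states. If $W$ is real and $\cket\psi$ is pure KD positive, then writing $\bracket{a_i}{\psi}=r_ie^{i\theta_i}$ and using $Q_{ij}(\psi)=W_{ij}\,\bracket{a_i}{\psi}\,\overline{\bracket{b_j}{\psi}}\geq 0$ with $W_{ij}\in\R$, the phase differences $\theta_i-\theta_{i'}$ are forced into $\{0,\pi\}$ on the $\Acal$-support, so $\cket\psi$ equals, up to a global phase, a real unit vector $\psi\in\R^d$. Setting $\chi:=W^{\mathsf T}\psi$ (so $\chi_j=\bracket{b_j}{\psi}$), KD positivity becomes the purely real condition
\[
W_{ij}\,\psi_i\,\chi_j\geq 0\qquad\text{for all }(i,j)\in\IntEnt{1}{d}^2 .
\]
A basis state is precisely $\psi=\pm e_i$ (an $\Acal$-vector, so $\na(\psi)=1$) or $\psi=\pm W e_j$ (a $\Bcal$-vector, so $\nb(\psi)=1$); hence a \emph{non-basis} KD-positive state has $\na(\psi)\geq 2$ and $\nb(\psi)\geq 2$, where $\nb(\psi)=\sharp\{j:\chi_j\neq0\}$.

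The crux is then a sign-consistency constraint. On the supports $S_\Acal=\{i:\psi_i\neq0\}$ and $S_\Bcal=\{j:\chi_j\neq0\}$ the displayed inequality is strict, so $\mathrm{sign}(W_{ij})=\mathrm{sign}(\psi_i)\,\mathrm{sign}(\chi_j)$ there; i.e. the sign submatrix is an outer product. In particular, for any two rows $i,i'\in S_\Acal$ the product $\mathrm{sign}(W_{ij})\,\mathrm{sign}(W_{i'j})$ is constant in $j\in S_\Bcal$. But for a Hadamard sign pattern two distinct rows are orthogonal, hence agree in exactly $d/2$ columns and disagree in the other $d/2$; constancy over $S_\Bcal$ forces $S_\Bcal$ into one of these two sets of size $d/2$, so $\nb(\psi)\leq d/2$, and symmetrically $\na(\psi)\leq d/2$. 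Thus every non-basis pure KD-positive state satisfies $\na(\psi)+\nb(\psi)\leq d$. I expect \textbf{this Hadamard rigidity bound to be the main obstacle/key point}: it is exactly what excludes the otherwise delicate large-support regime $\na+\nb>d$ (where the support equations are underdetermined and solutions are hard to rule out), and it persists for every $W$ near $U$ because the sign pattern is locally constant.

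Finally I would close with a dimension count plus a genericity argument inside $O(d)$. Fix a sign-consistent pair $(S_\Acal,S_\Bcal)$ with $a:=|S_\Acal|\geq 2$, $b:=|S_\Bcal|\geq2$; by the previous step $a\leq d-b$. A non-basis state supported on this pair must satisfy $(W^{\mathsf T}\psi)_j=0$ for the $d-b$ indices $j\notin S_\Bcal$, a linear system on $\R^{S_\Acal}$ governed by the $(d-b)\times a$ submatrix $(W_{ij})_{j\notin S_\Bcal,\,i\in S_\Acal}$. If this submatrix has full column rank $a$, the only solution is $\psi=0$ and no such state exists. Full column rank holds for a Haar-generic orthogonal matrix, so for each of the finitely many pairs $(S_\Acal,S_\Bcal)$ the set of $W\in O(d)$ making the submatrix rank-deficient is a proper real-algebraic subvariety, closed with empty interior. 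Their finite union is again closed with empty interior, so its complement is open and dense in $O(d)$ and meets every neighbourhood of $U$. Choosing a real orthogonal $W$ in this complement with $\|W-U\|$ small enough that $\mab(W)>0$ and the sign pattern is preserved, one gets a matrix with no non-basis pure KD-positive state, i.e. $\EcalKDCpu=\Acal\cup\Bcal$; Lemma~\ref{lem:realU} then delivers $\Acal\cup\Bcal\subsetneq\EcalKDCext$, proving Eq.~\eqref{eq:eureka3}.
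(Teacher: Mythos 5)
Your proof is correct in substance, and the final reduction is exactly the paper's: once you have a real $W$ near $U$ with $\mab(W)>0$ and $\EcalKDCpu=\Acal\cup\Bcal$, Lemma~\ref{lem:realU} gives $\Acal\cup\Bcal\subsetneq\EcalKDCext$ and hence Eq.~\eqref{eq:eureka3}. Where you genuinely diverge is in how that $W$ is produced. The paper does this in one line by citing Theorem~13 and the proof of Theorem~5 of~\cite{debievre2023a}: since a real MUB transition matrix has $\mab/\Mab=1$, a small real perturbation keeps $\mab/\Mab$ large enough that the cited theorem forces all pure KD-positive states to be basis states. You instead give a self-contained argument: (i) for real $W$ every pure KD-positive state is, up to a global phase, a real vector; (ii) the Hadamard sign pattern (shared by all $W$ near $U$, since no entry of $U$ vanishes) forces, via row-orthogonality of $\pm1$ vectors, any non-basis KD-positive state to satisfy $\na,\nb\le d/2$; and (iii) for Haar-generic $W\in O(d)$ the resulting underdetermined support equations $(W^{\mathsf T}\psi)_j=0$, $j\notin S_\Bcal$, admit only $\psi=0$. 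This buys independence from the external reference and makes visible \emph{why} the perturbation works (the rigidity bound $\na+\nb\le d$ is the real content), at the cost of invoking a genericity fact you assert rather than prove: that the rank-deficiency locus of each $(d-b)\times a$ submatrix is a \emph{proper} closed subvariety of the connected component of $O(d)$ containing $U$. That is true, but deserves a witness --- e.g.\ since $a\le d-b$ there is a permutation matrix (sign-adjusted on a row outside $S_\Acal$ to land in the correct component) whose relevant submatrix has full column rank, so the defining polynomial does not vanish identically and the bad set has empty interior. With that line added, your argument is complete and arguably more informative than the paper's citation-based proof, though narrower in that it leans on the Hadamard structure, whereas the cited Theorem~13 applies to any $U$ with $\mab/\Mab>1/\sqrt2$.
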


\begin{proof}
We write $U=(\bracket{a_i^{U}}{b_j^{U}})_{(i,j)\in\IntEnt{1}{d}^2}$. We have that 
$$
\mab^{U} = \min_{(i,j)\in\IntEnt{1}{d}^2} \left|\bracket{a_i^{U}}{b_{j}^{U}}\right| = \frac{1}{\sqrt{d}}
$$ and 
$$
M_{\Acal,\Bcal}^{U} = \max_{(i,j)\in\IntEnt{1}{d}^2} \left|\bracket{a_i^{U}}{b_{j}^{U}}\right| = \frac{1}{\sqrt{d}}
$$ because $U$ is the transition matrix for MUB bases.

 It follows from Theorem~13 and (the proof of) Theorem~5 in~\cite{debievre2023a} that for all $\epsilon>0$ small enough, there exists a real unitary matrix $W=(\bracket{a_i^{W}}{b_j^{W}})_{(i,j)\in\IntEnt{1}{d}^2}$ such that $\norme{U-W}{\infty} \leqslant \epsilon$ and with the property that the only KD-positive pure states for $W$ are the basis states: $\EcalKDCpu=\mathcal A\cup \mathcal B$. 

Thus, $\mathrm{span}_{\R}(\EcalKDCpu) = \spanRAB$.  On the other hand, since $W$ is real, we know from Lemma~\ref{lem:realU} that $\VR= \SAOr$. Hence $ \mathrm{dim}(\VR)=\frac{d(d+1)}{2} > 2d-1 = \mathrm{dim}(\spanRAB)$, so that it follows from Proposition~\ref{prop:TSKD2bis} that $\convAB\subsetneq \EcalKDC$. Thus, Eq.~\eqref{eq:eureka3} holds.
\end{proof}

\section{Conclusions and discussion}\label{s:conc}

In recent years, the Kirkwood-Dirac quasiprobability distribution has risen as a versatile and powerful tool to study and develop protocols in discrete-variable quantum-information processing. Given two bases $\mathcal{A}$ and $\mathcal{B}$ and a state $\rho$, the associated KD distribution can be either a probability distribution or not. As reviewed in our Introduction, the existence of negative or nonreal entries in a KD distribution has been linked to quantum phenomena in several areas of quantum mechanics. This motivated us to investigate the divide between positive and nonpositive KD distributions. Previous studies~\cite{arvidsson-shukuretal2021, debievre2021, debievre2023a} have mapped out sufficient and necessary conditions for a pure state to assume a nonpositive KD distribution. But, to the best of our knowledge, no previous work has provided such an analysis for mixed states. 

In this work, we have presented the first thorough analysis of the set of mixed states that assume positive KD distributions. Our results can be grouped in two categories. 

\begin{itemize}
  \item Firstly, we have established that in several scenarios the set of KD positive states equals the convex combinations of the bases’ states $\mathcal{A}$ and $\mathcal{B}$. In particular, we have proven this to be the case for: $(i)$ any qubit ($d=2$) system provided  that $\mab >0$; $(ii)$ an open dense set of probability $1$ of possible choices of bases $\Acal$ and $\Bcal$ in dimension $3$; $(iii)$ prime dimensions, when the unitary transition matrix between the two bases $\Acal$ and $\Bcal$ is the discrete Fourier transform; and $(iv)$ any two bases that are sufficiently close to some other pair of bases for which the property holds. In addition to having shown that $\EcalKDC = \convAB$ for randomly chosen bases in dimension $d=3$, we conjecture that this is true also in higher dimensions $d\geq 4$.  We have given analytical and numerical evidence to that effect.   
  
  \item Secondly, we have proven that there exist scenarios where the set of KD-positive states includes mixed states that cannot be written as convex combinations of pure KD-positive states: in other words, we have shown that in such cases $\EcalKDC \neq \conv{\EcalKDCpu}$. This mirrors what happens for mixed Wigner positive states which are known not to all be mixtures of pure Wigner positive states~\cite{genonietal2013,vanherstraetencerf2021, hertzdebievre2023}. However, we go further by explicitly constructing, for a specific spin-$1$ system, extreme KD-positive states that cannot be written as convex mixtures of pure KD-positive states. For the Wigner distribution, such extreme positive states have not yet been constructed.   
\end{itemize}

Having a good understanding of the KD-positive states is a prerequisite for an efficient study of states for which the KD distribution takes negative or nonreal values. The latter are known to be related to nonclassical phenomena in various applications. To analyze the connection between mixed nonpositive states and nonclassicality, one should investigate measures and monotones of KD nonpositivity. In an upcoming paper, currently under construction, we use the findings of this work to analyze the so-called KD negativity~\cite{GoHaDr19}  of mixed states. Furthermore, our follow-up paper extends to mixed states the characterization of KD-positive states via their support uncertainty, as was done for pure states in~\cite{arvidsson-shukuretal2021, debievre2021, debievre2023a}.

\medskip
\noindent {\it Acknowledgments}: 
This work was supported in part by the Agence Nationale de la Recherche under grant ANR-11-LABX-0007-01 (Labex CEMPI), by the Nord-Pas de Calais Regional Council and the European Regional Development Fund through the Contrat de Projets \'Etat-R\'egion (CPER), and by  the CNRS through the MITI interdisciplinary programs. We thank Girton College, Cambridge, for support of this work. D.R.M. Arvidsson-Shukur thanks Nicole Yunger Halpern for useful discussions.

\appendix 
\appendixpage
\renewcommand{\theequation}{\thesection.\arabic{equation}}
\renewcommand{\thesection}{\Alph{section}}
\renewcommand{\thesubsection}{\Alph{section}.\arabic{subsection}}

\section{The geometric structure of $\convAB$}\label{s:convABstructure}
The polytope $\convAB$ has $2d$ vertices and lies in the $2(d-1)$-dimensional affine subspace of $\spanRAB$ determined by the constraint $\Tr F=1$. In this Appendix, we identify its interior and the facets making up its boundary. We will see the interior is not empty, so that the polytope is $2(d-1)$-dimensional. Its facets are therefore $(2d-3)$-dimensional. They are polytopes with $2(d-1)$ vertices. 
In particular, when $d=3$, $\convAB$ is a four-dimensional object. Its boundary has $9$ facets, which are three-dimensional tetrahedra.   

\begin{Lemma}
Let $\mab>0$. Let $\rho=\sum_i \lambda_i |a_i\rangle\langle a_i|+ \sum_j \mu_j |b_j\rangle\langle b_j|\in\convAB$, with $\lambda_i\geq0, \mu_j\geq 0$. Then 
\begin{equation}
\rho\in\mathrm{Int}\left(\convAB\right)\Leftrightarrow\lambda_1\lambda_2\dots\lambda_d\not=0\ \textrm{or}\ \mu_1\mu_2\dots\mu_d\not=0.
\end{equation}
\end{Lemma}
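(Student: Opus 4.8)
The plan is to reduce the statement to two standard facts: a characterization of the relative interior of a polytope, and the one-parameter ambiguity in representing elements of $\convAB$ recorded in Lemma~\ref{lem:G1}. Recall from the start of this Appendix that $\convAB$ is a $2(d-1)$-dimensional polytope sitting inside the affine subspace $\{F\in\spanRAB\mid \Tr F=1\}$; accordingly, $\mathrm{Int}$ here denotes the interior relative to this affine hull, which I would confirm is the appropriate notion. The convexity fact I would invoke (see e.g.~\cite{hiriart-urrutylemarechal2001}) is that a point of $\convAB$ lies in this relative interior precisely when it can be written as a convex combination of the vertices $\cket{a_i}\bra{a_i}$, $\cket{b_j}\bra{b_j}$ with \emph{all} coefficients strictly positive. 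The second fact, taken from the proof of Lemma~\ref{lem:G1}, is that the kernel of the coefficient map $\Gamma$ is one-dimensional, spanned by the vector whose $\lambda$-entries all equal $-1$ and whose $\mu$-entries all equal $+1$; hence any two nonnegative representations of the same $\rho$ are related by a single shift $\lambda_i\mapsto\lambda_i-t$, $\mu_j\mapsto\mu_j+t$, $t\in\R$, which leaves $\Tr\rho=\sum_i\lambda_i+\sum_j\mu_j$ invariant.

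For the direction ``$\Leftarrow$'', by the symmetry between $\Acal$ and $\Bcal$ I may assume all $\lambda_i>0$. If some of the $\mu_j$ vanish, I would apply the shift with a small $t=\epsilon>0$: every $\mu_j$ becomes $\mu_j+\epsilon>0$, while every $\lambda_i$ becomes $\lambda_i-\epsilon$, which stays positive as long as $0<\epsilon<\min_i\lambda_i$. This produces a representation of $\rho$ with strictly positive coefficients, so $\rho\in\mathrm{Int}(\convAB)$.

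For the direction ``$\Rightarrow$'', membership in the relative interior gives a representation $\rho=\sum_i\lambda_i'\cket{a_i}\bra{a_i}+\sum_j\mu_j'\cket{b_j}\bra{b_j}$ with all $\lambda_i'>0$ and all $\mu_j'>0$. By the kernel description this differs from the given one by $\lambda_i'=\lambda_i-t$, $\mu_j'=\mu_j+t$ for a fixed $t$. If $t\geq0$ then $\lambda_i>t\geq0$ for every $i$, so $\lambda_1\cdots\lambda_d\neq0$; if $t\leq0$ then $\mu_j>-t\geq0$ for every $j$, so $\mu_1\cdots\mu_d\neq0$. In both cases the right-hand condition holds.

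The step I expect to require the most care is not the computation but the well-posedness of the statement: the product condition is phrased for one particular nonnegative representation, whereas $\mathrm{Int}(\convAB)$ is intrinsic to $\rho$. The shift analysis resolves this, since ``$\lambda_1\cdots\lambda_d\neq0$ or $\mu_1\cdots\mu_d\neq0$'' is equivalent to the manifestly shift-invariant inequality $\min_i\lambda_i+\min_j\mu_j>0$ (both summands being nonnegative). I would make this invariance explicit so that the equivalence is meaningful independently of the chosen representation; the only external ingredient then needed is the relative-interior characterization quoted above, which I would either cite or justify directly from the definition of the convex hull.
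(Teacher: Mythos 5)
Your proof is correct, but it reaches the conclusion by a different route than the paper, most visibly in the forward direction. The paper's proof of ``$\Rightarrow$'' is by contraposition and is KD-theoretic: assuming (after reordering) $\lambda_d=0=\mu_d$, it perturbs $\rho$ by $\delta\rho=\epsilon(|a_1\rangle\langle a_1|-|a_d\rangle\langle a_d|)$ and observes that $Q(\rho+\delta\rho)_{dd}=-\epsilon|\langle a_d|b_d\rangle|^2<0$, so the perturbed operator fails KD positivity and hence leaves $\convAB$; this exhibits an explicit escaping direction inside the affine hull and needs only the easy inclusion $\convAB\subseteq\EcalKDC$. You instead combine two structural facts: the standard characterization of the relative interior of a polytope as the set of convex combinations of the generators with all coefficients strictly positive, and the one-dimensionality of $\Ker\Gamma$ from the proof of Lemma~\ref{lem:G1}, which pins down the shift $\lambda_i\mapsto\lambda_i-t$, $\mu_j\mapsto\mu_j+t$ relating any two representations. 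Your ``$\Leftarrow$'' direction is essentially the paper's (both redistribute a small amount of weight from the $\Acal$-coefficients to the $\Bcal$-coefficients using $\sum_i|a_i\rangle\langle a_i|=\sum_j|b_j\rangle\langle b_j|$), but your ``$\Rightarrow$'' is purely convex-geometric and never touches the KD distribution. What your approach buys is a cleaner resolution of the well-posedness issue — your observation that the condition is equivalent to the shift-invariant inequality $\min_i\lambda_i+\min_j\mu_j>0$ makes explicit that the lemma's criterion does not depend on the chosen nonnegative representation, something the paper only addresses in a remark after the statement. What it costs is one external ingredient: the nontrivial inclusion $\mathrm{relint}(\conv{\{v_k\}})\subseteq\{\sum_k t_kv_k\mid t_k>0,\ \sum_kt_k=1\}$, which you should either cite (it is in~\cite{hiriart-urrutylemarechal2001}) or prove by the usual barycenter argument; the paper's explicit perturbation avoids needing this fact at all.
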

Note that the expression of $\rho$ as a convex combination of the basis vectors is not unique. What we are saying is that, if $\rho$ can be expressed in the manner stated, then it belongs to the interior of the polytope, and vice versa. 
\begin{proof} $\Rightarrow$ We will show the contrapositive. Suppose therefore that $\lambda_1\lambda_2\dots\lambda_d=0= \mu_1\mu_2\dots\mu_d$. We need to show $\rho\not\in \mathrm{Int}(\convAB)$. We can assume, without loss of generality, that $\lambda_d=0=\mu_d$.  Then $Q(\rho)_{dd}=0.$  Now consider, for $\epsilon>0$, $\delta\rho=\epsilon(|a_1\rangle\langle a_1|-|a_d\rangle\langle a_d|)$. 
Then, $\mathrm{Tr}(\rho+\delta\rho) = 1$, $\rho+\delta\rho\in\spanRAB$
and
$$
Q(\rho+\delta\rho)_{dd}=-\epsilon|\langle a_d|b_d\rangle|^2<0.
$$
Hence $\rho+\delta\rho\not\in\convAB$ so that $\rho$ belongs to the boundary of $\convAB$.\\
$\Leftarrow$
We consider the case where $\lambda_1\lambda_2\dots\lambda_d\not=0$, the other case being analogous. We can suppose without loss of generality that $\lambda_1=\min \lambda_i>0$. Then
$$
\sum_i\lambda_i|a_i\rangle\langle a_i|=
\frac{\lambda_1}{2}|a_1\rangle\langle a_1|+
\sum_{i\not=1}(\lambda_i-\frac{\lambda_1}{2})|a_i\rangle\langle a_i| 
+\frac{\lambda_1}{2}\sum_j |b_j\rangle\langle b_j|.
$$
Hence
$$
\rho= \frac{\lambda_1}{2}|a_1\rangle\langle a_1|+
\sum_{i\not=1}(\lambda_i-\frac{\lambda_1}{2})|a_i\rangle\langle a_i| 
+\sum_j(\mu_j+\frac{\lambda_1}{2}) |b_j\rangle\langle b_j|.
$$
Note that all coefficients are strictly positive.
Now consider a perturbation
$$
\delta\rho=\sum_i\epsilon_i|a_i\rangle\langle a_i|+\sum_j \delta_j |b_j\rangle\langle b_j|
$$
with $\epsilon_i, \delta_j\in\R$ and $\Tr\delta\rho=0$, then 
$$
\rho +\delta\rho \in \convAB
$$
provided the $\epsilon_i, \delta_j$ are small enough. 
In other words, there is a small ball centered on $\rho$ that belongs to $\convAB$.
\end{proof}

As a consequence of the proof, we also have the following result:
\begin{Cor}\label{cor:intconvAB}
Let $\mab>0$ and let $\rho$ be a density matrix. Then $\rho\in\mathrm{Int}(\convAB)$ if and only if there exist $\lambda_i>0, \mu_j>0$ so that
$\rho=\sum_i \lambda_i |a_i\rangle\langle a_i|+ \sum_j \mu_j |b_j\rangle\langle b_j|.$
\end{Cor}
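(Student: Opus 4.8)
The plan is to read off both implications from the proof of the preceding Lemma, exploiting the freedom in the choice of convex-combination representation of $\rho$. The reverse implication is immediate: if $\rho=\sum_i\lambda_i\cket{a_i}\bra{a_i}+\sum_j\mu_j\cket{b_j}\bra{b_j}$ with all $\lambda_i>0$ and all $\mu_j>0$, then in particular $\lambda_1\lambda_2\cdots\lambda_d\neq 0$, so the Lemma directly yields $\rho\in\mathrm{Int}\left(\convAB\right)$.

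For the forward implication, I would start from the observation that $\rho\in\mathrm{Int}(\convAB)\subseteq\convAB$ admits at least one expression $\rho=\sum_i\lambda_i\cket{a_i}\bra{a_i}+\sum_j\mu_j\cket{b_j}\bra{b_j}$ with $\lambda_i,\mu_j\geq0$. Applying the Lemma to this particular representation, the hypothesis $\rho\in\mathrm{Int}(\convAB)$ forces $\lambda_1\cdots\lambda_d\neq0$ or $\mu_1\cdots\mu_d\neq0$; without loss of generality I assume all $\lambda_i>0$, the other case being symmetric after exchanging the roles of $\Acal$ and $\Bcal$. It then remains only to convert this representation, in which a priori only the $\lambda_i$ are positive, into one in which all coefficients are strictly positive.

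The key step is the weight-shifting identity already used in the Lemma's proof: since $\sum_i\cket{a_i}\bra{a_i}=\bbone=\sum_j\cket{b_j}\bra{b_j}$, for any $t\in\R$ one has
\[
\rho=\sum_i(\lambda_i-t)\cket{a_i}\bra{a_i}+\sum_j(\mu_j+t)\cket{b_j}\bra{b_j}.
\]
Choosing $t=\tfrac12\min_i\lambda_i$, which is strictly positive because every $\lambda_i>0$, gives $\lambda_i-t\geq t>0$ for all $i$ and $\mu_j+t\geq t>0$ for all $j$. This exhibits the desired representation with all coefficients strictly positive and completes the argument.

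I do not anticipate a genuine obstacle: the statement is essentially a repackaging of the construction appearing in the ($\Leftarrow$) part of the Lemma's proof, where precisely such an all-positive representation is produced. The only points requiring a moment's care are the non-uniqueness of the convex decomposition — one must apply the Lemma to an actual chosen representation of $\rho$ and then symmetrize over the two cases $\lambda_1\cdots\lambda_d\neq0$ and $\mu_1\cdots\mu_d\neq0$ — and the fact that the hypothesis $\mab>0$ enters only through the Lemma, which it is needed to establish.
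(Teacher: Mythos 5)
Your proof is correct and follows essentially the same route as the paper, which derives the Corollary directly from the proof of the preceding Lemma: the reverse implication is the Lemma applied verbatim, and the forward implication is precisely the weight-shifting step $t=\tfrac12\min_i\lambda_i$ that the paper's proof of the Lemma's ($\Leftarrow$) direction already performs. The only detail worth keeping explicit, which you do, is that the Lemma must be applied to a specific chosen nonnegative representation of $\rho$ before symmetrizing over the two cases.
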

Let us introduce the notation
$$
[i_1,i_2,\dots,i_k;j_1,j_2,\dots,j_{\ell}]=\conv{|a_{i_1}\rangle\langle a_{i_1}|,\dots,|a_{i_k}\rangle\langle a_{i_k}|, |b_{j_1}\rangle\langle b_{j_1}|, \dots, |b_{j_\ell}\rangle\langle b_{j_\ell}|},
$$
for any choice $1\leq i_1<i_2<\dots i_k\leq d$, $1\leq j_1<j_2<\dots <j_\ell\leq d$. Also, for $i\in \llbracket 1,d\rrbracket$, we write $\bar i =\llbracket 1,d\rrbracket\setminus \{i\}$. Then the Lemma implies that
$$
\partial(\convAB)=\cup_{i,j} [\bar i; \bar j].
$$
When $d=3$, this becomes
$$
\partial(\convAB)=\cup_{i_1<i_2;j_1<j_2} [i_1, i_2;j_1, j_2].
$$
The boundary is then the union of $9$ tetrahedra. They are glued together along 18 triangles of one of the following forms:  $[i_1, i_2; j]$ with $i_1 < i_2$ or $[i; j_1, j_2]$ with  $j_1 < j_2$.

\section{The geometry of $\EcalKDC$: the case of real orthogonal $U$ in $d=3$}
In this section, we give some more details about the geometry of the convex set $\EcalKDC$ of all KD-positive states in the particular case where $U$ is a real orthogonal matrix in dimension $d=3$ (Section~\ref{s:d3general}). 
We then analyze in detail the   example of Section~\ref{s:mixedextremed=3} for which both $\convAB \subsetneq \EcalKDC$ and $\mathrm{conv}(\EcalKDCpu) \subsetneq \EcalKDC$ (Section~\ref{subs:KDCextmixed}).  

\subsection{Identifying $\EcalKDC$.}\label{s:d3general}
We recall that, as in Section \ref{s:eurekanew}, in dimension 3, if $U$ is real, we can write
$$
\VR = \spanRAB\oplus \R F_{\perp}.
$$ 
Here, $F_{\perp}$ is orthogonal to the  $\spanRAB$.  We denote by $P_{\Acal,\Bcal}$ the orthogonal projection on  $\spanRAB$ associated to this decomposition, and by $T_{\perp} : F\in \VR \mapsto\mathrm{Tr}(F F_{\perp})$. Hence, the orthogonal projection of $F$ on $\R F_{\perp}$ is given by $T_{\perp}(F)F_{\perp}$. The following technical lemma and proposition collect the main properties of the set $\EcalKDC$ in this particular situation.

\begin{Lemma} \label{lem:EKDCorthV2}
Let $\sigma\in\spanRAB$. Then we have either $\left(\sigma + \R F_{\perp}\right)\cap \EcalKDC = \emptyset$ or there exists $-\infty < x_{-}(\sigma) \leqslant x_{+}(\sigma) < +\infty$ such that $$\left(\sigma + \R F_{\perp}\right) \cap \EcalKDC = \sigma + [x_{-}(\sigma),x_{+}(\sigma)] F_{\perp}.
$$
\end{Lemma}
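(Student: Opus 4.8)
The plan is to reduce the statement to the elementary fact that a nonempty compact convex subset of $\R$ is a closed bounded interval. I would parametrize the line through $\sigma$ by the affine map $\iota_\sigma : x\in\R\mapsto \sigma + xF_\perp\in\SAO$. Since $F_\perp\neq 0$, this map is injective and continuous, and because it is injective one has $\iota_\sigma(I)=\left(\sigma+\R F_\perp\right)\cap\EcalKDC$, where
\[
I=\{x\in\R\mid \sigma + xF_\perp\in\EcalKDC\}=\iota_\sigma^{-1}(\EcalKDC).
\]
It therefore suffices to prove that $I$ is either empty or a compact interval $[x_-(\sigma),x_+(\sigma)]$ with $-\infty<x_-(\sigma)\leq x_+(\sigma)<+\infty$.

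First I would recall that, by Eq.~\eqref{eq:EcalKDCchar}, $\EcalKDC=\VRp\cap\SAOpone$ is convex and compact, as was already noted below that equation. Consequently $I$ is convex, being the preimage of the convex set $\EcalKDC$ under the affine map $\iota_\sigma$, and closed, being the preimage of the closed set $\EcalKDC$ under the continuous map $\iota_\sigma$. The only remaining point is to show that $I$ is bounded. Here I use that $F_\perp$ is transverse to $\spanRAB$: since $\sigma\in\spanRAB$, $F_\perp\perp\spanRAB$ and $\Tr F_\perp^2=1$, one has for every $x\in\R$
\[
T_\perp(\sigma+xF_\perp)=\Tr(\sigma F_\perp)+x\,\Tr F_\perp^2=x.
\]
As $T_\perp$ is a continuous linear functional and $\EcalKDC$ is compact, $T_\perp$ is bounded on $\EcalKDC$; hence $x=T_\perp(\sigma+xF_\perp)$ ranges over a bounded set as $x$ ranges over $I$. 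Thus $I$ is a closed, bounded, convex subset of $\R$, and the claim follows upon setting $x_-(\sigma)=\min I$ and $x_+(\sigma)=\max I$ when $I\neq\emptyset$.

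I do not expect any serious obstacle: the entire content is the standard structure of convex subsets of a line, combined with the observation that transversality of $F_\perp$ to $\spanRAB$ makes the coordinate $x$ coincide with the bounded functional $T_\perp$ evaluated on the state. The one step deserving a line of care is the boundedness of $I$, which is exactly what guarantees the finiteness of $x_\pm(\sigma)$; it is immediate once one notes the identity $x=T_\perp(\sigma+xF_\perp)$ and that the compact set $\EcalKDC$ is bounded.
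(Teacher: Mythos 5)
Your proof is correct and follows essentially the same route as the paper's: both arguments rest on the compactness and convexity of $\EcalKDC$ and use the continuous functional $T_\perp$ to identify the admissible values of $x$ as a nonempty compact interval. The only difference is that you spell out the boundedness step via the identity $T_\perp(\sigma+xF_\perp)=x$, which the paper leaves implicit.
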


\begin{proof}
If $\sigma \in\spanRAB$, then $\left(\sigma + \R F_{\perp}\right)\cap \EcalKDC$ is a compact convex set. Suppose the set is not empty. Therefore, as $T_{\perp}$ is continuous, $T_{\perp}(\left(\sigma + \R F_{\perp}\right)\cap \EcalKDC)$ is a non-empty compact interval of $\R$. This interval can be written as $\left(\sigma + \R F_{\perp}\right) \cap \EcalKDC = \sigma + [x_{-}(\sigma),x_{+}(\sigma)] F_{\perp}$ with $-\infty < x_{-}(\sigma) \leqslant x_{+}(\sigma) < +\infty$. 
\end{proof}

Let $\mcl{D} = P_{\Acal,\Bcal}(\EcalKDC)$ which is a subset of $\spanRAB$. Note that the second alternative of Lemma~\ref{lem:EKDCorthV2} happens if and only if $\sigma\in\mcl{D}$. In other words, $\mcl{D}$ is the domain of definition of $x_{-}$ and $x_{+}$. We will designate by $\mathrm{Int}(\mcl{D})$ the interior of $\mcl{D}$ as a subset of $\spanRAB$.

\begin{Prop}\label{prop:EKD2}
We have the following properties:
\begin{enumerate}[label=(\roman*),wide, labelindent=0pt]
	\item If $\sigma\in \mathrm{Int}(\convAB)$, $-\infty < x_{-}(\sigma) < 0 < x_{+}(\sigma) < +\infty$;
	\item If $\sigma\in\mcl{D}$ and $\sigma \notin \convAB$, then either $0<x_-(\sigma)\leq x_+(\sigma) < +\infty $ or $-\infty < x_-(\sigma)\leq x_+(\sigma)<0$;
	\item If $\sigma\in \Acal \cup \Bcal$ then $x_-(\sigma)=x_+(\sigma)=0$;
	\item The function $x_{+}$ has a maximum $x_{\max}$ on $\mcl{D}$. Moreover, the extreme points of 
 \[
 \mcl{Y}_{+}=\{\sigma + x_{\max}F_{\perp} \mid \sigma\in \mcl{D}, x_+(\sigma)=x_{\max}\}
 \]are extreme points of $\EcalKDC$;
	\item The function $x_{-}$ has a minimum $x_{\min}$ on $\mcl{D}$ . Moreover, the extreme points of 
 \[
 \mcl{Y}_{-}=\{\sigma + x_{\min}F_{\perp} \mid \sigma\in \mcl{D}, x_-(\sigma)=x_{\min}\}
 \]
 are extreme points of $\EcalKDC$;
	\item The function $x_{+}$ (resp. $x_{-}$) is concave (resp. convex) on $\mcl{D}$. Thus, it is continuous on $\mathrm{Int}(\mcl{D}).$ %where $\mathrm{R.Int}$ stands for the relative interior, meaning the interior of $\mcl{D}$, where $\mcl{D}$ is considered as a subspace of its affine hull. (C : We need to be rigurous here somehow : $D$ as an empty interior as a set of $\R^6$ because it is in the span of A,B of dimension 5 but we consider D as a subspace of span A,B for which is a non empty interior)..
\end{enumerate}
\end{Prop}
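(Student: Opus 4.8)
The plan is to treat the six items by reducing each to a structural fact already available, using throughout that $\EcalKDC$ is compact and convex, that $F_\perp$ is a traceless unit vector orthogonal to $\spanRAB$ (so $T_{\perp}(\sigma+xF_\perp)=x$ for $\sigma\in\spanRAB$), and that $\mcl{D}=P_{\Acal,\Bcal}(\EcalKDC)$ is compact and convex as the continuous linear image of $\EcalKDC$. The easiest item is (ii): since $\sigma=\sigma+0\cdot F_\perp$, we have $0\in[x_-(\sigma),x_+(\sigma)]$ iff $\sigma\in\EcalKDC$; but $\sigma\in\spanRAB$, so the second statement of Lemma~\ref{lem:G1} gives $\EcalKDC\cap\spanRAB=\convAB$, whence $\sigma\notin\convAB$ forces $\sigma\notin\EcalKDC$ and the nonempty interval $[x_-(\sigma),x_+(\sigma)]$ avoids $0$, i.e. lies strictly on one side. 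Item (iii) is a pure positivity computation: for $\sigma=\cket{a_i}\bra{a_i}$ the diagonal of $F_\perp$ vanishes in the $\Acal$-basis, so every principal $2\times2$ minor of $\sigma+xF_\perp$ equals $-\tfrac{x^2}{2}f_k^2$ for the appropriate component; since $(f_1,f_2,f_3)$ is a unit vector, not all $f_k$ vanish, and $\sigma+xF_\perp\succeq0$ forces $x=0$. The same works for $\cket{b_j}\bra{b_j}$, because $F_\perp$ also has vanishing diagonal in the $\Bcal$-basis while $F_\perp\neq0$.

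For (i) I would invoke Corollary~\ref{cor:intconvAB} to write $\sigma\in\mathrm{Int}(\convAB)$ as $\sum_i\lambda_i\cket{a_i}\bra{a_i}+\sum_j\mu_j\cket{b_j}\bra{b_j}$ with all $\lambda_i,\mu_j>0$. This has two consequences at $x=0$: the state $\sigma$ is positive definite (its expectation on any unit vector is at least $\min_i\lambda_i>0$), and all KD entries $Q_{ij}(\sigma)=|\bracket{a_i}{b_j}|^2(\lambda_i+\mu_j)\geq\mab^2(\min_i\lambda_i+\min_j\mu_j)$ are strictly positive. Since $\sigma+xF_\perp\in\VR$ for every $x$ and both the positivity and KD-positivity constraints are strict at $x=0$, the operator $\sigma+xF_\perp$ stays in $\EcalKDC=\VRp\cap\SAOpone$ for all small $|x|$ of either sign; finiteness of $x_\pm$ is automatic from compactness of $\EcalKDC$. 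This gives $-\infty<x_-(\sigma)<0<x_+(\sigma)<+\infty$.

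Item (vi) follows from convexity of $\EcalKDC$ alone: for $\sigma_1,\sigma_2\in\mcl{D}$ and $t\in[0,1]$, the convex combination of $\sigma_1+x_+(\sigma_1)F_\perp$ and $\sigma_2+x_+(\sigma_2)F_\perp$ lies in $\EcalKDC$, projects to $t\sigma_1+(1-t)\sigma_2$, and has $T_\perp$-value $tx_+(\sigma_1)+(1-t)x_+(\sigma_2)$, so this value is $\leq x_+(t\sigma_1+(1-t)\sigma_2)$; thus $x_+$ is concave, $x_-$ is convex by the symmetric argument, and continuity on $\mathrm{Int}(\mcl{D})$ is the standard regularity of concave functions. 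For (iv) and (v) the crucial remark is that $x_{\max}=\max_{\rho\in\EcalKDC}T_\perp(\rho)$ and $x_{\min}=\min_{\rho\in\EcalKDC}T_\perp(\rho)$, each attained since $T_\perp$ is continuous and $\EcalKDC$ is compact, which yields the claimed extrema of $x_\pm$. Identifying $\mcl{Y}_+=\EcalKDC\cap\{T_\perp=x_{\max}\}$ as the face of $\EcalKDC$ exposed by $T_\perp$, I would then argue: if $e$ is extreme in $\mcl{Y}_+$ and $e=t\rho_1+(1-t)\rho_2$ with $\rho_i\in\EcalKDC$, $t\in(0,1)$, then $T_\perp(\rho_1)=T_\perp(\rho_2)=x_{\max}$ places both $\rho_i$ in $\mcl{Y}_+$, so extremality of $e$ in $\mcl{Y}_+$ gives $\rho_1=\rho_2=e$; hence $e$ is extreme in $\EcalKDC$. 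The case of $\mcl{Y}_-$ is identical with $x_{\min}$.

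I expect the conceptual heart to be items (iv) and (v): recognizing the ``top'' and ``bottom'' slices as exposed faces and invoking that extreme points of a face are extreme points of the ambient convex body. Everything else is either a citation (Lemma~\ref{lem:G1}, Corollary~\ref{cor:intconvAB}) or a one-line positivity/convexity argument; the only computational care needed is in (iii), where one must note that the off-diagonal entries of $F_\perp$ cannot all vanish, so that the principal-minor constraints genuinely pin $x$ to $0$.
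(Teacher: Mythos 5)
Your proposal is correct and follows essentially the same route as the paper's proof: Lemma~\ref{lem:G1} for (ii), Corollary~\ref{cor:intconvAB} plus strictness of both constraints at $x=0$ for (i), the exposed-face argument for (iv)--(v), and the convex-combination argument for (vi). The only deviation is in (iii), where the paper uses the purity bound $\Tr\bigl((\sigma+xF_\perp)^2\bigr)=1+x^2>1$ for $x\neq0$ while you use principal $2\times2$ minors together with the vanishing of the diagonal of $F_\perp$ in both bases; both arguments are valid.
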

In particular, the proposition implies  that $\EcalKDC$ lies between the ``bounding planes'':
$$
\{F\in \VR\mid \Tr F F_\perp=x_{\max}\}\quad\textrm{and}\quad 
\{F\in \VR\mid \Tr F F_\perp=x_{\min}\}.
$$
\begin{proof}
\begin{enumerate}[label=(\roman*),wide, labelindent=0pt]
 \item Suppose $\sigma\in \mathrm{Int}(\convAB)$, then, by Corollary \ref{cor:intconvAB}, $\sigma = \sum_{i=1}^{d} \lambda_{i}\cket{a_i}\bra{a_i} +  \mu_{i}\cket{b_i}\bra{b_i}$ with $\lambda_i > 0$ and $\mu_i >0$ for all $i\in\IntEnt{1}{d}$. Thus, $\min_{(i,j)\in\IntEnt{1}{d}^2} Q_{i,j}(\sigma) > 0$. Then, for $\epsilon > 0$,
\[
\forall (i,j)\in\IntEnt{1}{d}^2, \ Q_{i,j}(\sigma+\epsilon F_{\perp}) = Q_{i,j}(\sigma) + \epsilon Q_{i,j}(F_{\perp}) >  \min_{(i,j)\in\IntEnt{1}{d}^2} Q_{i,j}(\sigma) - \epsilon \max_{(i,j)\in\IntEnt{1}{d}^2} \left|Q_{i,j}(F_{\perp})\right|.
\]
Thus, there exists an $\epsilon_1 > 0$ such that for $\epsilon\in [0,\epsilon_1]$, $\sigma+\epsilon F_{\perp}\in \VRp$. Here, we recall that $\VRp$ is the set of self-adjoint operators with positive KD distributions. 

Moreover, for $\cket{\psi}\in\H_{1}$, for all $\epsilon \in \R^{+}$,
\[
\bra{\psi}\sigma + \epsilon F_{\perp}\cket{\psi} \geqslant \bra{\psi}\sigma\cket{\psi} - 
\epsilon \max_{\cket{\phi}\in\H_{1}}\left|\bra{\phi} F_{\perp}\cket{\phi}\right|\]
and
\[
\begin{array}{rcl}
\bra{\psi}\sigma\cket{\psi} = \sum_{i=1}^{d} \lambda_{i}\left|\bracket{\psi}{a_i}\right|^2 +  \mu_{i}\left|\bracket{\psi}{b_i}\right|^2 &\geqslant& \min_{i\in \IntEnt{1}{d}} \{\lambda_i,\mu_i\} \sum_{i=1}^{d} \left|\bracket{\psi}{a_i}\right|^2 +  \left|\bracket{\psi}{b_i}\right|^2 \\
&\geqslant& 2\min_{i\in \IntEnt{1}{d}} \{\lambda_i,\mu_i\} > 0.
\end{array}
\]
Consequently, there exists an $\epsilon_2 > 0$ such that for $\epsilon \in [0,\epsilon_2]$, for all $\cket{\psi}\in\H_{1}$, $\bra{\psi}\sigma + \epsilon F_{\perp}\cket{\psi} \geqslant 0$.
Therefore, for $\epsilon\in [0,\min(\epsilon_1,\epsilon_2)]$, $\sigma+\epsilon F_{\perp}$ is a density matrix with a positive KD distribution so that $0 < x_{+}(\sigma) $.

By changing $\epsilon$ to $-\epsilon$ in the previous lines, it follows that $x_{-}(\sigma)<0$. 
\item If $\sigma \notin \convAB$, then by Lemma~\ref{lem:G1}, $\sigma \notin \EcalKDC$ and thus, either $0 < x_{-}(\sigma)$ or $0 > x_{+}(\sigma)$.

\item Suppose $\sigma=\cket{a_1}\bra{a_1}$, then $\mathrm{Tr}((\sigma + xF_{\perp})^2) = 1+x^2$. For $x \neq 0$, $\mathrm{Tr}((\sigma + xF_{\perp})^2) > 1$ implying that $\sigma + xF_{\perp}$ is not a state. Hence, $\sigma + x F_{\perp}\notin \EcalKDC$ for all $x\neq 0$. Thus, $x_{+}(\sigma)=0=x_{-}(\sigma)$.

\item As $\EcalKDC$ is a compact set, $\mathrm{T}_{\perp}$ is bounded and reaches its bounds on $\EcalKDC$. Especially, it reaches its maximum $x_{\max}$ which is strictly positive. Note that  
\begin{equation}\label{eq:Ydef}
\mcl{Y}_{+} = \left\{\rho\in \SAOpone \mid \mathrm{T}_{\perp}(\rho)=\mathrm{Tr}(F_{\perp}\rho) = x_{\max}\right\}\cap \EcalKDC.
\end{equation}
Thus, $\mcl{Y}_{+}$ is compact, convex and not empty so it has an extreme point. Let $\rho_e$ be such an extreme point. We show, by contradiction, that $\rho_e$ is also an extreme point of $\EcalKDC$. Suppose that $\rho_e$ is not, and write $\rho_e = \lambda\rho_1 +(1-\lambda)\rho_2$ with $\rho_1,\rho_2\in\EcalKDC$ and $\lambda\in(0,1)$. So, $\mathrm{T}_{\perp}(\rho_1) \leqslant x_{\max}$ and $\mathrm{T}_{\perp}(\rho_2) \leqslant x_{\max}$. Now, suppose $\mathrm{T}_{\perp}(\rho_1) < x_{\max}$. Then, $
\mathrm{T}_{\perp}(\rho_e) < x_{\max} $,
which is a contradiction. Thus,  $\mathrm{T}_{\perp}(\rho_1)=\mathrm{T}_{\perp}(\rho_2)=x_{\max}$, which show that $\rho_1,\rho_2\in\mcl{Y}_{+}.$ As $\rho_e$ is an extreme point of $\mcl{Y}_{+}$, $\rho_1=\rho_2=\rho_e$ and so $\rho_e$ is an extreme point of $\EcalKDC$.

\item The proof is analogous to the one of (iv).

\item We show that $x_{+}$ is concave on its domain of definition. As $\mcl{D}$ is the projection of a compact convex set, it is a compact convex set. Take $\sigma,\sigma'\in \mcl{D}$ and $\lambda\in[0,1]$. We will show that $x_{+}(\lambda\sigma + (1-\lambda)\sigma') \geqslant \lambda x_{+}(\sigma) + (1-\lambda)x_{+}(\sigma') $. We have that
\[
\lambda \sigma + (1-\lambda) \sigma' + \left[\lambda x_{+}(\sigma) + (1-\lambda)x_{+}(\sigma')\right]F_{\perp} = \lambda ( \sigma + x_{+}(\sigma)F_{\perp}) +  (1-\lambda) ( \sigma' + x_{+}(\sigma')F_{\perp}).
\]
As a convex combination of KD-positive states, it is a KD-positive state such that $\lambda x_{+}(\sigma) + (1-\lambda)x_{+}(\sigma')\in[x_{-}(\lambda \sigma + (1-\lambda) \sigma'),x_{+}(\lambda \sigma + (1-\lambda) \sigma')]$. Therefore, $\lambda x_{+}(\sigma) + (1-\lambda)x_{+}(\sigma')\leqslant x_{+}(\lambda\sigma + (1-\lambda)\sigma')$. Thus, $x_{+}$ is a concave function on $\mcl{D}$. It is then continuous on $\mathrm{Int}(\mcl{D})$~\cite{hiriart-urrutylemarechal2001}. The same argument shows that $x_{-}$ is convex and thus also continuous on $\mathrm{Int}{\mcl{D}}$.

\end{enumerate}
\end{proof}

\subsection{Identifying $\EcalKDCext$ : an example of mixed extreme states.}\label{subs:KDCextmixed}

We now identify some extreme mixed states of $\EcalKDC$ for the unitary matrix 
\begin{equation}\label{eq:Uspecial}
U=\frac{1}{3}\begin{pmatrix}
-1 & 2 & 2 \\
2 & -1 & 2 \\
2 & 2 & -1
\end{pmatrix},
\end{equation}
introduced in Section~\ref{s:mixedextremed=3},  using Proposition~\ref{prop:EKD2}~(iv).
Note that we identified all pure KD-positive states for $U$ in Section~\ref{s:mixedextremed=3} and that they all lie below $\spanRAB$: if $\rho\in \EcalKDCpu$, then $\Tr(\rho F_\perp)\leq0$.  So any $\rho\in\EcalKDCext$ for which $\Tr(\rho F_\perp)>0$ is a mixed extreme KD-positive state. We explicitly find some of those states as follows. 
 We first determine in Lemma~\ref{Lem:EKDCmax} the maximum $x_{\max}$ of the function $x_{+}$, which is strictly positive. This allows us to give a precise description of  the set $\mathcal Y_+$ in Proposition~\ref{prop:EKD2}, and in particular of its extreme points.

%\red{Do we know xmin as well? How hard is it to find it? Is it smaller than $-\frac1{\sqrt6}$? Also, can we show EcalKDC is not a polytope? May be worth mentioning if we do know it. }

\begin{Lemma}\label{Lem:EKDCmax}
Let $U$ be as in Eq.~\eqref{eq:Uspecial}. Then, the maximum $x_{\max}$ of $x_+$ on $\mcl{D}$ is 
\[
x_{\max}=\max_{\sigma\in\mathcal D} x_+(\sigma) = \frac{1}{2\sqrt{6}}.
\]
This value is reached for $\sigma=\frac{1}{3}\bbone_{3}\in\convAB$.
\end{Lemma}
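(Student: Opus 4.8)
The plan is to reduce the claim to maximizing a single linear functional over $\EcalKDC$, and then to evaluate that functional explicitly through the overlap identity. Recall from Proposition~\ref{prop:EKD2}~(iv) (and its proof) that $x_{\max}$ is exactly the maximum of $\rho\mapsto\Tr(\rho F_\perp)$ over the compact convex set $\EcalKDC$: indeed, writing a KD-positive state as $\rho=\sigma+xF_\perp$ with $\sigma\in\spanRAB$ and using $\spanRAB\perp F_\perp$ together with $\Tr F_\perp^2=1$ gives $\Tr(\rho F_\perp)=x$. So it suffices to show that $\Tr(\rho F_\perp)\le \frac{1}{2\sqrt6}$ for every $\rho\in\EcalKDC$, with equality realized by an explicit state.

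First I would apply the overlap identity $\Tr(FG)=\sum_{(i,j)}\frac{1}{|\langle a_i|b_j\rangle|^2}Q_{ij}(F)\overline{Q_{ij}(G)}$ established in Section~\ref{s:gen_res}, taking $F=\rho$ and $G=F_\perp$. Since both $\rho$ and $F_\perp$ lie in $\VR$, all entries $Q_{ij}(\rho)$ and $Q_{ij}(F_\perp)$ are real, so $\Tr(\rho F_\perp)=\sum_{i,j}|U^\star_{ij}|^{-2}Q_{ij}(\rho)\,Q_{ij}(F_\perp)$. For $U^\star$ the moduli are constant on and off the diagonal, $|U^\star_{ii}|^{2}=\tfrac19$ and $|U^\star_{ij}|^{2}=\tfrac49$ for $i\ne j$, and $Q(F_\perp)$ has already been computed in Section~\ref{s:mixedextremed=3} to equal $\frac{\sqrt6}{27}\bigl(\text{$-2$ on the diagonal, $+1$ off}\bigr)$. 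Substituting these constants, the diagonal terms contribute $-\frac{2\sqrt6}{3}\sum_i Q_{ii}(\rho)$ and the off-diagonal terms contribute $\frac{\sqrt6}{12}\sum_{i\ne j}Q_{ij}(\rho)$.

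Next I would invoke the normalization $\sum_{i,j}Q_{ij}(\rho)=\Tr\rho=1$ to eliminate the off-diagonal sum. Setting $D:=\sum_i Q_{ii}(\rho)$, one has $\sum_{i\ne j}Q_{ij}(\rho)=1-D$, and a one-line arithmetic simplification gives $\Tr(\rho F_\perp)=\frac{\sqrt6}{12}-\frac{3\sqrt6}{4}\,D$. Because $\rho$ is KD positive, each diagonal entry $Q_{ii}(\rho)$ is nonnegative, hence $D\ge0$; since the coefficient $\frac{3\sqrt6}{4}$ is positive, this forces $\Tr(\rho F_\perp)\le \frac{\sqrt6}{12}=\frac{1}{2\sqrt6}$, with equality precisely when $D=0$.

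Finally I would exhibit the maximizer. The state $\rho=\frac13\bbone_3+\frac{1}{2\sqrt6}F_\perp$ was already shown in Section~\ref{s:mixedextremed=3} to belong to $\EcalKDC$, and its diagonal KD entries are $\frac{1}{27}\bigl(1-2\cdot\frac{1}{2\sqrt6}\cdot\sqrt6\bigr)=0$, so $D=0$ and $\Tr(\rho F_\perp)=\frac{1}{2\sqrt6}$; moreover its projection onto $\spanRAB$ is $\frac13\bbone_3\in\convAB$. This proves $x_{\max}=\frac{1}{2\sqrt6}$ and identifies the point of $\mathcal D$ at which it is attained. I do not expect a genuine obstacle here: the entire argument hinges only on recognizing that the overlap identity linearizes $\Tr(\rho F_\perp)$ into the KD entries, after which the trace constraint collapses the bound to the single nonnegative quantity $D$. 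The only step requiring care is the bookkeeping of the constants $|U^\star_{ij}|^{-2}$ and $Q_{ij}(F_\perp)$, so that the diagonal and off-diagonal coefficients combine correctly into $\frac{3\sqrt6}{4}$.
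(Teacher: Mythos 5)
Your proof is correct, but it follows a genuinely different route from the paper's. The paper parametrizes $\sigma=\sum_i\lambda_i|a_i\rangle\langle a_i|+\mu_i|b_i\rangle\langle b_i|$, computes $Q(\sigma+xF_\perp)$ entrywise, reads off from the diagonal entries that $x_+(\sigma)\leq \frac{3}{2\sqrt6}\min_i(\lambda_i+\mu_i)$, and then argues that $\min_i(\lambda_i+\mu_i)\leq\frac13$ for every admissible decomposition (a step that must be handled with some care because the decomposition of $\sigma$ over $\Acal\cup\Bcal$ is not unique). You instead bypass the parametrization entirely: you recognize $x_{\max}$ as the maximum of the linear functional $\rho\mapsto\Tr(\rho F_\perp)$ over $\EcalKDC$, expand it via the overlap identity $\Tr(FG)=\sum_{i,j}|\langle a_i|b_j\rangle|^{-2}Q_{ij}(F)\overline{Q_{ij}(G)}$, and use the trace constraint to collapse the expression to $\frac{\sqrt6}{12}-\frac{3\sqrt6}{4}\sum_iQ_{ii}(\rho)$; I have checked the constants ($|U^\star_{ii}|^2=\frac19$, $|U^\star_{ij}|^2=\frac49$, $Q(F_\perp)=\frac{\sqrt6}{27}(-2\ \text{on},\ +1\ \text{off the diagonal})$) and the arithmetic is right. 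Both proofs ultimately locate the obstruction in the diagonal KD entries, but your version is cleaner and yields a sharper byproduct the paper does not state: equality holds \emph{exactly} when all diagonal entries $Q_{ii}(\rho)$ vanish, which characterizes the face $\mathcal Y_+$ intrinsically and would slightly streamline the subsequent Proposition~\ref{prop:Yplus}. The paper's computation, on the other hand, produces the full matrix $Q(\sigma+xF_\perp)$, which it reuses later; your argument would need that computation anyway for the follow-up, so the two approaches are complementary rather than one strictly dominating.
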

\begin{proof}
We set $\sigma\in \spanRAB$ with $\Tr(\sigma)=1$ so that
\[
\sigma = \sum_{i=1}^{3} \lambda_{i}\cket{a_i}\bra{a_i} + \mu_{i}\cket{b_i}\bra{b_i} \ \mathrm{and} \ \sum_{i=1}^{3}  \lambda_i + \mu_i = 1. 
\]
For all $x\in\R$, we  compute
\[
Q(\sigma+x F_{\perp}) = \frac{1}{27}
\begin{pmatrix}
3(\mu_1+\lambda_1)-2x\sqrt{6} & 12(\mu_2+\lambda_1)+x\sqrt{6} & 12(\mu_3+\lambda_1)+x\sqrt{6} \\
12(\mu_1+\lambda_2)+x\sqrt{6} & 3(\mu_2+\lambda_2)-2x\sqrt{6}& 12(\mu_3+\lambda_2)+x\sqrt{6}\\
12(\mu_1+\lambda_3)+x\sqrt{6} & 12(\mu_2+\lambda_3)+x\sqrt{6} & 3(\mu_3+\lambda_3)-2x\sqrt{6}
\end{pmatrix}.
\]
Thus, if 
\[
3(\mu_1+\lambda_1)-2x\sqrt{6} < 0 \ \mathrm{or} \ 3(\mu_2+\lambda_2)-2x\sqrt{6} < 0 \ \mathrm{or} \ 3(\mu_3+\lambda_3)-2x\sqrt{6} < 0, 
\]
or equivalently, if 
\[
x > \frac{3}{2\sqrt{6}}\min_{i\in\IntEnt{1}{3}} (\mu_i+\lambda_i),
\]
then $\sigma+x F_{\perp}$ is not KD-positive. Hence, since $\sigma+x_{+}(\sigma) F_{\perp}$ is KD-positive,  
\begin{equation}\label{eq:xpsigma}
x_{+}(\sigma) \leqslant \frac{3}{2\sqrt{6}}\min_{i\in\IntEnt{1}{3}} (\mu_i+\lambda_i).
\end{equation}
Thus, 
\[
x_{+}(\sigma) \leqslant \frac{3}{2\sqrt{6}}\inf_{\{\lambda_i,\mu_i\}}\min_{i\in\IntEnt{1}{3}} (\mu_i+\lambda_i),
\]
where the infimum is taken over all the $(\lambda_i,\mu_i)_{i\in\IntEnt{1}{3}}$ such that 
$\sigma = \sum_{i=1}^{3} \lambda_i\cket{a_i}\bra{a_i} + \mu_i\cket{b_i}\bra{b_i}.$
Moreover, $\min_{i\in\IntEnt{1}{3}} (\mu_i+\lambda_i) \leqslant \frac{1}{3}$ for any $\sigma = \sum_{i=1}^{3} \lambda_i\cket{a_i}\bra{a_i} + \mu_i\cket{b_i}\bra{b_i} \in \spanRAB$ with $\Tr(\sigma)=1$. Indeed, suppose there exists a $\sigma$ such that $\min_{i\in\IntEnt{1}{3}} (\mu_i+\lambda_i) > \frac{1}{3}$, then $\mathrm{Tr}(\sigma) \geqslant 3 \min_{i\in\IntEnt{1}{3}} (\mu_i+\lambda_i) > 1$, which is a contradiction. Thus,
\[
\inf_{\{\lambda_i,\mu_i\}}\min_{i\in\IntEnt{1}{3}} (\mu_i+\lambda_i)
\leqslant \frac{1}{3}.
\]
Therefore,
\[
x_{+}(\sigma) \leqslant \frac{1}{2\sqrt{6}}.
\]
Consequently, as the bound does not depend on $\sigma$, we obtain that
\[
x_{\max} \leqslant\frac{1}{2\sqrt{6}}.
\]
It is proven, in Section~\ref{s:mixedextremed=3}, that $x_{+}(\frac{1}{3}\bbone_3)= \frac{1}{2\sqrt{6}}$. Consequently, $x_{\max}= \frac{1}{2\sqrt{6}}$.

\end{proof}

\begin{Prop}\label{prop:Yplus}
The set $\mathcal Y_+=\EcalKDC \cap \{\rho\in \SAOpone \mid \mathrm{Tr}(\rho F_{\perp}) = \frac{1}{2\sqrt{6}}\}$ is of the form 
\[
\left\{\frac{1}{3}\bbone_{d} +\frac{1}{2\sqrt{6}}F_{\perp} + \lambda_1(\cket{a_1}\bra{a_1} - \cket{b_1}\bra{b_1}) + \lambda_2(\cket{a_2}\bra{a_2} - \cket{b_2}\bra{b_2}) \right\}, 
\]
with $\left|\lambda_2 - \lambda_1 \right| \leqslant \frac{3}{8}$, $\left|\lambda_1\right| \leqslant \frac{3}{8}$ and $\left|\lambda_2\right| \leqslant \frac{3}{8}$.
Its extreme points are obtained for the following  values of $(\lambda_1, \lambda_2)$:
\begin{equation}\label{eq:extremepoints}
 \left\{\left(0,\frac{3}{8}\right),\left(0,-\frac{3}{8}\right), \left(\frac{3}{8},0\right), \left(-\frac{3}{8},0\right), \left(\frac{3}{8},\frac{3}{8}\right), \left(-\frac{3}{8},-\frac{3}{8}\right)\right\}.
\end{equation}
\end{Prop}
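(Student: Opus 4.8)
The plan is to realize $\mathcal Y_+$ as the image of a planar hexagon under an injective affine map and then read off its extreme points as the images of the hexagon's vertices. Writing $D_i:=\cket{a_i}\bra{a_i}-\cket{b_i}\bra{b_i}$, I will work with the affine map
\[
T(\lambda_1,\lambda_2)=\tfrac13\bbone_3+\tfrac{1}{2\sqrt6}F_{\perp}+\lambda_1 D_1+\lambda_2 D_2,
\]
and with the hexagon $H=\{(\lambda_1,\lambda_2)\in\R^2\mid |\lambda_1|\leq\tfrac38,\ |\lambda_2|\leq\tfrac38,\ |\lambda_1-\lambda_2|\leq\tfrac38\}$, whose six vertices are exactly the pairs in Eq.~\eqref{eq:extremepoints}. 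The goal is to prove $\mathcal Y_+=T(H)$ and that $T$ carries the vertices of $H$ onto the extreme points of $\mathcal Y_+$.

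First I reduce an arbitrary $\rho\in\mathcal Y_+$ to this two-parameter form. Since $\EcalKDC\subseteq\VR$, I write $\rho=\sigma+xF_\perp$ with $\sigma\in\spanRAB$; because $F_\perp\perp\spanRAB$, $\Tr F_\perp^2=1$ and $\Tr\rho=1$, the defining condition $\Tr(\rho F_\perp)=x_{\max}=\tfrac1{2\sqrt6}$ (Lemma~\ref{Lem:EKDCmax}) forces $x=x_{\max}$ and $\Tr\sigma=1$. Writing $\sigma=\sum_i\lambda_i\cket{a_i}\bra{a_i}+\mu_i\cket{b_i}\bra{b_i}$, membership $\sigma+x_{\max}F_\perp\in\EcalKDC$ gives $x_{\max}\leq x_+(\sigma)$ by Lemma~\ref{lem:EKDCorthV2}; combined with the bound $x_+(\sigma)\leq\frac{3}{2\sqrt6}\min_i(\lambda_i+\mu_i)$ from Eq.~\eqref{eq:xpsigma} and $\sum_i(\lambda_i+\mu_i)=1$, this forces $\lambda_i+\mu_i=\tfrac13$ for every $i$. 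Substituting $\mu_i=\tfrac13-\lambda_i$ and using $\sum_i D_i=0$ to eliminate the $i=3$ term, I obtain $\rho=T(\lambda_1,\lambda_2)$ for suitable $\lambda_1,\lambda_2\in\R$, so $\mathcal Y_+\subseteq\Ran{T}$. Here the invariance of $\lambda_i+\mu_i=Q_{ii}(\sigma)/|\langle a_i|b_i\rangle|^2$ under the one-dimensional ambiguity in the representation of $\sigma$ makes these constraints well defined.

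Next I carve out $H$ using KD-positivity. Substituting $\sigma$ and $x=x_{\max}$ into the formula for $Q(\sigma+xF_\perp)$ displayed in the proof of Lemma~\ref{Lem:EKDCmax}, and using $\lambda_i+\mu_i=\tfrac13$ together with $x\sqrt6=\tfrac12$, every diagonal entry of $Q(\rho)$ vanishes, while each off-diagonal entry equals $\tfrac1{27}\bigl(\tfrac92+12\,t\bigr)$ for some $t\in\{\pm\lambda_1,\ \pm\lambda_2,\ \pm(\lambda_1-\lambda_2)\}$. Hence $Q_{ij}(\rho)\geq0$ is equivalent to $t\geq-\tfrac38$ for each such $t$, that is, to $|\lambda_1|\leq\tfrac38$, $|\lambda_2|\leq\tfrac38$ and $|\lambda_1-\lambda_2|\leq\tfrac38$. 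Thus $\rho$ is KD-positive iff $(\lambda_1,\lambda_2)\in H$, giving $\mathcal Y_+\subseteq T(H)$.

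The remaining, and main, point is to check that $T(\lambda_1,\lambda_2)$ is positive semidefinite for every $(\lambda_1,\lambda_2)\in H$, which upgrades the inclusion to the equality $\mathcal Y_+=T(H)$; I expect this positivity verification to be the only genuinely computational obstacle, the rest being structural. Since positive semidefiniteness defines a convex cone and $T$ is affine, it suffices to verify it at the six vertices of $H$: then $T(H)=\conv{T(\text{vertices})}\subseteq\SAOp$. Each vertex yields a rank-two $3\times3$ matrix whose positivity I would confirm via principal minors in the $\Acal$-basis, using $(\cket{b_j}\bra{b_j})_{ik}=U^\star_{ij}U^\star_{kj}$; the $S_3$-permutation symmetry of $U^\star=\tfrac23 J-\bbone_3$ and the exchange symmetry $\Acal\leftrightarrow\Bcal$, which acts as $(\lambda_1,\lambda_2)\mapsto(-\lambda_1,-\lambda_2)$, reduce this to essentially two computations. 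Finally, $D_1$ and $D_2$ are linearly independent (since $\cket{a_i}\neq\cket{b_i}$ up to phase), so $T$ is an injective affine map; such a map sends the extreme points of $H$ bijectively onto those of $T(H)=\mathcal Y_+$. As the extreme points of $H$ are precisely the six pairs in Eq.~\eqref{eq:extremepoints}, their images are exactly the extreme points of $\mathcal Y_+$, as claimed.
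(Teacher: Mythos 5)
Your proposal is correct, and for the first two steps --- reducing an arbitrary $\rho\in\mathcal Y_+$ to the two-parameter form by forcing $\lambda_i+\mu_i=\tfrac13$, and translating KD positivity into the hexagon constraints $|\lambda_1|,|\lambda_2|,|\lambda_1-\lambda_2|\leq\tfrac38$ --- it coincides with the paper's proof of Proposition~\ref{prop:Yplus}; your observation that $\lambda_i+\mu_i=Q_{ii}(\sigma)/|\langle a_i|b_i\rangle|^2$ is invariant under the one-parameter ambiguity in the decomposition of $\sigma$ is a cleaner way of handling a point the paper treats by contradiction. Where you genuinely diverge is the positive-semidefiniteness constraint. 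The paper computes the spectrum of $\sigma+\tfrac1{2\sqrt6}F_\perp$ in closed form for arbitrary $(\lambda_1,\lambda_2)$, obtains the ellipse condition $\lambda_1^2+\lambda_2^2+(\lambda_1-\lambda_2)^2\leq 2(3/8)^2$, and checks that the hexagon $\mathcal C$ sits inside the ellipse region $\mathcal C'$, with the six vertices on its boundary. You instead use convexity of the cone $\SAOp$ together with affinity of $T$ to reduce the positivity check to the six vertices, and then identify the extreme points via the standard fact that an injective affine map carries extreme points of $H$ bijectively onto extreme points of $T(H)$ --- a step the paper leaves implicit. Your route avoids the general eigenvalue computation but yields less information: the paper's ellipse shows exactly where the operator-positivity constraint becomes active (tangent to the hexagon at its vertices), which is what Fig.~\ref{fig:extremepoints} depicts. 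The only loose end on your side is that the verification that the six vertex operators are positive semidefinite is deferred (``via principal minors''); it does work --- for instance at $(\lambda_1,\lambda_2)=(\tfrac38,0)$ the spectrum is $\{0,\tfrac14,\tfrac34\}$, so each vertex is a rank-two positive operator --- but as written it is a promissory note, and you should carry out at least the two computations to which your symmetry argument reduces it.
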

\begin{proof}
Let $\rho\in\mcl{Y}_{+}$. Then there exists $\sigma\in\mathcal D$ so that $\rho=\sigma+\frac{1}{2\sqrt6}F_\perp$. In addition, $x_+(\sigma)=\frac1{2\sqrt6}$. Then,  Eq.~\eqref{eq:xpsigma} implies that there exist $\lambda_i,\mu_i\in\R$ so that $$\sigma=\sum_i(\lambda_i|a_i\rangle\langle a_i|+ \mu_i|b_i\rangle\langle b_i|)
$$
and so that 
$\min_{i\in\IntEnt{1}{3}} \mu_i+\lambda_i = \frac{1}{3}$. Indeed, if such a decomposition does not exist, as $\sigma\in \spanRAB$, all decompositions 
\[
\sigma=\sum_i(\alpha_i|a_i\rangle\langle a_i|+ \beta_i|b_i\rangle\langle b_i|)
\] 
satisfy $\min_{i\in\IntEnt{1}{3}} (\alpha_i+\beta_i) < \frac{1}{3}$. Thus, as shown in Eq~\eqref{eq:xpsigma},
\[
x_{+}(\sigma) \leqslant \frac{3}{2\sqrt{6}}\min_{i\in\IntEnt{1}{3}} (\alpha_i+\beta_i) < \frac{1}{2\sqrt{6}},
\]
which is a contradiction.

 Since $\sum_i(\lambda_i+\mu_i)=1$, this implies $\lambda_i+\mu_i=\frac13$ for $i=1,2,3$. It follows that there exist $(\lambda_i)_{i\in\IntEnt{1}{3}}$ so that
\[
\sigma= \frac{1}{3}\bbone_{3} +\lambda_1(\cket{a_1}\bra{a_1} - \cket{b_1}\bra{b_1}) + \lambda_2(\cket{a_2}\bra{a_2} - \cket{b_2}\bra{b_2}) + \lambda_3(\cket{a_3}\bra{a_3} - \cket{b_3}\bra{b_3}).
\]

As $\sum_{i=1}^{3} \cket{a_i}\bra{a_i}= \sum_{i=1}^{3} \cket{b_i}\bra{b_i}$, we can simplify the expression to obtain
\begin{equation}\label{eq:sigmaform}
\sigma= \frac{1}{3}\bbone_{3} + \lambda_1(\cket{a_1}\bra{a_1} - \cket{b_1}\bra{b_1}) + \lambda_2(\cket{a_2}\bra{a_2} - \cket{b_2}\bra{b_2}).
\end{equation}
The KD distribution of $\rho=\sigma+\frac{1}{2\sqrt{6}}F_{\perp}$ is
\[
Q(\sigma+\frac{1}{2\sqrt{6}}F_{\perp}) = 
\frac{1}{27}\begin{pmatrix}
0& 4 +12(\lambda_1-\lambda_2)+\frac{1}{2} & 4 +12\lambda_1+\frac{1}{2} \\
 4 - 12(\lambda_1-\lambda_2)+\frac{1}{2} & 0& 4+12\lambda_2+\frac{1}{2}\\
4 - 12\lambda_1+\frac{1}{2} & 4-12\lambda_2+\frac{1}{2} & 0
\end{pmatrix}.
\]

Since $\rho\in\mathcal Y_{+}$, it is KD positive, which is equivalent to  
\[
-4+12|\lambda_1|\leqslant \frac{1}{2}, \ -4+12|\lambda_2| \leqslant \frac{1}{2}, \ \mathrm{ and } \ -4+12|\lambda_1 -\lambda_2| \leqslant \frac{1}{2},
\]
or
\begin{equation}\label{eq:condition1}
|\lambda_1| \leqslant \frac{3}{8}, |\lambda_2| \leqslant \frac{3}{8}, \ \mathrm{ and } \ |\lambda_1 -\lambda_2|\leqslant \frac{3}{8}.
\end{equation}
The eigenvalues of $\rho=\sigma+\frac{1}{2\sqrt{6}}F_{\perp}$ are :
\[
\left\{\frac{1}{4}, \frac{3}{8}+\frac{1}{24}\sqrt{9+8^3(\lambda_1^2+\lambda_2^2-\lambda_1\lambda_2)}, \frac{3}{8}-\frac{1}{24}\sqrt{9+8^3(\lambda_1^2+\lambda_2^2-\lambda_1\lambda_2})\right\}.
\]
Consequently, $\rho=\sigma+\frac{1}{2\sqrt{6}}F_{\perp}$ is a positive operator if and only if 
\[
\frac{3}{8}-\frac{1}{24}\sqrt{9+8^3(\lambda_1^2+\lambda_2^2-\lambda_1\lambda_2)} \geqslant 0.
\]
This equation is equivalent to
\begin{equation}\label{eq:condition2}
\lambda_1^2+\lambda_2^2+(\lambda_1-\lambda_2)^2 \leqslant 2\left(\frac{3}{8}\right)^2.
\end{equation}
We have therefore established  that $\rho\in\mcl{Y}_{+}$ if and only if it can be written as $\rho=\sigma+\frac{1}{2\sqrt6}F_\perp$ with $\sigma$ as in Eq.~\eqref{eq:sigmaform} and with $\lambda_1, \lambda_2$ satisfying Eq.~\eqref{eq:condition1}-\eqref{eq:condition2}. 
The set $\mcl{C}' = \{(\lambda_1,\lambda_2)\in\R^2 \mid \lambda_1^2+\lambda_2^2+(\lambda_1-\lambda_2)^2 \leqslant 2\left(\frac{3}{8}\right)^2 \}$ is bounded by an ellipse and is as such convex.  The set $\mcl{C}=\{(\lambda_1,\lambda_2)\in\R^2 \mid |\lambda_1| \leqslant \frac{3}{8}, |\lambda_2| \leqslant \frac{3}{8} \ \mathrm{ and } \ |\lambda_1 -\lambda_2|\leqslant \frac{3}{8} \}$ is a convex hexagon. Its extreme points are identified to be those given in~Eq.~\eqref{eq:extremepoints}.

Noting that these extreme points lie on the ellipse bounding $\mathcal C'$, we conclude that in fact $\mcl{C}\subsetneq\mcl{C'}$; see Fig.~\ref{fig:extremepoints}. This proves our Proposition.

\end{proof}

\begin{figure}
    \centering
    \includegraphics[scale=0.5]{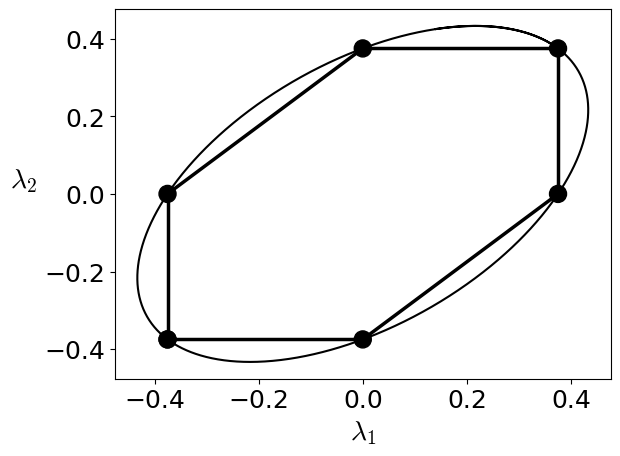}
    \caption{The convex regions $\mcl{C}$ and $\mcl{C}'$ in the $(\lambda_1, \lambda_2)$-plane defined by Eq.~\eqref{eq:condition1}-\eqref{eq:condition2}. Dots represent the extreme points of $\mcl{Y}_{+}$}
    \label{fig:extremepoints}
\end{figure}

\section{Pure KD-positive states for MUBs}\label{s:MUBpure}
The pure KD-positive states of MUB bases can be characterized as follows.

\begin{Theorem}\label{Thm:MUB1}
Suppose  $\Acal$ and $\Bcal$ are MUB bases. Then:\\
(i)  A pure state $\cket{\psi}$ is KD positive iff $\na(\psi)\nb(\psi)=d$;\\
(ii) If $d$ is a prime number, then the only pure KD-positive states are the basis states.\\
\end{Theorem}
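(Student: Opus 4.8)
The plan is to prove part (i) directly and then deduce part (ii) from it by an elementary arithmetic observation. Throughout I write $u_i=\langle a_i|\psi\rangle$ and $v_j=\langle b_j|\psi\rangle$, so that $S_{\Acal}(\psi)=\{i: u_i\neq0\}$ and $S_{\Bcal}(\psi)=\{j: v_j\neq0\}$, and
\[
Q_{ij}(\psi)=\overline{U_{ij}}\,u_i\,\overline{v_j}.
\]
Since $\Acal,\Bcal$ are MUB we have $|U_{ij}|=1/\sqrt d$ for all $i,j$, so $U_{ij}\neq0$ everywhere. Hence $Q_{ij}(\psi)=0$ precisely when $i\notin S_{\Acal}(\psi)$ or $j\notin S_{\Bcal}(\psi)$, and on the product $S_{\Acal}(\psi)\times S_{\Bcal}(\psi)$ every factor is nonzero.

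For the forward implication of (i), I would suppose $|\psi\rangle$ is KD positive. On $S_{\Acal}(\psi)\times S_{\Bcal}(\psi)$ each $Q_{ij}(\psi)$ is then a strictly positive real, necessarily equal to its modulus $|U_{ij}|\,|u_i|\,|v_j|=\tfrac1{\sqrt d}|u_i||v_j|$. Feeding this into the marginal identities of Eq.~\eqref{eq:Qbasics}, $\sum_j Q_{ij}(\psi)=|u_i|^2$ and $\sum_i Q_{ij}(\psi)=|v_j|^2$, gives for $i\in S_{\Acal}(\psi)$ and $j\in S_{\Bcal}(\psi)$
\[
|u_i|=\frac1{\sqrt d}\sum_{j\in S_{\Bcal}(\psi)}|v_j|,\qquad |v_j|=\frac1{\sqrt d}\sum_{i\in S_{\Acal}(\psi)}|u_i|.
\]
The right-hand sides are independent of $i$ and of $j$, so $|u_i|$ is constant on $S_{\Acal}(\psi)$ and $|v_j|$ is constant on $S_{\Bcal}(\psi)$; substituting one relation into the other yields $|u_i|=\tfrac{\na(\psi)\nb(\psi)}{d}\,|u_i|$, and since $|u_i|\neq0$ this forces $\na(\psi)\nb(\psi)=d$.

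For the converse I would invoke the support uncertainty relation for MUB. Writing $u_i=\sum_{j\in S_{\Bcal}(\psi)}U_{ij}v_j$ and applying the triangle and Cauchy--Schwarz inequalities gives $|u_i|\le\tfrac1{\sqrt d}\sum_{j}|v_j|\le\tfrac1{\sqrt d}\sqrt{\nb(\psi)}$, whence $1=\sum_i|u_i|^2\le\tfrac{\na(\psi)\nb(\psi)}{d}$, i.e.\ $\na(\psi)\nb(\psi)\ge d$ for every unit vector. The hypothesis $\na(\psi)\nb(\psi)=d$ saturates this chain, and the crux is to read off the equality conditions: Cauchy--Schwarz equality forces $|v_j|$ constant on $S_{\Bcal}(\psi)$, while triangle-inequality equality forces, for each fixed $i$, the numbers $\{U_{ij}v_j\}_{j\in S_{\Bcal}(\psi)}$ to share a common argument, which must be $\arg(u_i)$. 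This yields $\arg U_{ij}=\arg u_i-\arg v_j$ on $S_{\Acal}(\psi)\times S_{\Bcal}(\psi)$, and therefore $Q_{ij}(\psi)=\overline{U_{ij}}\,u_i\,\overline{v_j}=|U_{ij}|\,|u_i|\,|v_j|\ge0$ there, while $Q_{ij}(\psi)=0$ elsewhere; thus $|\psi\rangle$ is KD positive. I expect this extraction of the phase-alignment condition from the saturated uncertainty relation to be the most delicate step, since it is exactly this equality analysis that ties $\na\nb=d$ to the positivity of every KD entry.

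Finally, (ii) follows at once from (i) when $d$ is prime: the factorization $\na(\psi)\nb(\psi)=d$ with $1\le\na(\psi),\nb(\psi)\le d$ forces $\{\na(\psi),\nb(\psi)\}=\{1,d\}$. If $\na(\psi)=1$, then $|\psi\rangle$ is supported on a single $|a_i\rangle$ and hence equals it up to a global phase; symmetrically, $\nb(\psi)=1$ makes $|\psi\rangle$ a $\Bcal$-basis state. Therefore the only pure KD-positive states are the basis states, which is the claim.
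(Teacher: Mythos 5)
Your proof is correct, and it is in one respect more complete than the paper's own argument. For the forward implication of (i) you follow essentially the same route as the paper: restrict to the product of the supports, use the marginal identities to show that $|\langle a_i|\psi\rangle|$ is constant on $S_{\Acal}(\psi)$ and $|\langle b_j|\psi\rangle|$ on $S_{\Bcal}(\psi)$, and substitute one relation into the other to force $\na(\psi)\nb(\psi)=d$; the only cosmetic difference is that the paper first rotates the phases of the basis vectors so that all amplitudes are positive reals, whereas you work directly with moduli via $\sum_j Q_{ij}(\psi)=|\langle a_i|\psi\rangle|^2$. Part (ii) is deduced identically in both treatments. The genuine difference is the converse of (i): the paper's printed proof establishes only the direction ``KD positive $\Rightarrow\na\nb=d$'' (deferring the rest to the cited reference), while you supply the missing implication by saturating the MUB support-uncertainty chain $1\le\na(\psi)\nb(\psi)/d$ and extracting the equality conditions --- Cauchy--Schwarz equality giving constancy of $|\langle b_j|\psi\rangle|$ on the support, and triangle-inequality equality giving the phase alignment $\arg\langle a_i|b_j\rangle=\arg\langle a_i|\psi\rangle-\arg\langle b_j|\psi\rangle$, which makes every $Q_{ij}(\psi)$ on the support product a nonnegative real. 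This equality analysis is sound (all summands $U_{ij}\langle b_j|\psi\rangle$ are nonzero on the support, so the common-argument criterion applies), and it buys you a self-contained proof of the full ``iff'' rather than only one direction.
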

This result is implicit in
~\cite{Xu22}. We provide a simple proof below. Note that this result implies that, when $d$ is a prime number, then the only pure KD-positive states of MUB bases are their basis states. This last result was proven for the DFT in~\cite{debievre2023a}, where the same result is also obtained for perturbations of MUB bases that are completely incompatible, a notion introduced in~\cite{debievre2021}. It is not known, to the best of our knowledge, if under the hypotheses of the theorem, there do also exist mixed KD-positive states.  
\begin{proof}
Suppose $\cket{\psi}$ is a KD-positive state. By permuting the order and changing the phases of basis states, we can suppose that 
\[
S_{\Acal}(\psi)= \IntEnt{1}{\na(\psi)},S_{\Bcal}(\psi)= \IntEnt{1}{\nb(\psi)}, \left(\bracket{a_i}{\psi}\right)_{i\in\IntEnt{1}{d}}\in\left(\R^{+}\right)^{d} \text{ and }  \left(\bracket{b_j}{\psi}\right)_{j\in\IntEnt{1}{d}}\in\left(\R^{+}\right)^{d}.
\]
Here $S_{\Acal}(\psi)=\{i\in\IntEnt{1}{d}, \bracket{a_i}{\psi}\neq 0\}$ and $\na(\psi)=\sharp S_{\Acal}(\psi)$. The same definitions hold for $\Bcal$. 
Hence, since $U$ is the transition matrix for MUB bases and since  $Q(\psi)\in \left(\R^{+}\right)^{d^2}$, one concludes that $\forall (i,j)\in\IntEnt{1}{\na(\psi)}\times\IntEnt{1}{\nb(\psi)}, \bracket{a_i}{b_j}=\frac{1}{\sqrt{d}}$. 
By construction, one has
\[
\forall i\in\IntEnt{1}{\na(\psi)}, \bracket{a_i}{\psi} = \sum_{j=1}^{\nb(\psi)} \bracket{a_i}{b_j}\bracket{b_j}{\psi} =  \frac{1}{\sqrt{d}}\sum_{i=1}^{\nb(\psi)}\bracket{b_j}{\psi}
\]
which is independent of $i$. Similarly,
\[
\forall j\in\IntEnt{1}{\nb(\psi)}, \bracket{b_j}{\psi} = \sum_{i=1}^{\na(\psi)} \bracket{b_j}{a_i}\bracket{a_i}{\psi} = \frac{1}{\sqrt{d}}\sum_{i=1}^{\na(\psi)}\bracket{a_i}{\psi}
\]
which is independent of $j$ so that
\[
\bracket{b_1}{\psi} =  \frac{1}{\sqrt{d}}\sum_{i=1}^{\na(\psi)}\bracket{a_i}{\psi} =  \frac{1}{d}\sum_{i=1}^{\na(\psi)}\sum_{i=1}^{\nb(\psi)}\bracket{b_j}{\psi} =\frac{\bracket{b_1}{\psi}}{d}\sum_{i=1}^{\na(\psi)}\sum_{i=1}^{\nb(\psi)}1 = \frac{\bracket{b_1}{\psi}}{d}\na(\psi)\nb(\psi).
\]
As $\bracket{b_1}{\psi}\neq 0$, one finds that
\[
\na(\psi)\nb(\psi) = d.
\]
In particular, when $d$ is prime, this implies that $\na(\psi) =1$ or $\nb(\psi) =1$. In either case, $\cket{\psi}$ is a basis state.
\end{proof}

\bibliographystyle{unsrt}
\bibliography{BIB_Article_Geometry_KD_states_2023_05_31}

\end{document}